\documentclass[final,1p,times]{elsarticle}

\usepackage{amssymb}
\usepackage{amsmath}
\usepackage{amsthm}

\newtheorem{prop}{Proposition}
\newtheorem{coro}{Corollary}
\newtheorem{definition}{Definition}%

\usepackage{graphicx}%
\usepackage{multirow}%
\usepackage{amsmath,amssymb,amsfonts}%
\usepackage{mathrsfs}%
\usepackage[title]{appendix}%
\usepackage{xcolor}%
\usepackage{textcomp}%
\usepackage{manyfoot}%
\usepackage{booktabs}%
\usepackage{algorithm}%
\usepackage{algorithmicx}%
\usepackage{algpseudocode}%
\usepackage{listings}%
\usepackage{mathtools}
\usepackage[english]{babel}
\usepackage{enumitem}
\usepackage{placeins}
\usepackage{subcaption}

\newcommand{\manif}{\mathcal{M}}
\newcommand{\xbf}{\mathbf{x}}
\newcommand{\Abf}{\mathbf{P}}

\newcommand{\ybf}{\mathbf{y}}

\newcommand{\sbf}{\mathbf{s}}
\newcommand{\cbf}{\mathbf{c}}
\newcommand{\Mbf}{\mathbf{M}}
\newcommand{\Fbf}{\mathbf{F}}
\newcommand{\Sbf}{\mathbf{S}}
\newcommand{\phibf}{\boldsymbol{\phi}_0}
\newcommand{\phitest}{\boldsymbol{\phi}_0^\text{test}}

\newcommand{\phibfn}{\phibf^{\text{obs}}}
\newcommand{\sigbf}{\boldsymbol{\Sigma}}
\newcommand{\zbf}{\mathbf{z}}
\newcommand{\Zbf}{\mathbf{Z}}
\newcommand{\qbf}{\mathbf{Q}}
\newcommand{\vbf}{\mathbf{v}}
\newcommand{\Vbf}{\mathbf{V}}
\newcommand{\Kbf}{\mathbf{K}}

\newcommand{\ombf}{\boldsymbol{\omega}}
\newcommand{\Ibf}{\mathbf{I}}

\newcommand{\ubf}{\mathbf{u}}
\newcommand{\Lbf}{\mathbf{L}}

\newcommand{\param}{\boldsymbol{\beta}}

\newcommand{\lm}{L^2(\manif)}
\newcommand{\hm}{H^2(\manif)}
\newcommand{\mug}{\mu_g}
\newcommand{\eqdef}{\vcentcolon=}

\newcommand{\kbf}{\mathbf{k}}
\newcommand{\tri}{\mathrm{T}}

\newcommand{\diag}{\text{Diag}}
\newcommand{\alp}{\alpha}
\newcommand{\Scal}{\mathcal{S}}
\newcommand{\Gbf}{\mathbf{G}}
\newcommand{\Rbf}{\mathbf{R}}
\newcommand{\hh}{\mathbf{h}}
\newcommand{\Dbf}{\mathbf{D}}

\newcommand{\llik}{\mathcal{L}}
\newcommand{\llikc}{\mathcal{L}^{(C)}}

\newcommand{\amnoise}{a_{\noise}^{(m)}}
\newcommand{\ambis}{\tilde{a}_{\noise}^{(m)}}

\newcommand{\noise}{\tau}

\newcommand{\Ebf}{\mathbf{E}}

\newcommand{\anoise}{a_\noise}\newcommand{\wbf}{\mathbf{w}}

\newcommand{\Pcal}{\mathcal{P}}
\newcommand{\esp}{\mathbb{E}}
\newcommand{\var}{\mathbb{V}}

\newcommand{\mbf}{\mathbf{m}}
\newcommand{\Ktau}{\Kbf_{1,\tau}}
\newcommand{\Kzero}{\Kbf_{1,0}}

\newcommand{\cov}{\mathrm{Cov}}
\newcommand{\Yobs}{\mathbf{Y}_{\text{obs}}}
\newcommand{\Kbis}{\Kbf^{(m)}_\noise}
\newcommand{\Ktest}{\Kbf^{\text{test}}_1}
\newcommand{\vuk}{V_{\text{UK}}}
\newcommand{\mpost}{\mbf_{\text{post}}}
\newcommand{\mpostalp}{\mbf_{\text{post}}^{(\alp)}}
\newcommand{\covpost}{\sigbf_{\text{post}}}
\newcommand{\Zres}{\Zbf_\text{res}}
\newcommand{\Zpost}{\Zbf_\text{post}}
\newcommand{\zpost}{\zbf_\text{post}}
\newcommand{\Zprior}{\Zbf_a}

\newcommand{\sigalp}{\tilde{\sigbf}_\alp}
\newcommand{\qalp}{\tilde{\qbf}_\alp}
\newcommand{\gaussbfm}{\boldsymbol{\varepsilon}_m}
\newcommand{\gaussbfn}{\boldsymbol{\varepsilon}_n}
\newcommand{\veczerm}{\mathbf{0}_m}
\newcommand{\vecunm}{\mathbf{1}_m}
\newcommand{\bbf}{\mathbf{b}}
\newcommand{\Ztest}{\Zbf_\text{test}}
\newcommand{\moy}{m_\alpha}
\newcommand{\covk}{k_\alpha}
\newcommand{\Zsk}{\Zbf_{\text{SK}}}
\newcommand{\zsk}{\zbf_{\text{SK}}}
\newcommand{\msk}{\mbf_{\text{SK}}}
\newcommand{\covsk}{\sigbf_{\text{SK}}}
\newcommand{\yhat}{\hat{y}}
\newcommand{\ytrue}{y_\text{true}}
\newcommand{\sigmahat}{\hat{\sigma}}
\newcommand{\Kalp}{\Kbf_{1,\alpha}}
\newcommand{\mtest}{\mbf_\text{lim}}
\newcommand{\ktest}{\Kbf_\text{lim}}

\journal{Nuclear Physics B}

\begin{document}

\begin{frontmatter}



\title{Uncertainty Quantification of Spline Predictors on Compact Riemannian Manifolds} 


\author{Charlie Sire}\corref{cor1}
\ead{charlie.sire@minesparis.psl.eu}
\author{Mike Pereira}

\cortext[cor1]{Corresponding author}
\affiliation{organization={Center for Statistics and Images, Mines Paris, PSL University},
            city={Fontainebleau},
            postcode={77300}, 
            country={France}}

\begin{abstract}
To predict smooth physical phenomena from observations, spline interpolation provides an interpretable framework by minimizing an energy functional associated with the Laplacian operator. This work proposes a methodology to construct a spline predictor on a compact Riemannian manifold, while quantifying the uncertainty inherent in the classical deterministic solution. Our approach leverages the equivalence between spline interpolation and universal kriging with a specific covariance kernel. By adopting a Gaussian random field framework, we generate stochastic simulations that reflect prediction uncertainty. However, on compact manifolds, the covariance kernel depends on the generally unknown spectrum of the Laplace-Beltrami operator. To address this, we introduce a finite element approximation based on a triangulation of the manifold. This leads to the use of intrinsic Gaussian Markov Random Fields (GMRF) and allows for the incorporation of anisotropies through local modifications of the Riemannian metric. The method is validated using a temperature study on a sphere, where the operator's spectrum is known, and is further extended to a test case on a cylindrical surface.
\end{abstract}

\begin{keyword}
Uncertainty Quantification, Spline interpolation, Riemannian manifold, Intrinsic Gaussian Markov Random Fields, Universal Kriging
\end{keyword}
\end{frontmatter}



\newpage
	\section{Introduction}
\label{intro}

Thin-plate spline interpolation \cite{wahba1990spline} is one of the most widely used methods for predicting spatially distributed data. It consists in constructing an interpolating function that exactly matches the observed values at the measurement locations while being as smooth as possible according to a precise mathematical criterion. This notion of optimality, defined through the minimization of a roughness functional, is particularly well suited to settings with few observations in which the underlying phenomenon is expected to vary smoothly over space. However, thin-plate splines yield a single deterministic prediction over the domain and do not directly provide a measure of uncertainty. In many applications, uncertainty quantification is crucial, for instance to assess risk or compute predictive probabilities, which motivates the need for a full predictive distribution rather than a point estimate \cite{santner2003design}. From this perspective, the spline predictor can be shown to be equivalent to the universal kriging predictor, as first established by Matheron \cite{matheron} and later exploited for data defined on $\mathbb{R}$ \cite{DUBRULE1984327} and on $\mathbb{R}^d$ \cite{myers1988interpolation, myers1992kriging}. 

In this framework, the observations are viewed as a realization of a random field with an unknown mean function and a prescribed covariance structure, that are related to the eigenvalues and eigenvectors of the Laplacian operator $-\Delta$. Universal kriging then relies on a statistical model to provide not only a prediction but also an associated predictive variance at every point in the domain. To fully characterize this predictive uncertainty, a Gaussian assumption is typically adopted for the prior random field, leading to Gaussian process regression (GPR) \cite{williams2006gaussian}, which enables computation of the posterior distribution of the random field, that is, the conditional distribution given the observed values at the measurement locations. While the predictive mean and variance obtained from kriging and GPR are identical, the latter framework additionally allows for the simulation of posterior realizations of the random field \cite{hewing2020simulation}. These simulations provide a richer representation of uncertainty and facilitate the visualization and quantification of plausible scenarios under the assumed statistical model. Thus, exploiting Gaussian process regression with the appropriate covariance kernel makes it possible to incorporate uncertainty into the spline prediction problem.

However, when the domain of interest is a non-planar surface, several practical challenges arise. Studies on such non-Euclidean domains appear in a variety of contexts, for example in fluid mechanics when analyzing flows around cylinders \cite{fournier2005controle, gao2018numerical}, or in neuroimaging when collecting data on the brain \cite{lila2016smooth}. In theory, the equivalence between spline interpolation and universal kriging prediction still holds on compact Riemannian manifolds, as formally established in \cite{kim2001splines} using the framework of Reproducing Kernel Hilbert Spaces (RKHS), building on the pioneering work of Wahba on the sphere \cite{wahba1981spline}. On manifolds, the mean function and covariance kernel underlying this equivalence are derived from the spectral decomposition of the Laplace-Beltrami operator, a natural generalization of the Euclidean Laplacian. In practice, however, the eigenvalues and eigenfunctions of the Laplace-Beltrami operator are generally unknown for surfaces that are neither planar nor spherical. This lack of spectral information prevents the exact construction of spline predictors and their associated covariance structures for data defined on such manifolds.\\~\

To overcome this limitation, and as proposed in \cite{sire2025spline}, one possible approach is to rely on a finite element approximation of random fields defined on compact manifolds, thereby working with a finite set of approximate eigenvalues and eigenfunctions. This type of approximation was first introduced by Lindgren \cite{Lindgren} and later extended by Pereira \cite{pereira:tel-02499376} to incorporate local anisotropies into the covariance structure. This framework can then be adapted to construct spline predictors with preferred directions of dependence, which can significantly improve prediction accuracy when the underlying phenomenon is anisotropic.

However, a key difficulty arising in the finite-element approximation of spline predictors is that it requires working with a Gaussian Markov random field (GMRF, see~\cite{rue2005gaussian}) whose covariance matrix is not invertible, referred to as an intrinsic GMRF. This type of model, investigated in particular in~\cite{bolin2025intrinsic}, complicates both the formulation of the classical kriging equations and the generation of simulations. As shown in~\cite{simpson2006sampling} and thoroughly studied in~\cite{bolin2021efficient}, such simulations can be obtained through linear conditioning of non-intrinsic GMRFs. This strategy enables an extension of the work of~\cite{sire2025spline}, which focused exclusively on computing an approximate solution to the spline interpolation problem, without addressing the associated uncertainty quantification of the resulting predictions.\\~\

Therefore, the contribution of this work is to propose a statistical model for the spline interpolation problem, with the aim of providing both spline-based predictions at every point of the investigated compact Riemannian manifold and a quantification of the associated uncertainty. A strategy for generating conditional simulations of the underlying random function is also proposed, enabling the visualization of predictive scenarios arising from spline interpolation problems.\\~\

The paper is organized as follows. Section~\ref{splines_manif} introduces the theoretical solution to the spline interpolation problem on compact Riemannian manifolds and establishes its equivalence with kriging. Section~\ref{sec_fe_approx} presents the finite element approximation of this solution. Section~\ref{sec_simus} then details the strategy for generating conditional simulations of the target intrinsic GMRF, while Section~\ref{sec_likeli} addresses the introduction of anisotropy and the estimation of the unknown parameters. Finally, the application test cases and corresponding results are presented in Section~\ref{sec_appli}, and Section~\ref{sec_conclu} summarizes the contributions and outlines potential directions for future work.



\section{Spline interpolation solution on compact manifolds}\label{splines_manif}

In this manuscript, the investigated domain is a smooth compact and connected manifold $\manif$ of dimension $d$, equipped with a Riemannian metric $g$, and $-\Delta$ denotes the Laplace-Beltrami operator on $(\manif, g)$. The integral of a function $f$ over $\manif$ is denoted by $\int_\manif f d\mug$, where $d\mug$ is the canonical measure associated to $(\manif, g)$ and $\partial \manif$ denotes its boundary. The spaces $\lm$ and $\hm$ denote the set of square-integrable functions and the Sobolev space of order $2$, respectively. The objective of this article is to obtain a prediction of a phenomenon based on $n$ observations denoted by $\ybf = (y_i)_{i=1}^{n}$ at points $\Scal = \left(\sbf_i\right)_{i=1}^n$ with $\sbf_i \in \manif, 1\leq i \leq n.$ 

\subsection{Formulation of the spline prediction problems}\label{formulation}

The spline interpolation problem on $\manif$, denoted by $\Pcal_0$, consists in finding the smoothest function, in a precise mathematical sense, that interpolates the observed data. More precisely, it consists in solving the following variational problem \cite{keller2019thin,duchamp2003spline}.

\begin{definition}[Spline interpolation problem on $\manif$]
$\Pcal_0$ the spline interpolation problem on $\manif$ consists in finding a function $u \in \hm$ such that
\begin{equation}
\label{eq:P0}
\left\{
\begin{aligned}
E(u) \;&=\; \min_{v \in \hm} \int_{\manif} \lvert \Delta v \rvert^2 \, d\mug\\
u(\sbf_i) \;&=\; y_i, \qquad 1 \leq i \leq n .
\end{aligned}
\right.
\end{equation}
\end{definition}

When the observed data are corrupted by noise, seeking an exact interpolating solution is no longer appropriate. In this case, the smoothing spline problem, denoted by $\Pcal_\noise$, is more suitable, as it consists in minimizing the energy functional penalized by the prediction error at the observation points.

\begin{definition}[Smoothing spline problem on $\manif$]
$\Pcal_\noise$ the smoothing spline problem on $\manif$ consists in finding, for $\tau> 0$ a function $u \in \hm$ minimizing
\begin{equation}
\label{eq:Pnoise}
\frac{1}{n}\sum_{i=1}^{n} \left(y_i-u(\sbf_i)\right)^2 + \noise^2 \int_{\manif} \lvert \Delta v \rvert^2 \, d\mug
\end{equation}
\end{definition}

The analytical solutions of these problems are very similar, and are related to the spectrum of the operator $-\Delta.$

\subsection{Analytical solution to the spline prediction problems}\label{solution}

To derive the solutions to the problems $\Pcal_0$ and $\Pcal_1$, we rely on the Spectral Theorem~\cite{Craioveanu2001}, recalled in~\ref{spectral_theorem}, which provides the countable spectrum of $-\Delta$, denoted by $\left(\lambda_k, \phi_k\right)_{k \in \mathbb{N}}$. Here, $(\lambda_k)_{k \in \mathbb{N}} \subset \mathbb{R}_+$ are the eigenvalues and $(\phi_k)_{k \in \mathbb{N}} \subset \mathcal{C}^{\infty}(\manif)$ the associated eigenfunctions, which form an orthonormal basis of $\lm$. The eigenvalue $\lambda_0 = 0$ corresponds to constant functions~\cite{urakawa1993geometry}, with $\phi_0 = \frac{1}{\int_{\manif} d\mug}.$

As explained in~\cite{kim2001splines}, the theory of RKHS is needed to obtain the solution of problems of the type $\Pcal_\tau$ for $\tau \geq 0.$ Some details were recalled in~\cite{sire2025spline}, but are not evoked here. We simply introduce the kernel that will be central in the solution, which is defined as 

\begin{equation}
K_1(\sbf_1, \sbf_2) = \sum_{k\in \mathbb{N}^\star}\frac{1}{\lambda_k^2} \phi_k(\sbf_1)\phi_k(\sbf_2), ~~(\sbf_1,\sbf_2) \in \manif^2,
\end{equation} 

\noindent and the solution to the spline prediction problems is provided by Proposition~\ref{theo_splines}.

\begin{prop}[Spline prediction~\cite{wahbabook}]\label{theo_splines}

The solution to the problem $\Pcal_\noise$, which corresponds to the interpolating spline when $\noise = 0$ and to the smoothing spline when $\noise > 0$, is given by

\begin{equation}
u^\star_\noise(\sbf) = \anoise\phi_0(\sbf) + \kbf_1(\sbf)^\top\Ktau^{-1}\left(\ybf-\anoise\phibfn\right),
\end{equation}
with 
\begin{equation}\label{def_mat_theorem}
\left\{
\begin{aligned}
\Ktau\ &= \left[K_1(\sbf_i,\sbf_j)\right]_{1 \leq i,j \leq n} + \tau^2\Ibf_n\\
\kbf_1(\sbf) &= \left(K_1(\sbf_1,\sbf),\dots,K_1(\sbf_n,\sbf)\right)^\top ~~\forall\sbf \in \manif\\
\phibfn &= \left(\phi_0(\sbf_1),\dots,\phi_0(\sbf_n)\right)\\
\anoise &= \left((\phibfn)^\top\Ktau^{-1}\phibfn\right)^{-1}(\phibfn)^\top\Ktau^{-1} \ybf
\end{aligned}\right.
\end{equation}
\end{prop}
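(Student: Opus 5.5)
The plan is to work in the RKHS decomposition of $\hm$ induced by the spectrum of $-\Delta$. Write any $v \in \hm$ as $v = c\,\phi_0 + v_1$, where $c\in\mathbb{R}$ and $v_1 = \sum_{k\geq 1} v_k \phi_k$ is orthogonal to constants. Then $\int_\manif |\Delta v|^2 d\mug = \sum_{k\geq1}\lambda_k^2 v_k^2$. This quantity is exactly the squared norm of $v_1$ in the Hilbert space $\mathcal{H}_1 = \{v_1 : \sum_{k\ge1}\lambda_k^2 v_k^2<\infty\}$, equipped with the inner product $\langle u,v\rangle_1=\sum_{k\geq1}\lambda_k^2 u_kv_k$. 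I would first check that $\mathcal{H}_1$ is an RKHS with reproducing kernel $K_1$. Indeed, $\langle K_1(\cdot,\sbf), v_1\rangle_1 = \sum_{k\ge1}\lambda_k^2 \lambda_k^{-2}\phi_k(\sbf) v_k = v_1(\sbf)$. Pointwise evaluation is continuous because $\sum_k \lambda_k^{-2}\phi_k(\sbf)^2<\infty$: Weyl's law gives $\lambda_k\asymp k^{2/d}$, and the local Weyl law or Sobolev embedding $H^2\subset C^0$ holds for $d\le3$. I expect this step to be the main technical obstacle, and I would treat it by citing the embedding (implicitly assuming $d\leq 3$) rather than reproving it. The problem $\Pcal_\noise$ thus becomes a penalized least-squares problem over $\mathrm{span}(\phi_0)\oplus\mathcal{H}_1$, in which the penalty acts only on the $\mathcal{H}_1$ component. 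The null space of the penalty is spanned by $\phi_0$.

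Next I would apply a representer argument. Decompose $v_1 = \sum_i b_i K_1(\sbf_i,\cdot) + w$ with $w \perp_{\mathcal{H}_1} K_1(\sbf_i,\cdot)$ for all $i$. Then $w(\sbf_i)=0$, so $w$ does not change the data term, while it adds $\|w\|_1^2$ to the penalty. Hence any minimizer (for $\tau>0$), or any minimal-energy interpolant (for $\tau=0$), has the form $u=a\phi_0+\kbf_1(\cdot)^\top\bbf$. With $\Kzero=[K_1(\sbf_i,\sbf_j)]$, the objective reduces to a finite-dimensional quadratic in $(a,\bbf)$:
\[
\tfrac1n\|\ybf-a\phibfn-\Kzero\bbf\|^2+\tau^2\bbf^\top\Kzero\bbf .
\]
For $\tau=0$ the problem is instead to minimize $\bbf^\top\Kzero\bbf$ subject to $a\phibfn+\Kzero\bbf=\ybf$.

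Then I would derive the normal equations. For $\tau>0$, taking gradients (and absorbing the factor $n$ into $\tau^2$, which is the convention implicitly used in the definition of $\Ktau$) gives $\Kzero\big[(\Kzero+\tau^2\Ibf_n)\bbf-(\ybf-a\phibfn)\big]=0$ and $(\phibfn)^\top(\ybf-a\phibfn-\Kzero\bbf)=0$. A solution is $\bbf=\Ktau^{-1}(\ybf-a\phibfn)$ together with $(\phibfn)^\top\bbf=0$. Substituting the first relation into the second yields $a=\anoise$ exactly as in \eqref{def_mat_theorem}. Then $\ybf-a\phibfn-\Kzero\bbf=\tau^2\bbf$, so the second equation reduces to $(\phibfn)^\top\bbf=0$, and the two forms are consistent. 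Since the objective is convex, this stationary point is a minimizer, which gives $u^\star_\noise$.

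For $\tau=0$ I would use Lagrange multipliers and assume $\Kzero$ invertible, which holds when the points $\sbf_i$ are distinct, because $K_1$ is strictly positive definite on functions orthogonal to constants. The multiplier conditions give $\Kzero\bbf=\Kzero\boldsymbol\mu$ and $(\phibfn)^\top\boldsymbol\mu=0$, and the same formulas follow. Uniqueness follows from strict convexity of the penalty in $\bbf$ together with $\phibfn\neq 0$ (since $\phi_0$ is a nonzero constant), which determines $a$. This is the universal-kriging form stated in the proposition.
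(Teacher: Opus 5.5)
Your proposal is correct and is essentially the RKHS/representer-theorem argument of Wahba and Kim that the paper relies on; the paper gives no proof of its own, only the citation. Your remark that the factor $1/n$ in $\Pcal_\noise$ must be absorbed into $\noise^2$ for $\Ktau=\Kzero+\noise^2\Ibf_n$ to be the correct matrix is accurate, and the one implicit restriction, shared with the cited result, is that the identity $\int_\manif|\Delta v|^2\,d\mug=\sum_{k\geq1}\lambda_k^2v_k^2$ for every $v\in\hm$ needs $\partial\manif=\emptyset$ (or $v$ satisfying the Neumann condition), since otherwise every harmonic function has zero penalty.
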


The invertibility of the matrix $\Ktau$ is discussed in~\ref{inversibility_ktau}. For $\tau = 0,$ $\Ktau$ is positive definite if and only if any observation vector $\ybf$ can be interpolated by a function in $\lm$, while the inversibility is guaranteed when $\tau >0.$

\subsection{Equivalence with kriging}\label{equiv_kriging}

The solution $u_\noise$ coincides with the universal kriging predictor \cite{matheron1969krigeage,hengl2009practical}, with mean $a\phibf$, where $a \in \mathbb{R}$ is an unknown parameter, covariance kernel $\sigma^2 K_1$, and observation noise variance $\sigma^2\noise^2$ when $\noise > 0$. Here, $\sigma^2 > 0$ is the unknown variance. Note that since $\phi_0$ is constant in this setting, the model reduces to ordinary kriging \cite{wackernagel2003ordinary}, which assumes a constant but unknown mean. 

Under these assumptions, and with predictive mean equal to $u_\noise(\sbf_0)$ at a new point $\sbf_0 \in \manif$, the associated kriging variance is given by

\begin{equation}
\vuk(\sbf_0) = \sigma^2\left[v(\sbf_0) + K_1(\sbf_0, \sbf_0) - k_1(\sbf_0)^\top\Ktau^{-1}k_1(\sbf_0)\right]
\end{equation} with $v(\sbf_0) = \left[\phi_0(\sbf_0)- \kbf_1(\sbf_0)^\top\Ktau^{-1}\phibfn\right]^\top\left[\left(\phibfn\right)^\top\Ktau^{-1}\phibfn\right]^{-1}\left[\phi_0(\sbf_0) - \kbf_1(\sbf_0)^\top\Ktau^{-1}\phibfn\right].$

The term $\sigma^2v(\sbf_0)$ reflects the uncertainty associated with the unknown coefficient $a$ appearing in the mean of the random field, while $\sigma^2\left[K_1(\sbf_0,\sbf_0)-k_1(\sbf_0)^\top\Ktau^{-1}k_1(\sbf_0)\right]$ corresponds to the predictive variance of the simple kriging predictor, that is, when the mean is assumed to be known.

This analogy with universal kriging naturally paves the way for uncertainty quantification in spline prediction. However, to obtain the full predictive distribution rather than just the mean and variance, the idea is to introduce Gaussian assumptions and to work with Gaussian random fields (GRFs) $Z$ and $E$. Since the parameter $a$ governing the mean is unknown, the Bayesian kriging framework~\cite{diggle2002bayesian} is adopted by modeling it as a random variable $A$. The observations $\ybf$ are then interpreted as noisy realizations of the GRF $Z$ such that,
\begin{equation*}
Y_i = Z(\sbf_i) + \sigma\tau E_i,~~1\leq i \leq n
\end{equation*}\noindent where
\begin{itemize}
\item $\left(Z \mid A=a\right)$ is a GRF with mean $a\phi_0$ and covariance kernel $\sigma^2 K_1$, for all $a\in \mathbb{R}$
\item $A \sim \mathcal{N}(\mu,\alpha),$ which is the \emph{a priori} distribution of the unknown coefficient of the mean 
\item $\Ebf = \left(E_i\right)_{i=1}^{n}$ is a centered Gaussian vector with $\cov(\Ebf) = \Ibf_n$, independent of $A$ and $Z.$
\end{itemize}

This corresponds to the \emph{a priori} information on the model, where $Z$ again represents the underlying phenomenon of interest and $\sigma\tau \Ebf$ accounts for an additional measurement noise when $\tau > 0$. Under the Gaussian assumptions, the posterior distribution of the random field $Z$ can be computed explicitly, namely
\begin{equation}
\left(Z \mid \Yobs = \ybf\right),
\end{equation}
where $\Yobs$ denotes the random vector of observations $\left[ Y_1, \dots, Y_n \right]^\top$. This conditional random field is itself a GRF, and its distribution is therefore entirely characterized by its mean and covariance kernel. As explained in~\cite{helbert2009assessment} and recalled here in Proposition~\ref{prop_conv_bayes}, its behavior when $\alpha \to \infty$ is of particular interest.

\begin{prop}[Convergence of the conditional GRF.]\label{prop_conv_bayes}

When $\alpha$, the variance of $A$, tends to infinity, corresponding to a non-informative prior, the conditional field $(Z \mid \Yobs = \ybf)$ converges to a GRF whose mean and covariance coincide with the universal kriging predictor. They are given by 
\begin{equation}\label{distrib_cond}
\left\{
\begin{aligned}
\underset{\alpha \to \infty}{\lim} \esp\left(Z(\sbf)\mid \Yobs = \ybf\right) &= u^\star_\noise(\sbf)\\
\underset{\alpha \to \infty}{\lim} \cov\left(Z(\sbf), Z(\sbf')\mid \Yobs=\ybf\right) &= \sigma^2\left[c(\sbf,\sbf') + K_1(\sbf, \sbf') - k_1(\sbf)^\top\Ktau^{-1}k_1(\sbf')\right]
\end{aligned}
\right.
\end{equation}
with $c(\sbf,\sbf') = \left[\phi_0(\sbf)- \kbf_1(\sbf)^\top\Ktau^{-1}\phibfn\right]^\top\left[\left(\phibfn\right)^\top\Ktau^{-1}\phibfn\right]^{-1}\left[\phi_0(\sbf') - \kbf_1(\sbf')^\top\Ktau^{-1}\phibfn\right].$
\end{prop}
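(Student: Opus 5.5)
The plan is to marginalize out $A$ explicitly. Once that is done, the pair $(Z,\Yobs)$ is jointly Gaussian with a covariance that depends on $\alpha$, and I can use the standard Gaussian conditioning formulas and then let $\alpha\to\infty$.

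Integrating over $A\sim\mathcal N(\mu,\alpha)$, the field $Z$ becomes a GRF with mean $\mu\phi_0$ and covariance kernel $\sigma^2K_1(\sbf,\sbf')+\alpha\,\phi_0(\sbf)\phi_0(\sbf')$. For the observations, $\esp(\Yobs)=\mu\phibfn$ and
\[
\cov(\Yobs)=\sigma^2\Ktau+\alpha\,\phibfn(\phibfn)^\top .
\]
The cross-covariance is $\cov(Z(\sbf),\Yobs)=\sigma^2\kbf_1(\sbf)+\alpha\,\phi_0(\sbf)\phibfn$. Here I use that $\Ebf$ is independent of $A$ and $Z$, and that $\Ktau$ is invertible, which \ref{inversibility_ktau} gives for $\tau>0$ and which I assume for $\tau=0$.

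Gaussian conditioning then gives the conditional mean and covariance of $(Z\mid\Yobs=\ybf)$ in closed form, with $\Sigma_\alpha=\cov(\Yobs)$ appearing inverted. The key step is to invert $\Sigma_\alpha$ with the Sherman--Morrison formula. Write $\Kbf=\sigma^2\Ktau$, $\mathbf p=\phibfn$ and $q=\mathbf p^\top\Kbf^{-1}\mathbf p>0$. Then
\[
\Sigma_\alpha^{-1}=\Kbf^{-1}-\frac{\alpha\,\Kbf^{-1}\mathbf p\mathbf p^\top\Kbf^{-1}}{1+\alpha q}.
\]
I then substitute this into both the mean and covariance formulas and expand.

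For the mean, the terms regroup as $\mu\phi_0(\sbf)+\kbf_1(\sbf)^\top\Ktau^{-1}(\ybf-\mu\phibfn)$ plus a correction proportional to $\frac{\alpha}{1+\alpha q}$. Since $\frac{\alpha}{1+\alpha q}\to 1/q$, the $\mu$ contributions should cancel in the limit, leaving
\[
\hat a\,\phi_0(\sbf)+\kbf_1(\sbf)^\top\Ktau^{-1}(\ybf-\hat a\,\phibfn),\qquad \hat a=q^{-1}\mathbf p^\top\Kbf^{-1}\ybf=\anoise .
\]
The $\sigma^2$ factors cancel in $\hat a$, so this is exactly $u^\star_\noise$ from Proposition~\ref{theo_splines}.

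For the covariance, the conditional covariance is
\[
\sigma^2K_1+\alpha\phi_0\phi_0-(\sigma^2\kbf_1+\alpha\phi_0\mathbf p)^\top\Sigma_\alpha^{-1}(\sigma^2\kbf_1'+\alpha\phi_0'\mathbf p).
\]
I would expand this into pieces and collect them by powers of $\alpha$. The term of order $\alpha$ is $\alpha\phi_0\phi_0'(1-\frac{\alpha q}{1+\alpha q})=\frac{\alpha}{1+\alpha q}\phi_0\phi_0'$, which is bounded. The terms in $\phi_0\,\mathbf p^\top\Kbf^{-1}\kbf_1$ come with coefficient $\frac{1}{1+\alpha q}$ times $\alpha$. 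The residual rank-one correction from Sherman--Morrison on $\kbf_1^\top\Kbf^{-1}\kbf_1'$ also has coefficient $\frac{\alpha}{1+\alpha q}$. Passing to the limit, every such coefficient tends to $1/q$, and the bracket should assemble into
\[
q^{-1}\bigl(\phi_0-\sigma^2\kbf_1^\top\Kbf^{-1}\mathbf p\bigr)\bigl(\phi_0'-\sigma^2\kbf_1'^\top\Kbf^{-1}\mathbf p\bigr)+\sigma^2\bigl(K_1-\kbf_1^\top\Ktau^{-1}\kbf_1'\bigr).
\]
Rewritten in terms of $\Ktau$, this is $\sigma^2[c(\sbf,\sbf')+K_1(\sbf,\sbf')-\kbf_1(\sbf)^\top\Ktau^{-1}\kbf_1(\sbf')]$.

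The main obstacle is this bookkeeping. The $O(\alpha)$ terms must cancel exactly, which depends on the identity $1-\frac{\alpha q}{1+\alpha q}=\frac{1}{1+\alpha q}$. The $\sigma^2$ scalings also have to be tracked carefully so that $c$ comes out multiplied by $\sigma^2$.

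Finally, convergence of the GRFs means convergence of finite-dimensional distributions. Gaussian laws converge whenever their means and covariances converge, so the pointwise limits above are enough to conclude.
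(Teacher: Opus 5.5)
Your proposal is correct and follows essentially the same route as the paper's proof in Appendix~\ref{proof_bayes}. The shared steps are: marginalize $A$ to obtain the kernel $\sigma^2K_1(\sbf,\sbf')+\alpha\phi_0(\sbf)\phi_0(\sbf')$, apply the Gaussian conditioning formulas, invert $\sigma^2\Ktau+\alpha\phibfn(\phibfn)^\top$ by Sherman--Morrison/Woodbury, let $\alpha\to\infty$ using $\frac{\alpha}{1+\alpha q}\to 1/q$, and conclude via convergence of finite-dimensional Gaussian laws (L\'evy's continuity theorem).

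Your grouping of all $\alpha$-dependent terms into a single rank-one correction with coefficient $\frac{\alpha}{1+\alpha q}$ is a slightly cleaner organization of the same computation. Your explicit invertibility assumption on $\Ktau$ for $\tau=0$ matches the paper's.
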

 
The proof is provided in~\ref{proof_bayes}. Note that the \emph{a priori} mean $\mu$ of the random variable $A$ does not affect the results when $\alpha \to \infty$. Therefore, without loss of generality, we set $\mu = 0$ for the remainder of the analysis. The objective of the study is then to investigate this conditional GRF, with a particular focus on generating simulations that reflect both the spline predictor and its associated uncertainty. However, the covariance kernel $K_1$ is linked to the spectrum of the Laplace-Beltrami operator $-\Delta$. For a general compact and connected manifold, these eigenvalues and eigenfunctions are not available in closed form, making standard Gaussian process regression computations intractable. To overcome this difficulty, and as already proposed in~\cite{sire2025spline}, a finite-element approximation of the conditional GRF can be considered.

\section{Finite element approximation of Gaussian Random Fields}\label{sec_fe_approx}

Here, the objective is to construct a discretization of Gaussian random fields on the Riemannian manifold $\manif$ by exploiting the finite element approximation framework, as introduced in~\cite{Lindgren}. We discretize the manifold $\manif$ by introducing a polyhedral approximation obtained by a triangulation $\mathrm{T}$ with $m$ nodes $\cbf_1,\dots,\cbf_m \in \manif$. In particular, we assume (as it is standard with surface finite element methods, cf. \cite{dziuk2013finite}) that the resulting triangulated surface is ``{close enough}'' to the manifold $\manif$ so that we can assume there is a one-to-one mapping between the points of $\manif$, and those of its discretized counterpart. As a result, we use with an abuse of notation the same letter $\manif$ to denote the triangulated surface.

Associated with this triangulation, we define a set of compactly supported basis functions $\{\psi_j\}_{j=1}^m$ on $\manif$. Each basis function $\psi_j$ is piecewise linear over $\mathrm{T}$, equals $1$ at the node $\cbf_j$, and vanishes at all other nodes, corresponding to a first-order finite element discretization. This construction naturally yields the following matrices:
\begin{itemize}
\item the mass matrix $\Mbf \in \mathbb{R}^{m\times m}$, with entries $\left[\Mbf\right]_{ij} = \langle \psi_i, \psi_j \rangle_{\lm}$,
\item the stiffness matrix $\Fbf \in \mathbb{R}^{m\times m}$, with entries $\left[\Fbf\right]_{ij} = \langle \nabla \psi_i, \nabla \psi_j \rangle_{\lm}$,
\item the projection matrix $\Abf_n \in \mathbb{R}^{n\times m}$ mapping the triangulation nodes to the observation points, with entries $\left[\Abf_n\right]_{ij} = \psi_j(\sbf_i)$.
\end{itemize}

The mass matrix $\Mbf$ is symmetric positive definite, and the stiffness matrix $\Fbf$ is symmetric positive semi-definite~\cite{pereira:tel-02499376}. Consequently, one can introduce a matrix square root $\sqrt{\Mbf} \in \mathbb{R}^{m\times m}$ satisfying $\sqrt{\Mbf}\,\sqrt{\Mbf}^{\top} = \Mbf$, which may be obtained, for example, via a Cholesky factorization~\cite{benoit1924note}. Owing to the local support of the basis functions $\{\psi_j\}_{j=1}^m$, both $\Mbf$ and $\Fbf$ are sparse matrices. 

\subsection{Finite element statistical model}

Let us propose a finite-element approximation of the statistical model introduced in Section~\ref{equiv_kriging}. First, it should be reminded that a finite-element approximation relies solely on the values at the triangulation nodes and is then extended over the entire domain by linear combination. For instance, the random field $Z$ modeling the phenomenon will be approximated by
\begin{equation}\label{eq_fe}
Z^{(m)}(\sbf) = \sum_{j=1}^{m} Z_j \psi_j(\sbf) = \Abf(\sbf)^\top \Zbf,~~\sbf\in\manif
\end{equation}

\noindent where for $ 1 \leq j \leq m $, $ Z_j = Z^{(m)}(\cbf_j) $ corresponds to the value of the finite-element approximation $Z^{(m)}$ at the mesh node $\cbf_j$,  $ \Zbf = \left( Z_1,\dots,Z_m \right)^\top$, and $ \Abf(\sbf) = \left( \psi_1(\sbf),\dots,\psi_m(\sbf) \right)^\top$. We therefore suppose that from now on, the observations $\ybf$ from our initial spline problems in Section~\ref{formulation} are noisy realizations of the random field $Z^{(m)}$, i.e. $y_i$ is a realization of
\begin{equation}\label{model_fe}
Y^{(m)}_i = Z^{(m)}(\sbf_i) + \sigma\tau E_i,~~1\leq i\leq n
\end{equation} \noindent where $\Ebf = \left(E_i\right)_{i=1}^{n}$ is a centered Gaussian vector with $\cov(\Ebf) = \Ibf_n$, independent of $\Zbf$.

Finally, following the same approach as in Section~\ref{equiv_kriging}, we assume that the distribution of $\Zbf$, which entirely charecterizes the distribution of $Z^{(m)}$ arises from the following two-stage hierarchical model:
\begin{itemize}
\item $\left(Z^{(m)}(\sbf) \mid A=a\right) = \left(\Abf(\sbf)^\top \Zbf \mid A=a\right)$ is a centered Gaussian vector
\item $A \sim \mathcal{N}(0,\alpha)$ ($\mu = 0$ without loss of generality)
\end{itemize}


The objective is then to define the distribution of the random vector $\left(\Zbf\mid A = a\right),$ so that $\left(Z^{(m)} \mid A = a\right)$ is the finite-element approximation of $\left(Z \mid A=a\right).$ As a reminder, $\forall a \in \mathbb{R},\left(Z\mid A=a\right)$ is a GRF with mean $a\phi_0$ and covariance kernel $\sigma^2K_1.$ The result is presented and justified in~\cite{pereira_desassis_allard_2022,lang2023galerkin}, and is briefly recalled here. 

Let us denote $-\Delta_m$ the Galerkin approximation of $-\Delta$ on the triangulation $\mathrm{T}$, defined explicitly in~\ref{galerk}, and $\left(\lambda_k^{(m)}\right)_{k=0}^{m-1}$ and $\left(\phi_k^{(m)}\right)_{k=0}^{m-1}$ denote its eigenvalues and eigenvectors. Then, to characterize the distribution of $\left(\Zbf \mid A = a\right)$ , we need to introduce the following matrix $\Sbf$ defined as

\begin{equation}\label{eq_S}
\Sbf = \left(\sqrt{\Mbf}\right)^{-1}\Fbf\left(\sqrt{\Mbf}\right)^{-\top},
\end{equation} which is real, symmetric, and positive semi-definite with same eigenvalues as $-\Delta^{(m)}.$~\cite{lang2023galerkin}. Then, using the diagonalization $\Sbf = \Vbf \diag\left(\lambda_0^{(m)},\dots, \lambda_{m-1}^{(m)}\right)\Vbf^\top,$ we have the following result.

\begin{prop}[Distribution of the random vector $\left(\Zbf\mid A=a\right)$.]\label{prop_sigma}
For all $a\in\mathbb{R},$ the random vector $\left(\Zbf \mid A = a\right)$ is a Gaussian vector with 
\begin{itemize}
\item mean $a\phibf$ 
\item covariance matrix
$\sigma^2\sigbf$ with $\sigbf = \left(\sqrt{\Mbf}\right)^{-\top} f(\Sbf) \left(\sqrt{\Mbf}\right)^{-1}$,
\end{itemize}
where $\phibf = \left(\phi_0^{(m)}(\cbf_j)\right)_{j=1}^{m} = \frac{\mathbf{1}_m}{\lVert \sqrt{\Mbf}^\top \mathbf{1}_m \rVert_2}$ (see~\ref{formula_phibf}), $f$ is the function such that $f(\lambda) = \frac{1}{\lambda^2}\mathbf{1}_{\lambda > 0}$ and $f(\Sbf)$ is the matrix function defined from the eigendecomposition of $\Sbf$ as
$$
f(\Sbf) = \Vbf \diag\left(f(\lambda_0^{(m)}),\dots, f(\lambda_{m-1}^{(m)})\right) \Vbf^\top.
$$
\end{prop}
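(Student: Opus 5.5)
The plan is to define the finite-element approximation of $(Z\mid A=a)$ by replacing the spectrum of $-\Delta$ with that of its Galerkin approximation $-\Delta_m$. We then read off the nodal coefficient vector $\Zbf$ from the resulting truncated Karhunen--Loève expansion.

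Concretely, we first describe the eigenpairs $(\lambda_k^{(m)},\phi_k^{(m)})$ of $-\Delta_m$ in terms of $\Sbf$. The generalized eigenproblem $\Fbf\wbf=\lambda\Mbf\wbf$ is equivalent to $\Sbf\vbf=\lambda\vbf$ with $\vbf=\sqrt{\Mbf}^\top\wbf$. Hence the nodal coefficient vectors of the eigenfunctions $\phi_k^{(m)}=\sum_j w_{kj}\psi_j$ are $\wbf_k=\sqrt{\Mbf}^{-\top}\vbf_k$, where $\vbf_k$ is the $k$-th column of $\Vbf$. Orthonormality of these functions in $\lm$ is then $\wbf_k^\top\Mbf\wbf_l=\vbf_k^\top\vbf_l=\delta_{kl}$, which holds because $\Vbf$ is orthogonal.

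Next we identify the kernel of $\Fbf$. Since $\Fbf\mathbf{1}_m=0$ (the $\psi_j$ sum to one) and $\Fbf$ has a one-dimensional kernel on a connected mesh, $\lambda_0^{(m)}=0$ with $\wbf_0\propto\mathbf{1}_m$. Normalizing by $\wbf_0^\top\Mbf\wbf_0=1$ gives exactly $\phibf=\mathbf{1}_m/\lVert\sqrt{\Mbf}^\top\mathbf{1}_m\rVert_2$, so the constant mode is isolated as in the continuous setting.

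We then mimic the continuous model. Conditionally on $A=a$, we set
\[
Z^{(m)} = a\phi_0^{(m)} + \sigma\sum_{k=1}^{m-1}\frac{1}{\lambda_k^{(m)}}\xi_k\phi_k^{(m)},
\]
where the $\xi_k$ are i.i.d.\ $\mathcal{N}(0,1)$. This is the natural discrete analogue of a field with mean $a\phi_0$ and covariance $\sigma^2K_1$, since $K_1=\sum_{k\ge1}\lambda_k^{-2}\phi_k\otimes\phi_k$ (this is the Galerkin construction of~\cite{pereira_desassis_allard_2022,lang2023galerkin}). Taking nodal values gives
\[
\Zbf = a\phibf + \sigma\sqrt{\Mbf}^{-\top}\Vbf\,\diag\bigl(f(\lambda_k^{(m)})^{1/2}\bigr)\boldsymbol{\xi},
\]
using $\lambda_0^{(m)}=0$, $f(0)=0$ and the convention $f^{1/2}(\lambda)=\lambda^{-1}\mathbf{1}_{\lambda>0}$. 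This vector is Gaussian with mean $a\phibf$. Its covariance is
\[
\sigma^2\sqrt{\Mbf}^{-\top}\Vbf\,\diag\bigl(f(\lambda_k^{(m)})\bigr)\Vbf^\top\sqrt{\Mbf}^{-1}=\sigma^2\sqrt{\Mbf}^{-\top}f(\Sbf)\sqrt{\Mbf}^{-1},
\]
which is the claimed formula.

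The main obstacle is justifying that $\lambda=0$ is a simple eigenvalue of $\Sbf$, equivalently that $\ker\Fbf=\mathrm{span}(\mathbf{1}_m)$. Without this, $f(\Sbf)$ would discard more than the constant mode. To establish it, we would use that $\wbf^\top\Fbf\wbf=\lVert\nabla\sum_j w_j\psi_j\rVert^2_{\lm}$ vanishes only for piecewise linear functions that are constant on each triangle. Connectedness of the triangulation then forces such a function to be globally constant.
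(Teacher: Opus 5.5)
Your proposal is correct and takes essentially the same route as the paper. The paper gives no self-contained proof: it recalls the Galerkin (truncated Karhunen--Lo\`eve) construction of the cited references, together with the normalization of $\phibf$ in~\ref{formula_phibf} and the identity $\Fbf\phibf=0$ in~\ref{inversibility_kbis}. Your argument that $\lambda_0^{(m)}=0$ is simple, via $\wbf^\top\Fbf\wbf=\lVert\nabla\sum_j w_j\psi_j\rVert_{\lm}^2$ and connectedness of the mesh, supplies a step the paper only asserts when it states $\operatorname{rank}(\sigbf)=m-1$.
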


Using proposition~\ref{prop_sigma} and Equation~\ref{eq_fe}, it provides the distribution of the finite-element random field $\left(Z^{(m)}(\sbf) \mid A = a\right).$

\begin{coro}[Distribution of the finite-element approximation of the GRF.]
For all $a\in\mathbb{R},$ $\left(Z^{(m)}(\sbf)\mid A = a\right)$ is a Gaussian random field with mean $a\Abf(\sbf)^\top\phibf$ and covariance kernel $\sigma^2 K_1^{(m)}$ with

$$K_1^{(m)}(\sbf,\sbf') = \Abf(\sbf)^\top\sigbf \Abf(\sbf'),~~~~(\sbf,\sbf')\in\manif^2.$$
\end{coro}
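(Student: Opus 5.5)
This corollary follows directly from Proposition~\ref{prop_sigma} combined with the linear representation~\eqref{eq_fe}. The plan is to fix $a\in\mathbb{R}$ and work conditionally on $A=a$ throughout. By Proposition~\ref{prop_sigma}, $\left(\Zbf\mid A=a\right)$ is Gaussian with mean $a\phibf$ and covariance $\sigma^2\sigbf$. For each $\sbf\in\manif$, $Z^{(m)}(\sbf)=\Abf(\sbf)^\top\Zbf$ is a deterministic linear functional of $\Zbf$, since the vector $\Abf(\sbf)$ of basis-function values depends only on $\sbf$ and the triangulation, not on $A$.

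The first step is to establish the Gaussian random field property. Take any finite collection of points $\sbf^{(1)},\dots,\sbf^{(p)}\in\manif$ and stack the row vectors $\Abf(\sbf^{(l)})^\top$ into a matrix $\mathbf{B}\in\mathbb{R}^{p\times m}$. Then $\left(Z^{(m)}(\sbf^{(1)}),\dots,Z^{(m)}(\sbf^{(p)})\right)^\top=\mathbf{B}\Zbf$. Conditionally on $A=a$, this is a linear image of a Gaussian vector, hence Gaussian. Since every finite-dimensional distribution is Gaussian, $\left(Z^{(m)}\mid A=a\right)$ is a Gaussian random field. This includes the degenerate case: $\sigbf$ is singular because $f(0)=0$, so the vector is not of full rank. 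I would note explicitly that Gaussianity is meant in the broad sense, defined through characteristic functions or as a linear image of a standard Gaussian vector, so that singular covariances are allowed.

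The second step is to compute the first two moments by linearity of conditional expectation and bilinearity of covariance. This gives
\begin{equation*}
\esp\left(Z^{(m)}(\sbf)\mid A=a\right)=\Abf(\sbf)^\top\esp\left(\Zbf\mid A=a\right)=a\Abf(\sbf)^\top\phibf,
\end{equation*}
and
\begin{equation*}
\cov\left(Z^{(m)}(\sbf),Z^{(m)}(\sbf')\mid A=a\right)=\Abf(\sbf)^\top\cov\left(\Zbf\mid A=a\right)\Abf(\sbf')=\sigma^2\Abf(\sbf)^\top\sigbf\Abf(\sbf').
\end{equation*}
The latter equals $\sigma^2K_1^{(m)}(\sbf,\sbf')$. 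The kernel is symmetric and positive semi-definite because $\sigbf$ is: $f(\Sbf)$ is positive semi-definite and is congruent to $\sigbf$ via $\left(\sqrt{\Mbf}\right)^{-1}$.

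There is no real obstacle here. The only point requiring care is the singular covariance already mentioned, which I would handle by the convention above rather than by invoking densities. As a side remark, $\Abf(\sbf)^\top\phibf$ is constant in $\sbf$ because the $\psi_j$ form a partition of unity and $\phibf$ is proportional to $\mathbf{1}_m$. This is consistent with $\phi_0$ being constant, but it is not needed for the proof.
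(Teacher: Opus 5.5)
Your proposal is correct and follows the paper's route: the paper obtains the corollary directly from Proposition~\ref{prop_sigma} and the linear representation $Z^{(m)}(\mathbf{s})=\Abf(\mathbf{s})^\top\Zbf$ of Equation~\eqref{eq_fe}, which is exactly the argument you give. You add details the paper leaves implicit, namely the finite-dimensional distributions and the convention allowing the singular covariance $\sigbf$.
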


In practice, and as commonly adopted in finite element implementations, the mass matrix $\Mbf$ is replaced by its mass lumping approximation~\cite{quarteroni2016modellistica}. This approximation consists of a diagonal matrix whose entries are given by $\left[\Mbf\right]_{jj} = \langle \psi_j, 1 \rangle$ for $1 \leq j \leq m$. For notational simplicity, we retain the same symbol $\Mbf$ for the mass lumping matrix. This simplification facilitates access to $\left(\sqrt{\Mbf}\right)^{-1}$ and ensures that $\Sbf$ is sparse.

Now that we have characterized the distribution of the discretized field $Z^{(m)}$ used to model the observations $\Yobs^{(m)}=\left[Y_1^{(m)}, \dots, Y_n^{(m)} \right]^\top$ introduced in Equation~\ref{model_fe}, we can move on to the characterization of the posterior distribution $\left(Z^{(m)}\mid \Yobs^{(m)} = \ybf\right)$ when the variance $\alpha$ of $A$ tends to infinity, which will be used to make predictions at new locations.

\subsection{Predictive distribution of the finite element model} 

Following once again the approach in Section~\ref{equiv_kriging}, we wish to use the posterior distribution $\left(Z^{(m)}\mid \Yobs^{(m)} = \ybf\right)$ in the case $\var(A)=\alpha \rightarrow \infty$ as a predictive distribution to recover the observed field at any location. Noting that the vector of observations $\Yobs^{(m)}$ can be written as
\begin{equation}
\Yobs^{(m)} = \Abf_n \Zbf + \sigma\tau \Ebf,
\end{equation}
with  $\Abf_n$ the projection matrix defined in Section~\ref{sec_fe_approx}, we can adapt Proposition~\ref{prop_conv_bayes} to our finite-element statistical model to obtain the following result.

\begin{prop}[Convergence of the finite element conditional GRF]\label{prop_conv_fe}
Let $\alpha =\var(A)$ in the hierarchical model defining $Z^{(m)}$. Then, $\left(Z^{(m)} \mid \Yobs^{(m)} = \ybf \right)$  is a GRF such that
\begin{enumerate}[label=(\roman*)]
\item For any $\sbf\in\mathcal{M}$, the mean $\esp\left(Z^{(m)}(\sbf)\mid \Yobs^{(m)} = \ybf\right)$ satisfies
\begin{align*}
\underset{\alpha \to \infty}{\lim} &\esp\left(Z^{(m)}(\sbf)\mid \Yobs^{(m)} = \ybf\right) = \amnoise \Abf(\sbf)^\top\phibf +\Abf(\sbf)^\top\sigbf \Abf_n^\top \left(\Kbis\right)^{-1}(\ybf - \amnoise\Abf_n\phibf)
\end{align*}

\noindent where $\amnoise = \left(\phibf^\top \Abf_n^\top\left(\Kbis\right)^{-1}\Abf_n\phibf\right)^{-1}\phibf^\top \Abf_n^\top\left(\Kbis\right)^{-1} \ybf$ and $\Kbis~=~\Abf_n\sigbf (\Abf_n)^\top +\tau^2 \Ibf_n$

\item For any $\sbf,\sbf'\in\mathcal{M}$, the covariance kernel $\cov\left(Z(\sbf), Z(\sbf')\mid \Yobs^{(m)}=\ybf\right)$ satisfies
\begin{align*}
\underset{\alpha \to \infty}{\lim} \cov\left(Z(\sbf), Z(\sbf')\mid \Yobs^{(m)}=\ybf\right)
&= \sigma^2\Bigl[
    c^{(m)}(\sbf,\sbf')
    + \Abf(\sbf)^\top \sigbf \Abf(\sbf') \\
&\qquad
    - \Abf(\sbf)^\top \sigbf \Abf_n^\top
      \left(\Kbis \right)^{-1}
      \Abf_n \sigbf \Abf(\sbf')
\Bigr].
\end{align*}
where
$c^{(m)}(\sbf,\sbf') = b(\sbf)\left(\phibf^\top \Abf_n^\top\left(\Kbis\right)^{-1}\Abf_n\phibf\right)^{-1}b(\sbf')$ and\\ $b(\sbf)~=~\Abf(\sbf)^\top\phibf- \Abf(\sbf)^\top\sigbf \Abf_n^\top \left(\Kbis\right)^{-1}\Abf_n\phibf$.
\end{enumerate}
Again, the limiting field defined by these moments remains a GRF.
\end{prop}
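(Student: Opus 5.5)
The plan is to recognise Proposition~\ref{prop_conv_fe} as the finite-dimensional analogue of Proposition~\ref{prop_conv_bayes}, and to prove it by redoing the Gaussian conditioning at finite $\alpha$ and then letting $\alpha\to\infty$.

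\textbf{Step 1: integrate out $A$.} Since $(\Zbf\mid A=a)\sim\mathcal{N}(a\phibf,\sigma^2\sigbf)$ by Proposition~\ref{prop_sigma} and $A\sim\mathcal{N}(0,\alpha)$, I will write $\Zbf = A\phibf + \sigma\Zbf_0$. Here $\Zbf_0\sim\mathcal{N}(0,\sigbf)$ is independent of $A$ and $\Ebf$. Then $(Z^{(m)}(\sbf),\Yobs^{(m)})$ is jointly Gaussian and centered, with:
\begin{align*}
\cov(\Yobs^{(m)}) &= \alpha\Abf_n\phibf\phibf^\top\Abf_n^\top + \sigma^2\Kbis,\\
\cov(Z^{(m)}(\sbf),\Yobs^{(m)}) &= \alpha\Abf(\sbf)^\top\phibf\phibf^\top\Abf_n^\top + \sigma^2\Abf(\sbf)^\top\sigbf\Abf_n^\top,\\
\cov(Z^{(m)}(\sbf),Z^{(m)}(\sbf')) &= \alpha\Abf(\sbf)^\top\phibf\phibf^\top\Abf(\sbf') + \sigma^2\Abf(\sbf)^\top\sigbf\Abf(\sbf').
\end{align*}
These follow directly from the linear representation $Z^{(m)}(\sbf)=\Abf(\sbf)^\top\Zbf$ and $\Yobs^{(m)}=\Abf_n\Zbf+\sigma\tau\Ebf$. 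Standard Gaussian conditioning then gives the posterior mean and covariance for each finite $\alpha$, and shows that the conditional field is a GRF. For the formulas I assume that $\Kbis$ is invertible, which is automatic for $\tau>0$ and is the finite-element counterpart of the condition in~\ref{inversibility_ktau} for $\tau=0$.

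\textbf{Step 2: invert the rank-one perturbation.} Set $\mathbf{h}=\Abf_n\phibf$, which is nonzero because $\phibf$ is a positive constant vector and the rows of $\Abf_n$ sum to one. Apply the Sherman--Morrison formula to $\sigma^2\Kbis+\alpha\mathbf{h}\mathbf{h}^\top$. Writing $q=\mathbf{h}^\top(\Kbis)^{-1}\mathbf{h}>0$, the inverse becomes
\[
\sigma^{-2}\Bigl[(\Kbis)^{-1} - \frac{\alpha(\Kbis)^{-1}\mathbf{h}\mathbf{h}^\top(\Kbis)^{-1}}{\sigma^2+\alpha q}\Bigr].
\]
As $\alpha\to\infty$, the coefficient $\alpha/(\sigma^2+\alpha q)$ tends to $1/q$.

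\textbf{Step 3: take the limit.} Substituting into the posterior mean, the $O(\alpha)$ terms must combine with the Sherman--Morrison correction so that everything stays bounded. I would group terms exactly as in the universal kriging derivation, writing each $\alpha$-dependent factor as $\alpha/(\sigma^2+\alpha q)$ or $\sigma^2/(\sigma^2+\alpha q)$. Then each limit is immediate. The mean becomes $\amnoise\Abf(\sbf)^\top\phibf+\Abf(\sbf)^\top\sigbf\Abf_n^\top(\Kbis)^{-1}(\ybf-\amnoise\mathbf{h})$ with $\amnoise=q^{-1}\mathbf{h}^\top(\Kbis)^{-1}\ybf$. Similarly, the covariance has an $\alpha$-dependent part equal to $\sigma^2\,b(\sbf)\,\frac{\alpha}{\sigma^2+\alpha q}\,b(\sbf')$ plus bounded remainders, and its limit yields $c^{(m)}$.

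\textbf{Main obstacle and shortcut.} The delicate part is the cancellation of the divergent $\alpha$ terms in the covariance. To avoid redoing this bookkeeping, I would instead note that the finite-element model is an instance of the setting of Proposition~\ref{prop_conv_bayes}. The field has mean $a\,\Abf(\cdot)^\top\phibf$ and kernel $\sigma^2K_1^{(m)}$. The data $\Abf_n\Zbf$ are its evaluations at $\sbf_i$, so that $\left[K_1^{(m)}(\sbf_i,\sbf_j)\right]_{i,j}=\Abf_n\sigbf\Abf_n^\top$ and $\phibfn$ becomes $\Abf_n\phibf$. The proof of Proposition~\ref{prop_conv_bayes} in~\ref{proof_bayes} uses only these ingredients and not the spectral form of $K_1$. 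It therefore transfers verbatim under the substitutions $K_1\to K_1^{(m)}$, $\phi_0\to\Abf(\cdot)^\top\phibf$ and $\Ktau\to\Kbis$, which reproduces items (i) and (ii).

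Finally, the limits of the means and covariances of Gaussian finite-dimensional distributions define Gaussian limits. The limiting covariance is a pointwise limit of positive semidefinite kernels and hence remains positive semidefinite. So the limit is again a GRF.
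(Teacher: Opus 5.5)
Your proposal is correct and follows the paper's route. The paper proves this proposition exactly by your shortcut: it transfers the appendix proof of Proposition~\ref{prop_conv_bayes} (integrate out $A$, apply Gaussian conditioning, use Woodbury/Sherman--Morrison, let $\alpha\to\infty$, then L\'evy continuity) under the substitutions $\phi_0\to\Abf(\cdot)^\top\phibf$ and $K_1\to K_1^{(m)}$, and your explicit bookkeeping, with $\alpha$-dependent part $\sigma^2 b(\sbf)\,\alpha(\sigma^2+\alpha q)^{-1} b(\sbf')$, checks out; your linear representation $\Zbf = A\phibf+\sigma\Zbf_0$ is also a cleaner way to integrate out $A$ than the appendix's density computation, since it never requires inverting the singular matrix $\sigbf$.
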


The proof is identical to that of Proposition~\ref{prop_conv_bayes} given in~\ref{proof_bayes}, replacing $\phi_0(\sbf)$ with $\phi_0^{(m)}(\sbf) = \Abf(\sbf)^\top \phibf$ and $K_1(\sbf,\sbf')$ with $K_1^{(m)}(\sbf,\sbf') = \Abf(\sbf)^\top \sigbf \Abf(\sbf')$. The inverisibility of the matrix $\Kbis$ is discussed in~\ref{inversibility_kbis}. Using these results, the objective is now to simulate, at the triangulation points, the GRF $\left(Z^{(m)} \mid \Yobs^{(m)} = \ybf\right)$
of Proposition~\ref{prop_conv_fe}, which corresponds to the vector $\left(\Zbf \mid \Yobs^{(m)} = \ybf\right)$ as $\alpha \to \infty$. This will be the purpose of the next section. As an extension, in~\ref{other_simu}, we provide a strategy to generate simulations of $\left(Z^{(m)} \mid \Yobs^{(m)} = \ybf, A = \amnoise\right),$ that is a GRF with the same mean as the limit presented in~Proposition~\ref{prop_conv_bayes}, then matching the finite element approximation of the spline prediction, but whose covariance kernel does not contain the term $c^{(m)}(\sbf,\sbf')$ that refers to the uncertainty on the \emph{a priori} mean, as in the simple kriging framework. 

\section{Predictive posterior simulations}\label{sec_simus}

\subsection{Overview of the approach}\label{general_idea}

In this section, we define $\Zpost \sim \mathcal{N}(\mpost(\ybf), \covpost)$, the limiting predictive distribution of Proposition~\ref{prop_conv_fe}, i.e. we take
\begin{itemize}
    \item $\mpost(\ybf)=\underset{\alpha \to \infty}{\lim} \esp\left(\Zbf\mid \Yobs^{(m)} = \ybf\right)$ as the posterior expectation of $\left(\Zbf\mid \Yobs^{(m)} = \ybf\right)$ in the limit $\var(A)=\alpha\rightarrow\infty$.
    \item $\covpost=\underset{\alpha \to \infty}{\lim}\cov\left(\Zbf\mid \Yobs^{(m)}=\ybf\right)$ as the corresponding covariance kernel, which in particular does not depend on $\ybf$ (see Proposition~\ref{prop_conv_fe}).
\end{itemize}

Our objective is to generate simulations of this Gaussian vector. Two approaches may naturally be considered as possible starting points.

First, within this universal kriging framework, simulations of $\Zpost$ can be derived from prior simulations of $\left(\Zbf \mid A = a\right)$, that has a covariance matrix $\sigma^2\sigbf$ (as established in Proposition~\ref{prop_sigma}). However, computing $\sigbf$ requires the diagonalization of $\Sbf$ (defined in Equation~\ref{eq_S}). Because $\Sbf$ is an $m \times m$ matrix, where $m$ denotes the number of triangulation nodes, this computation becomes computationally intractable for large $m$. To address this, the precision matrix, defined as the inverse of the covariance matrix $\sigbf$, is exploited in practice. The idea is that Gaussian random vectors defined on a triangulation mesh can be viewed as GMRFs, which implies that their precision matrix is sparse. However, direct manipulation of the precision matrix is not feasible in our setting. Since $\operatorname{rank}(\sigbf) = \operatorname{rank}\!\big(f(\Sbf)\big) = m-1$, with $f(\lambda_0^{(m)}) = 0$ and $f(\lambda_k^{(m)}) > 0$ for $1 \leq k \leq m-1$, the random vector $\left(\Zbf \mid A = a\right)$ is an intrinsic GMRF, rendering classical computational methods involving the precision matrix inapplicable.

Alternatively, one may approximate the $\alpha \to \infty$ limit by simulating $\left(\Zbf \mid \Yobs^{(m)} = \ybf\right)$ for a sufficiently large $\alpha$. These simulations leverage the invertibility of $\sigalp$, the covariance matrix of $\Zbf,$ defined as $\sigalp = \sigma^2\sigbf + \alpha \phibf \phibf^\top$. The computations leading to this covariance matrix are identical to those detailed in~\ref{app_distrib_z}. Its inverse, the precision matrix of $\Zbf$, is defined as: 
\begin{equation}\label{eq_qalp}
 \qalp = \sigalp^{-1} = \sigma^{-2}\sqrt{\Mbf}\Sbf^2 \sqrt{\Mbf}^\top + \frac{1}{\alp}\left(\Mbf\phibf\right)\left(\Mbf\phibf\right)^\top.
 \end{equation} 
The structure of $\qalp$ is well-suited for sparse computations. Specifically, one can compute the Cholesky factorization of the sparse matrix $\sqrt{\Mbf}\Sbf^2 \sqrt{\Mbf}^\top$ and subsequently obtain the Cholesky factorization of $\qalp$ via a rank-one update. However, this strategy leads to numerical difficulties that depend on the value of $\alp$ and relies on the approximation of a limit, which prevents exact computation. \\~\

The approach proposed here involves leveraging the matrix $\qalp$ for a fixed $\alpha$ to derive exact simulations of $\Zpost$, thereby avoiding the need for a limiting approximation. This process is decomposed into the following steps:

\begin{enumerate}
	\item Generate simulations of the intrinsic GMRF $\left(\Zbf \mid A = a\right)$ using $\qalp$ (Section~\ref{simu_intrinsic}).
    \item Compute the posterior expectation $\mpost(\ybf)$ using $\qalp$, without limit estimation (Section~\ref{post_exp})
    \item Construct exact simulations of $\Zpost$ by combining the results from the previous two steps (Section~\ref{condi_simu})
\end{enumerate}

The overall procedure is summarized in Section~\ref{overall_algo}.

\subsection{Simulating an intrinsic GMRF}\label{simu_intrinsic}

Here, we detail the procedure to obtain simulations, for a given $a\in \mathbb{R},$ of the intrinsic GMRF $$\Zprior \eqdef \left(\Zbf \mid A = a\right) \sim \mathcal{N}(a\phibf, \sigma^2\sigbf),$$ using the GMRF $\Zbf\sim\mathcal{N}(\veczerm, \qalp^{-1})$. The work of~\cite{simpson2006sampling} can be adapted here and leads to the following result, which is valid for all $\alpha > 0.$

\begin{prop}[Simulation of intrinsic GMRF]\label{prop_simu_intrinsic}
Considering $\Zbf \sim \mathcal{N}(\veczerm, \sigalp)$, then 
$$\left(a\phibf+\Zbf \mid \left[\left(\Mbf\phibf\right)^\top\Zbf = 0 \right]\right) = a\phibf+ \Zbf - \phibf\left(\Mbf\phibf\right)^\top\Zbf \sim \mathcal{N}(a\phibf, \sigma^2\sigbf).$$
\end{prop}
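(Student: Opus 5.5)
The plan is to reduce everything to the standard formula for conditioning a Gaussian vector on a linear constraint, and then to evaluate that formula using the spectral structure of $\sigbf$ from Proposition~\ref{prop_sigma}. The shift by $a\phibf$ is deterministic, so it suffices to treat $\Zbf \sim \mathcal{N}(\veczerm,\sigalp)$ with $\sigalp = \sigma^2\sigbf + \alpha\phibf\phibf^\top$. For the constraint $\wbf^\top\Zbf = 0$ with $\wbf = \Mbf\phibf$, kriging on a linear functional, as in~\cite{simpson2006sampling}, gives that $(\Zbf \mid \wbf^\top\Zbf=0)$ has the law of
$$\Zbf - \sigalp\wbf\,(\wbf^\top\sigalp\wbf)^{-1}\wbf^\top\Zbf .$$
This identity is actually an equality of random variables, because the correction term is independent of $\wbf^\top\Zbf$. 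The work is therefore to compute $\sigalp\wbf$ and $\wbf^\top\sigalp\wbf$.

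The key identity I need is $\sigbf\Mbf\phibf = 0$. By definition, $\sigbf\Mbf\phibf = (\sqrt{\Mbf})^{-\top}f(\Sbf)\sqrt{\Mbf}^{\top}\phibf$, using $(\sqrt{\Mbf})^{-1}\Mbf = \sqrt{\Mbf}^\top$. Now $\phibf$ is proportional to $\mathbf{1}_m$. Since $\Fbf\mathbf{1}_m = 0$ (the gradients of the hat functions sum to zero), we get $\Sbf\sqrt{\Mbf}^\top\mathbf{1}_m = (\sqrt{\Mbf})^{-1}\Fbf\mathbf{1}_m = 0$. Hence $\sqrt{\Mbf}^\top\phibf$ is the unit eigenvector of $\Sbf$ for the eigenvalue $\lambda_0^{(m)}=0$; the normalization comes precisely from $\phibf = \mathbf{1}_m/\lVert\sqrt{\Mbf}^\top\mathbf{1}_m\rVert_2$. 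Because $f(0)=0$, it follows that $f(\Sbf)\sqrt{\Mbf}^\top\phibf = 0$, and so $\sigbf\Mbf\phibf=0$. The same normalization gives $\phibf^\top\Mbf\phibf = \lVert\sqrt{\Mbf}^\top\phibf\rVert_2^2 = 1$.

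From these two facts I obtain $\sigalp\wbf = \alpha\phibf(\phibf^\top\Mbf\phibf) = \alpha\phibf$ and $\wbf^\top\sigalp\wbf = \alpha$. The conditioning formula then reduces to $\Zbf - \phibf(\Mbf\phibf)^\top\Zbf$, which is the stated expression once $a\phibf$ is added back.

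It remains to identify the law. Set $\Pi = \Ibf_m - \phibf\phibf^\top\Mbf$. The resulting vector is centered Gaussian with covariance $\Pi\sigalp\Pi^\top$. Since $\Pi\phibf = 0$, the $\alpha$-term disappears. Since $\sigbf\Mbf\phibf=0$, we have $\Pi\sigbf = \sigbf$, and by symmetry $\sigbf\Pi^\top=\sigbf$. The covariance is therefore $\sigma^2\sigbf$, which gives $\mathcal{N}(a\phibf,\sigma^2\sigbf)$ for every $\alpha>0$.

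The main obstacle is the null-space identity $\sigbf\Mbf\phibf=0$ together with $\phibf^\top\Mbf\phibf=1$. This requires verifying carefully that $\sqrt{\Mbf}^\top\phibf$ is exactly the normalized eigenvector associated with $\lambda_0^{(m)}$. It also relies on the fact that this eigenvalue is simple, so that the rank of $\sigbf$ is $m-1$, as the paper notes; this is what makes the indicator $\mathbf{1}_{\lambda>0}$ in $f$ kill exactly that direction and nothing else. The argument also holds with the lumped mass matrix, since $\Fbf\mathbf{1}_m=0$ does not depend on the mass matrix.
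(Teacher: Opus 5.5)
Your proof is correct and is essentially the paper's argument: the paper likewise computes the covariance of $\bigl(\Ibf_m - \phibf(\Mbf\phibf)^\top\bigr)\Zbf$ using exactly $\sigbf\Mbf\phibf = \veczerm$ and $\phibf^\top\Mbf\phibf = 1$. Your extra conditioning-by-kriging step (with $\sigalp\Mbf\phibf = \alpha\phibf$ and $(\Mbf\phibf)^\top\sigalp\Mbf\phibf = \alpha$) supplies the link between the conditional law and the projection that the paper leaves implicit, with two small corrections: it is the residual $\Zbf - \phibf(\Mbf\phibf)^\top\Zbf$, not the correction term, that is independent of $(\Mbf\phibf)^\top\Zbf$, and simplicity of $\lambda_0^{(m)}$ is never used, since $\ker\Sbf \subset \ker f(\Sbf)$ whatever its multiplicity.
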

\begin{proof}
The vector
\[
\Zbf - \phibf\left(\Mbf\phibf\right)^\top \Zbf
= \left(\Ibf_m - \phibf\left(\Mbf\phibf\right)^\top\right)\Zbf
\]
is a centered Gaussian vector, since it is a linear transformation of the centered Gaussian vector $\Zbf$.
Its covariance is

\begin{align*}
&\left(\Ibf_m - \phibf\left(\Mbf\phibf\right)^\top\right)\cov(\Zbf)\left(\Ibf_m - \phibf\left(\Mbf\phibf\right)^\top\right)^\top  \\ 
=& \left(\Ibf_m - \phibf\left(\Mbf\phibf\right)^\top\right)\left(\sigma^2\sigbf +\alpha\phibf\phibf^\top\right)\left(\Ibf_m - \Mbf\phibf\phibf^\top\right).
\end{align*}
As shown in~\ref{inversibility_kbis}, $\sigbf\Mbf\phibf = \veczerm.$ Then, 
\begin{align*}
&\left(\Ibf_m - \phibf\left(\Mbf\phibf\right)^\top\right)\left(\sigma^2\sigbf +\alpha\phibf\phibf^\top\right)\left(\Ibf_m - \Mbf\phibf\phibf^\top\right) \\
= &~\sigma^2\sigbf + \alpha\phibf\phibf^\top -  2\alpha\phibf\left(\Mbf\phibf\right)^\top \phibf\phibf^\top + \alpha \phibf\left(\Mbf\phibf\right)^\top\phibf\phibf^\top\Mbf\phibf\\
= &~\sigma^2\sigbf + \alpha\phibf\phibf^\top -  2\alpha\phibf\phibf^\top + \alpha \phibf\phibf^\top\text{ using } \phibf^\top\Mbf\phibf = 1 \text{ see~\ref{formula_phibf}}.\\
= &~\sigma^2\sigbf
\end{align*}
\end{proof}

Since the simulation of $\Zbf \sim \mathcal{N}(\veczerm,\sigalp)$ can be carried out by solving a linear system involving $\Lbf_\alp$ the Cholesky factor of $\qalp$, and a vector of independent standard Gaussian components, this yields the following algorithm for simulating $\Zprior.$

\begin{algorithm}[H]
\caption{Simulation of $\Zprior\sim\mathcal{N}(a\phibf, \sigma^2\sigbf)$}
\begin{algorithmic}[1]
\Require $\Mbf, \Fbf,\phibf,\sigma,\alpha,a$ and a vector $\gaussbfm$ of $m$ independent standard Gaussian components
\State $\Sbf \gets \sqrt{\Mbf}^{-1} \, \Fbf \, \sqrt{\Mbf}^{-\top}$
\State $\qbf \gets \sqrt{\Mbf} \, \Sbf^2 \, \sqrt{\Mbf}^{\top}$
\State Compute $\Lbf_\alp$ the Cholesky factor $\qalp$ using sparsity of $\sigma^{-2}\qbf$ and a rank-one update 
\State Solve the linear system $\Lbf_\alp^\top \zbf = \gaussbfm$
\State \Return $\zbf_a = a\phibf+ \zbf - \phibf\left(\Mbf\phibf\right)^\top \zbf$
\end{algorithmic}\label{algo_simuprior}
\end{algorithm}

\subsection{Computing the posterior expectation}\label{post_exp}

The objective here is to compute $\mpost(\ybf)$. The work of~\cite{sire2025spline} establishes an analytical relationship between $$\mpostalp(\ybf)\eqdef \esp\!\left(\Zbf\mid \Yobs^{(m)}=\ybf\right) = \qalp^{-1} \Abf_n^\top \left[\Abf_n\qalp^{-1} \Abf_n^\top + \sigma^2\noise^2\Ibf_n \right]^{-1}\ybf,$$ and $\mpost(\ybf)=\underset{\alpha\to\infty}{\lim}\mpostalp(\ybf).$

\begin{prop}[Computation of the posterior expectation.]\label{compute_mean}
The posterior expectation $\mpost(\ybf)$ can be computed using $\mpostalp(\ybf)$ and $\mpostalp\left(\Abf_n\phibf\right),$ using the relation 
$$\mpost(\ybf)
= \mpostalp(\ybf) 
+ \Bigg(
   \frac{\phibf - \hh\!\left[\mpostalp\!\left(\Abf_n \phibf\right)\right]}
        {\left(\Mbf\phibf\right)^\top \mpostalp\!\left(\Abf_n \phibf\right)}
   - \phibf 
   + \hh\!\left[\mpostalp\!\left(\Abf_n \phibf\right)\right]
  \Bigg) 
  \left(\Mbf\phibf\right)^\top \mpostalp(\ybf),$$

\noindent where $\hh\!\left[\mbf\right] 
= \frac{\mbf - \phibf\left(\Mbf\phibf\right)^\top \mbf}{1 - \left(\Mbf\phibf\right)^\top \mbf},~~\forall \mbf\in\mathbb{R}^m.$
\end{prop}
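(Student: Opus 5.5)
The plan is to make the dependence of $\mpostalp(\ybf)$ on $\alpha$ fully explicit using a Sherman--Morrison expansion, and then to eliminate the $\alpha$-dependent scalars using quantities read off from $\mpostalp$ itself. Set $\Kbf \eqdef \sigma^2\Abf_n\sigbf\Abf_n^\top + \sigma^2\tau^2\Ibf_n$, which is invertible (this is $\sigma^2\Kbis$, see~\ref{inversibility_kbis}). Also set $\mathbf{p}\eqdef\Abf_n\phibf$, $\gamma\eqdef\mathbf{p}^\top\Kbf^{-1}\mathbf{p}$ and $\beta(\ybf)\eqdef\mathbf{p}^\top\Kbf^{-1}\ybf$. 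Since $\qalp^{-1}=\sigalp=\sigma^2\sigbf+\alpha\phibf\phibf^\top$, the bracket in $\mpostalp$ equals $\Kbf+\alpha\mathbf{p}\mathbf{p}^\top$. Sherman--Morrison then gives
\[
\left(\Kbf+\alpha\mathbf{p}\mathbf{p}^\top\right)^{-1}=\Kbf^{-1}-\frac{\alpha\,\Kbf^{-1}\mathbf{p}\mathbf{p}^\top\Kbf^{-1}}{1+\alpha\gamma}.
\]
Multiplying out, the terms in $\phibf$ collapse to $\alpha\beta/(1+\alpha\gamma)\,\phibf$. This yields the decomposition
\[
\mpostalp(\ybf)=\ubf(\ybf)+\frac{\alpha\beta(\ybf)}{1+\alpha\gamma}\,\vbf,
\]
with $\ubf(\ybf)=\sigma^2\sigbf\Abf_n^\top\Kbf^{-1}\ybf$ and $\vbf=\phibf-\sigma^2\sigbf\Abf_n^\top\Kbf^{-1}\mathbf{p}$. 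Letting $\alpha\to\infty$ gives $\mpost(\ybf)=\ubf(\ybf)+\frac{\beta(\ybf)}{\gamma}\vbf$, which is consistent with Proposition~\ref{prop_conv_fe} since $\beta/\gamma=\amnoise$.

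Next, I would extract the scalars through the linear form $\ell(\mbf)\eqdef(\Mbf\phibf)^\top\mbf$. Using $\sigbf\Mbf\phibf=\veczerm$ (from~\ref{inversibility_kbis}) and $\phibf^\top\Mbf\phibf=1$ (from~\ref{formula_phibf}), we get $\ell(\ubf(\ybf))=0$ and $\ell(\vbf)=1$. Hence
\[
\ell\left(\mpostalp(\ybf)\right)=\frac{\alpha\beta(\ybf)}{1+\alpha\gamma}.
\]
Now apply this with $\ybf=\mathbf{p}$, for which $\beta(\mathbf{p})=\gamma$. This gives $r\eqdef\ell(\mpostalp(\mathbf{p}))=\alpha\gamma/(1+\alpha\gamma)$, so $1+\alpha\gamma=1/(1-r)$, and therefore
\[
\frac{\beta(\ybf)}{\gamma}=\frac{\ell\left(\mpostalp(\ybf)\right)}{r}.
\]
It follows that
\[
\mpost(\ybf)=\mpostalp(\ybf)+\left(\frac{1}{r}-1\right)\ell\left(\mpostalp(\ybf)\right)\vbf.
\]

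It remains to express $\vbf$ through $\mpostalp(\mathbf{p})$. Since $\ubf(\mathbf{p})=\phibf-\vbf$, we have $\mpostalp(\mathbf{p})=\phibf-(1-r)\vbf$. Plugging this into the definition of $\hh$, with denominator $1-r$, gives $\hh[\mpostalp(\mathbf{p})]=\phibf-\vbf$, that is, $\vbf=\phibf-\hh[\mpostalp(\mathbf{p})]$. Substituting this into the previous display and distributing $\left(\frac1r-1\right)(\phibf-\hh)$ reproduces exactly the bracket in the statement, with $r=(\Mbf\phibf)^\top\mpostalp(\Abf_n\phibf)$.

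The main obstacle is not any single computation but recognising that $\ell$ annihilates the $\sigbf$-part. This is what allows the unknown $\alpha$-dependent coefficients to be read off from $\mpostalp$ itself. One must also check that $r\neq0$ and $r\neq1$, which holds because $\gamma>0$ (as $\mathbf{p}\neq 0$ and $\Kbf$ is positive definite) and $\alpha>0$.
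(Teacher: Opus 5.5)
Your proof is correct and is essentially the argument the paper relies on. The paper defers this proof to \cite{sire2025spline}, but in the proof of Proposition~\ref{prop_sk} it quotes two identities from that reference: $(\Mbf\phibf)^\top\mpostalp(\xbf)=\alpha\beta(\xbf)/(1+\alpha\gamma)$ (in your notation) and $\hh[\mpostalp(\Abf_n\phibf)]=\sigbf\Abf_n^\top(\Kbis)^{-1}\Abf_n\phibf$ (your $\phibf-\vbf$), which are exactly what your Sherman--Morrison split gives once $(\Mbf\phibf)^\top$ annihilates the $\sigbf$-part; your nondegeneracy check also holds, since $\Abf_n\phibf$ is a nonzero constant vector by the partition of unity, so $\gamma>0$ and $r\in(0,1)$.
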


Using Proposition~\ref{compute_mean} which is valid for all $\alpha > 0$, the posterior expectation $\mpost(\ybf)$ can be computed for any $\ybf\in \mathbb{R}^n,$, provided that $\mpostalp(\ybf)$ is tractable for any $\ybf\in \mathbb{R}^n.$ For the computation of $\mpostalp(\ybf)$, two scenario are distinguished.

\begin{enumerate}
\item The first scenario corresponds to the study of interpolating splines, i.e., setting the noise $\tau = 0$. To facilitate the computations in this situation, we require that the observation points $\mathcal{S} = (\sbf_i)_{i=1}^n$ are included among the triangulation nodes $\cbf_1,\dots,\cbf_m$. More precisely, let $I = \{i_1,\dots, i_n\}$ denote the set of indices such that $s_{k} = c_{i_k}$ for $1 \leq k \leq n$. This requirement is not restrictive in practice, as the triangulation can typically be designed to include the observation points, while simplifying the computations. Indeed, with this hypothesis, we have
$$
(\Abf_n)_{k,j} =
\begin{cases}
1 & \text{if } j = i_k, \\
0 & \text{otherwise},
\end{cases}
$$
for $1\leq k \leq n$ and $1\leq j \leq m,$ leading to $\Abf_n 
\qalp^{-1}\Abf_n^\top = \left[\qalp^{-1}\right]_{I,I},$ by denoting \(\left[\qalp\right]_{I_1,I_2}\) the submatrix of \(\qalp\) consisting of the rows indexed by \(I_1\) and the columns indexed by \(I_2\) for any index sets \(I_1, I_2\). Then, we have the simplified results

$$\begin{cases}
\big[\mpostalp(\ybf)\big]_{\bar{I}} &= \big[\tilde{\qbf}_\alpha\big]_{\bar{I},\bar{I}}^{-1}\big[\tilde{\qbf}_\alpha\big]_{\bar{I},I}\ybf \\
\big[\mpostalp(\ybf)\big]_{I} &= \ybf,
\end{cases}$$

\noindent and the vector $\mpostalp(\ybf)$ can be computed with the following procedure.

\begin{algorithm}[H]
\caption{Computation of $\mpostalp(\ybf)$ in the first scenario}
\begin{algorithmic}[1]
\Require $\Mbf, \Fbf, \phibf, I,\sigma,\alpha,\ybf$
\State $\Sbf \gets \sqrt{\Mbf}^{-1} \, \Fbf \, \sqrt{\Mbf}^{-\top}$
\State $\qbf \gets \sqrt{\Mbf} \, \Sbf^2 \, \sqrt{\Mbf}^{\top}$
\State $\phibf \gets \frac{\vecunm}{\lVert \sqrt{\Mbf}^\top \vecunm \rVert_2}$
\State Compute the Cholesky decomposition of $\left[\tilde{\qbf}_\alpha\right]_{\bar{I},\bar{I}} = \sigma^{-2}\left[\qbf\right]_{\bar{I},\bar{I}} + \frac{1}{\alpha}\left[\Mbf\phibf\right]_{\bar{I}}\left[\Mbf\phibf\right]_{\bar{I}}^\top$ using sparsity of $\sigma^{-2}\left[\qbf\right]_{\bar{I},\bar{I}}$ and a rank-one update
\State Solve $\left[\tilde{\qbf}_\alpha\right]_{\bar{I},\bar{I}}\ombf = \left[\tilde{\qbf}_\alpha\right]_{\bar{I},I}\ybf$ using sparse linear systems
\State \Return $\begin{bmatrix}
[\mpostalp(\ybf)]_{\bar{I}} \\
[\mpostalp(\ybf)]_{I}\end{bmatrix} = \begin{bmatrix}
 \ombf\\ \ybf
\end{bmatrix}$
\end{algorithmic}\label{algo_malp1}
\end{algorithm}

\item The second scenario corresponds to the computation of smoothing splines, i.e., setting the noise $\noise > 0$. In this case, the assumption that the observation points belong to the triangulation nodes is no longer required to obtain computational simplifications. Indeed, one has
$$
\tilde{\qbf}_\alp^{-1} \Abf_n^\top\!\left( \Abf_n \tilde{\qbf}_\alp^{-1} \Abf_n^\top + \sigma^2\noise^2\Ibf_n\right)^{-1}\ybf
=
\left(\sigma^2\noise^2\tilde{\qbf}_\alp+\Abf_n^\top\Abf_n\right)^{-1}\Abf_n^\top\ybf$$
by the Woodbury formula. Using the fact that $\Abf_n^\top\Abf_n$ is sparse by construction of the basis functions $(\psi_j)_{j=1}^{m}$, the decomposition of $\sigma^2\noise^2\tilde{\qbf}_\alp+\Abf_n^\top\Abf_n$ can be efficiently obtained by expressing this matrix as the sum of a sparse matrix and a rank-one update. Algorithm~\ref{algo_malp2} then provides the computation of $\mpostalp(\ybf)$ for $\noise > 0$.

\begin{algorithm}[H]
\caption{Computation of $\mpostalp(\ybf)$ in the second scenario}\label{algo_malp2}
\begin{algorithmic}[1] 
\Require $\Mbf, \Fbf, \phibf, \alpha, \Abf_n, \ybf, \noise > 0, \sigma>0$
\State $\Sbf \gets \sqrt{\Mbf}^{-1} \, \Fbf \, \sqrt{\Mbf}^{-\top}$
\State $\qbf \gets \sqrt{\Mbf} \, \Sbf^2 \, \sqrt{\Mbf}^{\top}$
\State $\phibf \gets \frac{\vecunm}{\lVert \sqrt{\Mbf}^\top \vecunm \rVert_2}$
\State Compute the Cholesky of $\sigma^2\noise^2\tilde{\qbf}_\alp + \Abf_n^\top\Abf_n = \noise^2\qbf + \Abf_n^\top\Abf_n + \frac{\sigma^2\noise^2}{\alp}\left(\Mbf\phibf\right)\left(\Mbf\phibf\right)^\top$ using sparsity of $\noise^2\qbf+ \Abf_n^\top\Abf_n$ and a rank-one update
\State Solve $\left(\sigma^2\noise^2\tilde{\qbf}_\alp + \Abf_n^\top\Abf_n\right)\ombf = \Abf_n^\top\ybf$ using sparse linear systems
\State \Return $\mpostalp(\ybf) = \ombf$
\end{algorithmic}
\end{algorithm}
\end{enumerate}

\subsection{Generating the conditional simulations}\label{condi_simu}

Our approach exploits the prior simulations of $\left(\Zprior = \Zbf \mid A = a\right)$ (Section~\ref{simu_intrinsic}) and the calculation of the posterior expectation (Section~\ref{post_exp}) to obtain simulations of $\Zpost$. To this end, we use the following result, which specifically addresses the simulation of residuals~\cite{chiles}.

\begin{prop}[Simulation of the residuals.]\label{simu_res}
For all $a\in \mathbb{R}$, let $\Zprior \sim \mathcal{N}(a \phibf, \sigma^2\sigbf),$ and $\Ebf \sim \mathcal{N}(0, \Ibf_{n}).$ Then, the vector of residuals
$$\Zres = \mpost\left(\Abf_n \Zprior + \sigma\tau \Ebf\right) - \Zprior$$
is a centered Gaussian vector with covariance matrix $\covpost.$
\end{prop}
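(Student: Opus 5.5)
The plan is to exploit the fact that $\mpost(\cdot)$ is a \emph{linear} map of its argument, and then to compute the mean and covariance of $\Zres$ directly, checking that they match Proposition~\ref{prop_conv_fe}.

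\textbf{Step 1: matrix form of $\mpost$.} Evaluating Proposition~\ref{prop_conv_fe}(i) at the nodes gives $\mpost(\ybf)=\Wbf\ybf$. Write $\Kbf=\Kbis$ and define the scalar $G=\phibf^\top\Abf_n^\top\Kbf^{-1}\Abf_n\phibf$ and the vector $\bbf=\phibf-\sigbf\Abf_n^\top\Kbf^{-1}\Abf_n\phibf$. Then
\[
\Wbf=\sigbf\Abf_n^\top\Kbf^{-1}+\bbf\,G^{-1}\phibf^\top\Abf_n^\top\Kbf^{-1}.
\]
This holds because $\amnoise=G^{-1}\phibf^\top\Abf_n^\top\Kbf^{-1}\ybf$ is linear in $\ybf$. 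In the same matrix form, Proposition~\ref{prop_conv_fe}(ii) gives
\[
\covpost=\sigma^2\left[\bbf G^{-1}\bbf^\top+\sigbf-\sigbf\Abf_n^\top\Kbf^{-1}\Abf_n\sigbf\right].
\]

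\textbf{Step 2: the reproduction property and the mean.} The key identity is $\Wbf\Abf_n\phibf=\phibf$. To see it, set $\ybf=\Abf_n\phibf$: then $\amnoise=1$, so the correction term $\ybf-\amnoise\Abf_n\phibf$ vanishes and $\mpost(\Abf_n\phibf)=\phibf$. Since $\Zres=(\Wbf\Abf_n-\Ibf_m)\Zprior+\sigma\tau\Wbf\Ebf$ is an affine image of the jointly Gaussian pair $(\Zprior,\Ebf)$, it is Gaussian. Its mean is $(\Wbf\Abf_n-\Ibf_m)a\phibf=\veczerm$, for every $a$.

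\textbf{Step 3: the covariance.} Because $\Zprior$ and $\Ebf$ are independent, and $\Kbf=\Abf_n\sigbf\Abf_n^\top+\tau^2\Ibf_n$,
\[
\sigma^{-2}\cov(\Zres)=\Wbf\Kbf\Wbf^\top-\Wbf\Abf_n\sigbf-\sigbf\Abf_n^\top\Wbf^\top+\sigbf.
\]
I would simplify this in three moves.

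First, $\Wbf\Kbf=\sigbf\Abf_n^\top+\bbf G^{-1}\phibf^\top\Abf_n^\top$. Multiplying on the right by $\Wbf^\top$ and using $\phibf^\top\Abf_n^\top\Wbf^\top=(\Wbf\Abf_n\phibf)^\top=\phibf^\top$ gives
\[
\Wbf\Kbf\Wbf^\top=\sigbf\Abf_n^\top\Wbf^\top+\bbf G^{-1}\phibf^\top.
\]
This cancels the third term, leaving $\bbf G^{-1}\phibf^\top-\Wbf\Abf_n\sigbf+\sigbf$.

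Second, expand $\Wbf\Abf_n\sigbf=\sigbf\Abf_n^\top\Kbf^{-1}\Abf_n\sigbf+\bbf G^{-1}\phibf^\top\Abf_n^\top\Kbf^{-1}\Abf_n\sigbf$.

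Third, group the $\bbf G^{-1}$ terms. This yields $\bbf G^{-1}\big(\phibf-\sigbf\Abf_n^\top\Kbf^{-1}\Abf_n\phibf\big)^\top=\bbf G^{-1}\bbf^\top$, using the symmetry of $\sigbf$ and $\Kbf$. The result is exactly $\sigma^{-2}\covpost$ from Step 1.

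\textbf{Main obstacle.} The delicate point is the bookkeeping in Step 3. The terms must cancel without invoking any invertibility of $\sigbf$, which is singular of rank $m-1$. The only tools needed are the reproduction identity $\Wbf\Abf_n\phibf=\phibf$ and the invertibility of $\Kbf$ and $G$, both of which are guaranteed by the discussion in~\ref{inversibility_kbis}.
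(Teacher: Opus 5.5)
Your proof is correct and follows essentially the same route as the paper. Both write $\mpost$ as the linear map $\mathbf{W}=\bigl[\beta^{-1}\bbf\phibf^\top\Abf_n^\top+\sigbf\Abf_n^\top\bigr]\bigl(\Kbis\bigr)^{-1}$, get the zero mean from $\mathbf{W}\Abf_n\phibf=\phibf$ (the paper via the definition of $\bbf$), and expand $\cov(\Zres)$ to recover the matrix form of $\covpost$ from Proposition~\ref{prop_conv_fe}. Your bookkeeping is in fact slightly cleaner: you keep $\mathbf{W}\Abf_n\sigbf$ and its transpose separate and cancel them via the reproduction identity, whereas the paper lumps the non-symmetric cross-covariances into a factor $2$ and writes $\bbf^\top\bbf$ where $\bbf\bbf^\top$ is meant.
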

\begin{proof}
The objective is to compute 

\begin{equation}
\left\{
\begin{aligned}
\esp(\Zres) &= \esp\left(\mpost\left(\Abf_n \Zprior + \sigma\tau \Ebf\right)\right) - a\phibf\\
\cov(\Zres) &= \cov\left[\mpost\left(\Abf_n \Zprior + \sigma\tau \Ebf\right)\right]- 2\cov\left[\mpost\left(\Abf_n \Zprior + \sigma\tau \Ebf\right), \Zprior\right]+ \cov\left[\Zprior\right]
\end{aligned}
\right.
\end{equation}
using 
\begin{align*}
&\cov\left(\Abf_n \Zprior + \sigma\tau \Ebf\right) = \sigma^2\Kbis\\
&\cov\left[\Zprior\right] = \sigma^2\sigbf.
\end{align*}\medskip

\noindent\textbf{Step 1:} \textbf{Compute $\mpost\left(\Abf_n \Zprior + \sigma\tau \Ebf\right).$} Using Proposition~\ref{prop_conv_fe}, have $$\mpost\left(\Abf_n \Zprior + \sigma\tau \Ebf\right) = \ambis \phibf +\sigbf \Abf_n^\top \left(\Kbis\right)^{-1}(\Abf_n \Zprior + \sigma\tau\Ebf - \ambis\Abf_n\phibf)$$

\noindent with $\ambis = \left(\phibf^\top \Abf_n^\top\left(\Kbis\right)^{-1}\Abf_n\phibf\right)^{-1}\phibf^\top \Abf_n^\top\left(\Kbis\right)^{-1}\left(\Abf_n \Zprior + \sigma\tau \Ebf\right).$

Let us denote $\bbf = \phibf - \sigbf\Abf_n^\top\left(\Kbis\right)^{-1}\Abf_n\phibf$ and $\beta =\phibf^\top \Abf_n^\top\left(\Kbis\right)^{-1}\Abf_n\phibf.$ Then,  
\begin{equation}\label{eq_mpost}
\begin{aligned}
\mpost\left(\Abf_n \Zprior + \sigma\tau \Ebf\right) &= \bbf \ambis + \sigbf \Abf_n^\top \left(\Kbis\right)^{-1}\left[\Abf_n \Zprior + \sigma\tau \Ebf\right] \\
&= \beta^{-1}\bbf\phibf^\top\Abf_n^\top\left(\Kbis\right)^{-1}\left[\Abf_n \Zprior + \sigma\tau \Ebf\right] + \sigbf \Abf_n^\top \left(\Kbis\right)^{-1}\left[\Abf_n \Zprior + \sigma\tau \Ebf\right]\\
&= \left[\beta^{-1}\bbf\phibf^\top\Abf_n^\top+\sigbf \Abf_n^\top \right]\left(\Kbis\right)^{-1}\left[\Abf_n \Zprior + \sigma\tau \Ebf\right].
\end{aligned}
\end{equation}

\noindent\textbf{Step 2:} \textbf{Compute $\esp\left[\mpost\left(\Abf_n \Zprior + \sigma\tau \Ebf\right)\right]$ and $\cov\left[\mpost\left(\Abf_n \Zprior + \sigma\tau \Ebf\right)\right]$}. From step 1, we have

\begin{align*}
\esp\left[\mpost\left(\Abf_n \Zprior + \sigma\tau \Ebf\right)\right] &= a\left[\beta^{-1}\bbf\phibf^\top\Abf_n^\top+\sigbf \Abf_n^\top \right]\left(\Kbis\right)^{-1}\Abf_n \phibf\\
&= a\left(\bbf + \sigbf\Abf_n^\top\left(\Kbis\right)^{-1}\Abf_n\phibf\right)\\
&= a\phibf \text{ from the definition of }\bbf.
\end{align*}

For the covariance, we obtain

\begin{align*}
\cov\left[\mpost\left(\Abf_n \Zprior + \sigma\tau \Ebf\right)\right] &= \sigma^2\left[\beta^{-1}\bbf\phibf^\top\Abf_n^\top+\sigbf \Abf_n^\top \right]\left(\Kbis\right)^{-1}\left[\beta^{-1}\bbf\phibf^\top\Abf_n^\top+\sigbf \Abf_n^\top \right]^\top\\
& = \sigma^2\left[\beta^{-1}\bbf^\top\bbf + \left[2\beta^{-1}\bbf\phibf^\top\Abf_n^\top + \sigbf \Abf_n^\top\right]\left(\Kbis\right)^{-1}\Abf_n\sigbf\right].
\end{align*}\medskip

\noindent\textbf{Step 3:} \textbf{Compute $\cov\left[\mpost\left(\Abf_n \Zprior + \sigma\tau \Ebf\right), \Zprior\right].$} Using Equation~\ref{eq_mpost}, we have 
\begin{align*}
\cov\left[\mpost\left(\Abf_n \Zprior + \sigma\tau \Ebf\right), \Zprior\right] &= \left[\beta^{-1}\bbf\phibf^\top\Abf_n^\top+\sigbf \Abf_n^\top \right]\left(\Kbis\right)^{-1}\cov\left[\Abf_n \Zprior + \sigma\tau \Ebf,\Zprior\right] \\
&= \sigma^2\left[\beta^{-1}\bbf\phibf^\top\Abf_n^\top+\sigbf \Abf_n^\top \right]\left(\Kbis\right)^{-1}\Abf_n \sigbf
\end{align*}

\noindent\textbf{Step 4:} \textbf{Gathering the results.} It comes $\esp(\Zres) = 0$, and using $\cov(\Zprior) = \sigma^2\sigbf,$ it leads to 
\begin{align*}
\cov(\Zres) &= \sigma^2\left[\bbf^\top\beta^{-1}\bbf -\sigbf\Abf_n^\top\left(\Kbis\right)^{-1}\Abf_n \sigbf +\sigbf\right] \\
&= \underset{\alpha \to \infty}{\lim} \cov\left(\Zbf\mid \Yobs^{(m)}=\ybf\right) \text{ using Proposition~\ref{prop_conv_fe}}\\
&= \covpost
\end{align*}
\end{proof}

Then, the strategy adopted here to obtain a simulation of $\Zpost \sim \mathcal{N}(\mpost(\ybf),\covpost)$ is to simulate $\Zprior\sim\mathcal{N}(a\phibf, \sigma\sigbf)$ for a given $a$ and then return
\begin{equation}\label{eq_zpost}
\Zpost = \mpost(\ybf) + \mpost\left(\Abf_n \Zprior + \sigma\tau \Ebf\right) - \Zprior.
\end{equation}
As the result does not depend on $a,$ we work with $a=0$, without loss of generality. It leads to the following overall procedure, detailed in Section~\ref{overall_algo}.

\subsection{Overall simulation algorithm}\label{overall_algo}

Following the procedure outlined in the previous sections, Algorithm~\ref{algo_overall} presents the method for generating spline simulations, with or without observation noise.

\begin{algorithm}[H]
\caption{Generation of splines simulations}\label{algo_overall}
\begin{algorithmic}[1]
\Require $\Mbf, \Fbf,\phibf,\alpha, \noise,\sigma,\ybf$, a vector $\gaussbfm$ of $m$ independent standard Gaussian components
\State Generate a simulation $\zbf_0$, centered with covariance matrix $\sigma^2\sigbf$, using Algorithm~\ref{algo_simuprior}
\If{$\noise = 0$} 
\State \textbf{Require :} $I$
\State Compute $\mpostalp(\ybf)$, $\mpostalp(\left[\zbf_0\right]_{I})$ and $\mpostalp\left(\Abf_n\phibf\right)$ using Algorithm~\ref{algo_malp1}
\State Compute $\mpost(\ybf)$ and $\mpost(\left[\zbf_0\right]_{I})$ using Proposition~\ref{compute_mean} 
\Else 
\State \textbf{Require :} $\Abf_n$ and $\gaussbfn$ a vector of $n$ independent standard Gaussian components
\State Compute $\mpostalp(\ybf)$, $\mpostalp(\Abf_n\zbf_0+\sigma\noise\gaussbfn)$ and $\mpostalp\left(\Abf_n\phibf\right)$ using Algorithm~\ref{algo_malp2}
\State Compute $\mpost(\ybf)$ and $\mpost(\Abf_n\zbf_0+\sigma\noise\gaussbfn)$ using Proposition~\ref{compute_mean} 
\EndIf
\State \Return $\zpost = \mpost(\ybf) +  \mpost(\Abf_n\zbf_0) - \zbf_0$
\end{algorithmic}
\end{algorithm}

Note that these results are theoretically valid for any $\alpha > 0$. However, in practice, this parameter cannot be chosen too large or too small in order to avoid numerical instabilities. As detailed in~\cite{sire2025spline} and adapted here to account for the additional parameter $\sigma$, it is important to ensure that
\[
\frac{\sigma}{\sqrt{\alpha}} \in [\lambda_1^{(m)}, \, \lambda_{m-1}^{(m)}],
\]
to control the condition number of the matrices involved in the Cholesky decomposition. The power-iteration method can be used to estimate these eigenvalues~\cite{mises1929praktische}.

\section{Anisotropy and parameter estimation} \label{sec_likeli}

\subsection{Anisotropic splines}

These spline-based simulations are particularly valuable as they quantify uncertainty alongside the predicted field. However, the covariance kernels introduced above, $K_1$ and its finite-element approximation $K_1^{(m)}$, are isotropic, implying that the modeled phenomenon must exhibit identical behavior in all spatial directions. A natural way to incorporate directional anisotropy, while preserving the exact modeling framework presented here, is to modify the parametrization of the manifold $\left(\manif, g\right)$ via its Riemannian metric $g$. This strategy is described in detail in \cite{pereira_desassis_allard_2022} and is briefly summarized below.

Considering that $\manif$ is embedded in $\mathbb{R}^d$, for any point $\sbf \in \manif$ we consider a coordinate chart $(U, x)$ such that $\sbf \in U$, where $U$ is an open subset of $\manif$ and $x : U \to \mathbb{R}^d$ is a homeomorphism onto an open subset of $\mathbb{R}^d$. The Riemannian metric expressed in local coordinates is given by
\begin{equation}
g_{\sbf}(\ubf_{\sbf}, \vbf_{\sbf}) 
= \left(\ubf_{\sbf}^{x}\right)^{\top} \Gbf^{x}(\sbf)\, \vbf_{\sbf}^{x},
\end{equation}
where $\ubf_{\sbf}^{x}$ and $\vbf_{\sbf}^{x}$ denote the coordinate representations of the tangent vectors $\ubf_{\sbf}, \vbf_{\sbf} \in T_{\sbf}\manif$ in the chart $(U,x)$, and $\Gbf^{x}(\sbf)$ is the matrix representation of the Riemannian metric at $\sbf$ in these local coordinates. The matrix $\Gbf^{x}(\sbf)$ is symmetric positive definite and admits the decomposition
\[
\Gbf^x(\sbf) = \Rbf^x(\sbf)\, \Dbf^x(\sbf)^2 \left(\Rbf^x(\sbf)\right)^\top,
\]
which can be interpreted as a local deformation operator. For $d \in \{2,3\}$, the matrix $\Rbf^x(\sbf)$ corresponds to a rotation, while $\Dbf^x(\sbf)$ is a diagonal matrix containing scaling factors. Under this interpretation, the Riemannian metric induces a local deformation of the manifold by first applying a rotation and then an anisotropic scaling that emphasizes preferred directions. These rotations and scalings are parametrized and can be specified locally on each element of the triangulation. The corresponding parameters are collected in a vector $\param$, which modifies the inner product on $\lm$ and, consequently, the matrices $\Mbf$ and $\Sbf$.

\subsection{Parameter estimation}\label{estim_param}

We then have unknown parameters $\param$ defining the anisotropies, and the variance $\sigma^2.$ A standard approach is to rely on the universal kriging statistical model for the observations and to maximize the associated log-likelihood. It corresponds to the log-density of the random vector $\left(\Yobs^{(m)} \mid A = a\right)$, which is maximized with respect to $\left(a, \sigma^2, \param\right)$

$$\llik\left(a, \sigma, \param\right) = -\frac{1}{2} \left(n\log(2\pi)+\log \lvert \sigma^2\Kbis \rvert + \sigma^{-2}\left(\ybf - a\Abf_n \phibf\right)^\top \left(\Kbis\right)^{-1}\left(\ybf - a \Abf_n \phibf\right) \right),$$

\noindent where $\param$ impacts both $\Kbf$ and $\phibf.$ The optimization leads to 
\begin{equation}
\left\{
\begin{aligned}
a^\star &= \amnoise\\
\sigma^\star &= \left[\frac{1}{n}\left(\ybf - a\Abf_n\phibf\right)^\top\left(\Kbis\right)^{-1}\left(\ybf - a \Abf_n\phibf\right)\right]^{1/2}
\end{aligned}
\right.
\end{equation}
and we end up maximizing in $\param$ the concentrated log-likelihood~\cite{jones2001taxonomy}
$$\llikc(\param) = -n \log\left(\sigma^\star\right) - \frac{1}{2}\log \lvert\Kbis\rvert.$$

The computation of $\sigma^\star$ and $\lvert \Kbis \rvert$ is detailed in~\cite{sire2025spline}. Note that, in the smoothing spline setting, the noise variance $\noise > 0$ can be estimated jointly with $\param$.

\section{Application test cases}\label{sec_appli}

Here, we present two application test cases involving the prediction of smooth phenomena from a limited number of observations. The first application concerns temperature prediction on the Earth, modeled as a spherical surface. The second considers an analytical function defined on a cylindrical surface. In both cases, anisotropies are estimated in local coordinate charts, namely spherical coordinates for the sphere and cylindrical coordinates for the cylinder. For both test cases, the locations of the observation points are fully controlled, so we consider the first scenario and focus on the interpolation problem. The results obtained in the second scenario are similar and are provided in the supplementary material~\cite{charliesire_2026}.

\subsection{Prediction on Earth surface}\label{earth}

Here, we consider the daily analysed Sea Surface Temperature (SST) field for December 26$^\text{th}$, 2023, obtained from the Copernicus satellite SST dataset~\cite{c3s_sst_2025}. It corresponds to a gap-free, gridded estimate of sea surface temperature at approximately $20~\mathrm{cm}$ depth, obtained through optimal interpolation combining observations from multiple satellite sensors. The SST field evolves smoothly over the oceans and exhibits strong anisotropy, as latitude is expected to have a major influence on temperature values. 

We collect the data on a latitude-longitude grid with a $1.5^\circ \times 1.5^\circ$ resolution, which defines the triangulation mesh $(\cbf_j)_{j=1}^{n}$. Obviously, the data points are restricted to oceanic regions only. For the prediction study, we use only $n = 20$ observation points $\ybf = \left(y(\sbf_i)\right)_{i=1}^{n}$, in order to illustrate the effectiveness of splines for reconstructing a smooth phenomenon from a limited number of observations. The observation locations are selected using a Maximin Latin Hypercube Sampling design (LHS,~\cite{kenny2000algorithmic}). In the first scenario, the observation points must belong to the mesh nodes $(\cbf_j)_{j=1}^{n}$. The final design is therefore obtained by selecting the mesh nodes closest to the points generated by the LHS. The remaining mesh nodes, for which the data values are known, are then used as a validation dataset. For the anisotropic model, two parameters were considered: the rotation angle in spherical coordinates and the ratio between the two scaling parameters in this coordinate chart. These parameters were estimated by maximum likelihood, as described in Section~\ref{estim_param}. 
\begin{figure}[h]
\centering
\includegraphics[width=\textwidth]{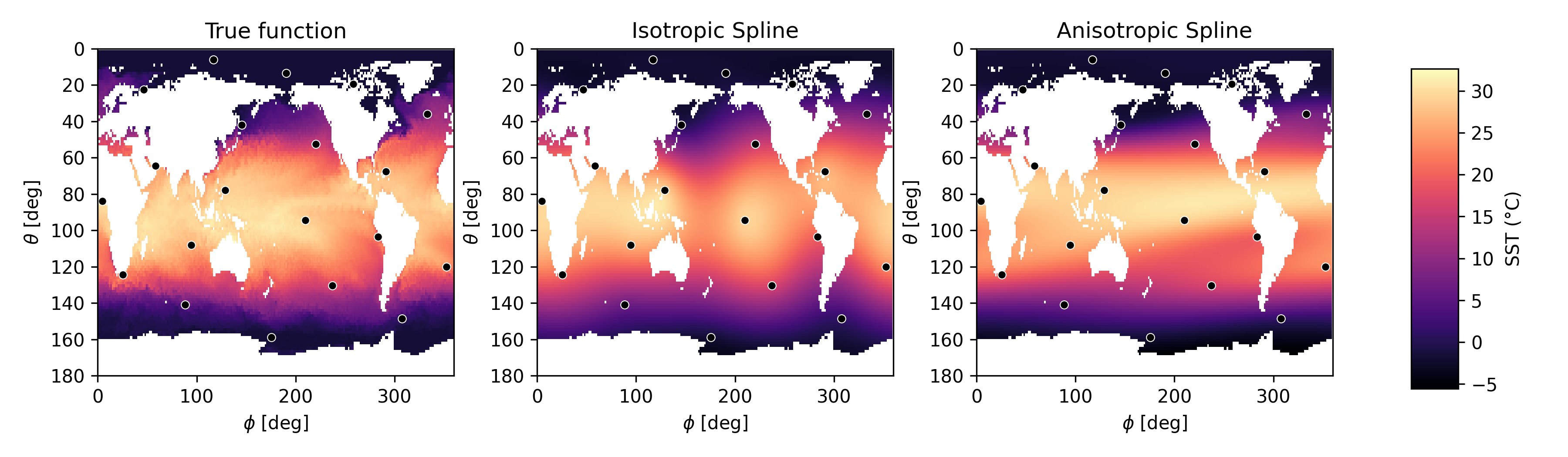}
\caption{Predictions of the SST ($^\circ C$) on Earth, illustrated in 2D using the spherical coordinates $(\theta,\phi).$ $n=20$ observation points are shown as black dots. 
Left: true values. 
Middle: isotropic splines. Right : Anisotropic splines. }
\label{pred_earth}
\end{figure} 

Figure~\ref{pred_earth} displays the prediction mean obtained on the sphere over the oceanic regions using the isotropic and the anisotropic spline. It clearly highlights the influence of anisotropy along the longitudinal direction, which substantially improves the prediction, as illustrated by the distribution of prediction errors in Figure~\ref{errors_earth} and the associated Root Mean Square Error (RMSE). Figure~\ref{uncert_earth} displays the uncertainty associated with the prediction, obtained from $500$ simulations, along the line of longitude closest to $\phi = 170^\circ$ in the triangulation mesh, crossing the Pacific Ocean. Once again, the anisotropic splines better capture the behavior of the phenomenon. However, this figure also emphasizes the importance of properly assessing uncertainty. In the isotropic case, although the prediction mean is less accurate, the associated uncertainty remains consistent. For example, when computing the $95\%$ prediction interval over all points of the triangulation mesh, the true SST value lies within this interval for more than $98\%$ of the points in the isotropic setting. 

However, such coverage is not a relevant performance criterion, since an arbitrarily large prediction variance would ensure that all true values fall within a given confidence interval. As discussed in~\cite{gneiting2007strictly}, a relevant metric that promotes a proper balance between predictive accuracy and uncertainty quantification is obtained by considering, at a prediction point $\sbf$, the score
\begin{equation}\label{eq_score}
S(\sbf) = -\left(\frac{\yhat(\sbf) - \ytrue(\sbf)}{\sigmahat^2(\sbf)}\right)^2 
- \log\left(\sigmahat^2(\sbf)\right),
\end{equation}
which is directly related to the log-likelihood of the Gaussian predictive distribution at $\sbf$, with predictive mean $\yhat(\sbf)$ and predictive variance $\sigmahat^2(\sbf)$. This score is computed at all points of the triangulation mesh, and its distribution is shown in Figure~\ref{scores_earth}, with higher scores indicating better predictive performance. The results once again confirm the benefit of incorporating anisotropy in the spline model for this phenomenon. 

\begin{figure}
    \centering
    
    \begin{subfigure}[t]{0.48\textwidth}
        \centering
        \includegraphics[width=\linewidth]{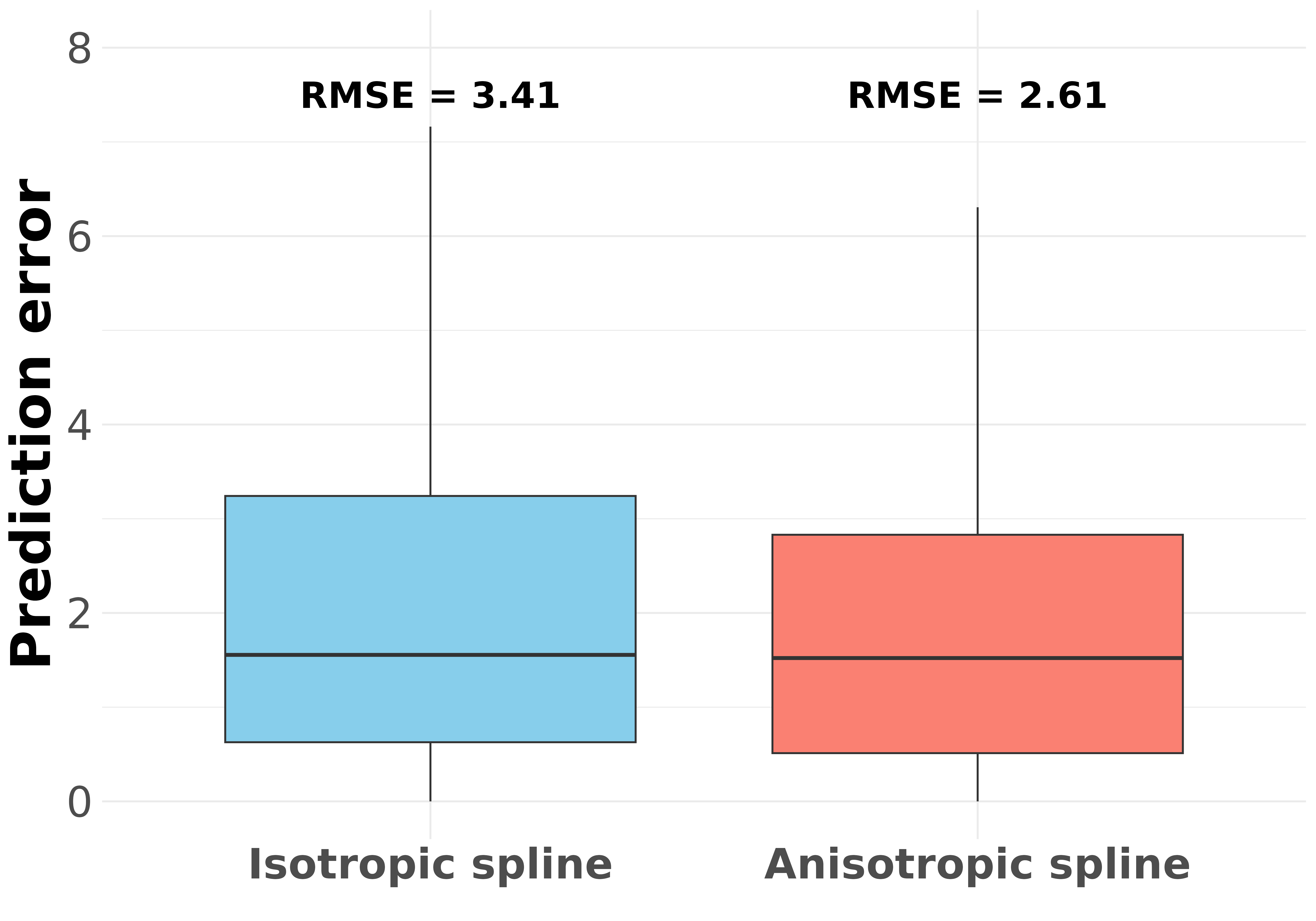}
\caption{Distribution of the prediction errors and associated RMSE in the isotropic and the anisotropic case.}
        \label{errors_earth}
    \end{subfigure}
    \hfill
    \begin{subfigure}[t]{0.48\textwidth}
        \centering
        \includegraphics[width=\linewidth]{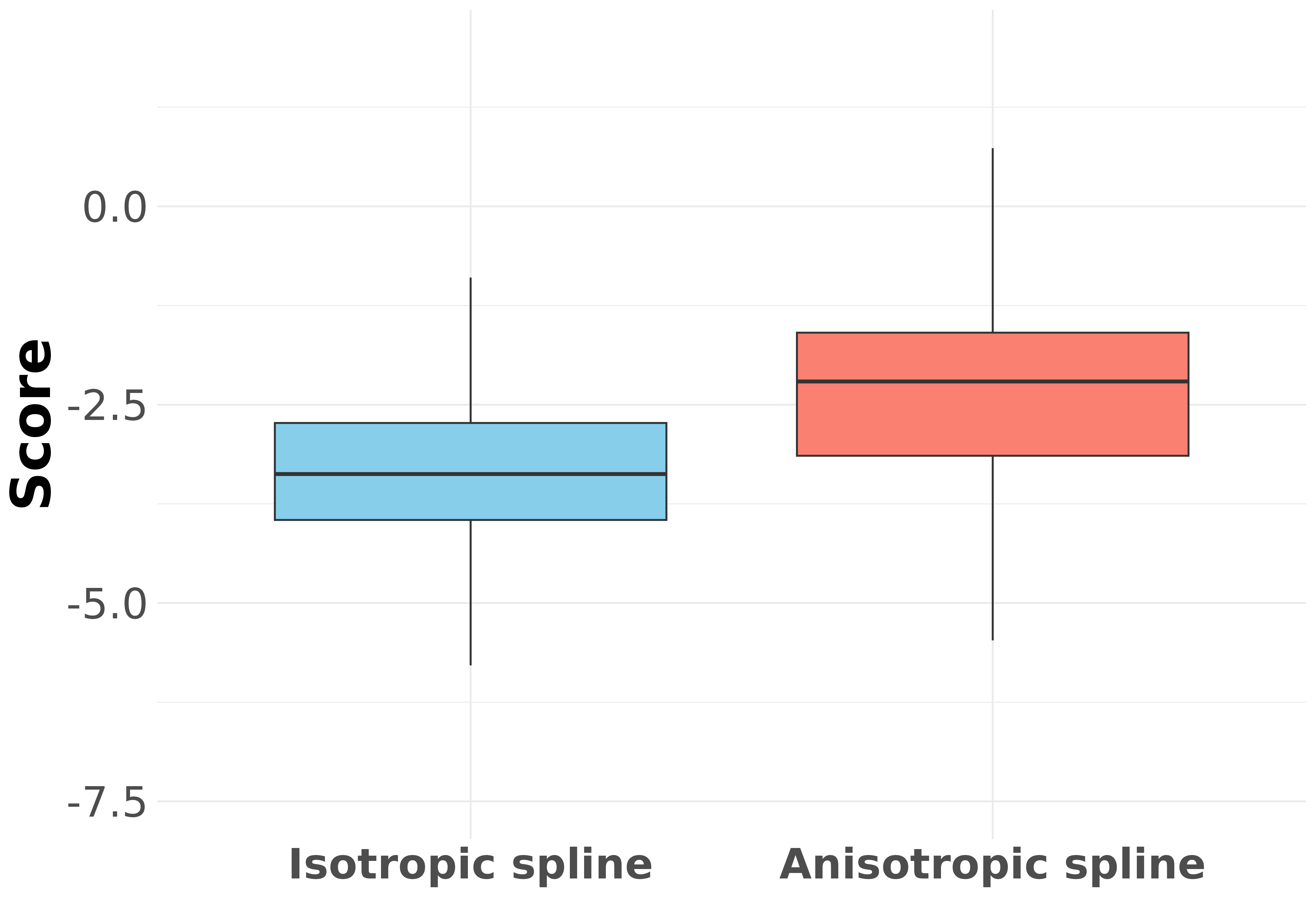}
        \caption{Distribution of the scores for the isotropic and the anisotropic prediction.}
        \label{scores_earth}
    \end{subfigure}
\caption{Comparison of the prediction errors and scores for the isotropic and anisotropic SST predictions on the Earth.}
\end{figure}

\begin{figure}
    \centering
    \begin{subfigure}[t]{0.48\textwidth}
        \centering
        \includegraphics[width=\linewidth]{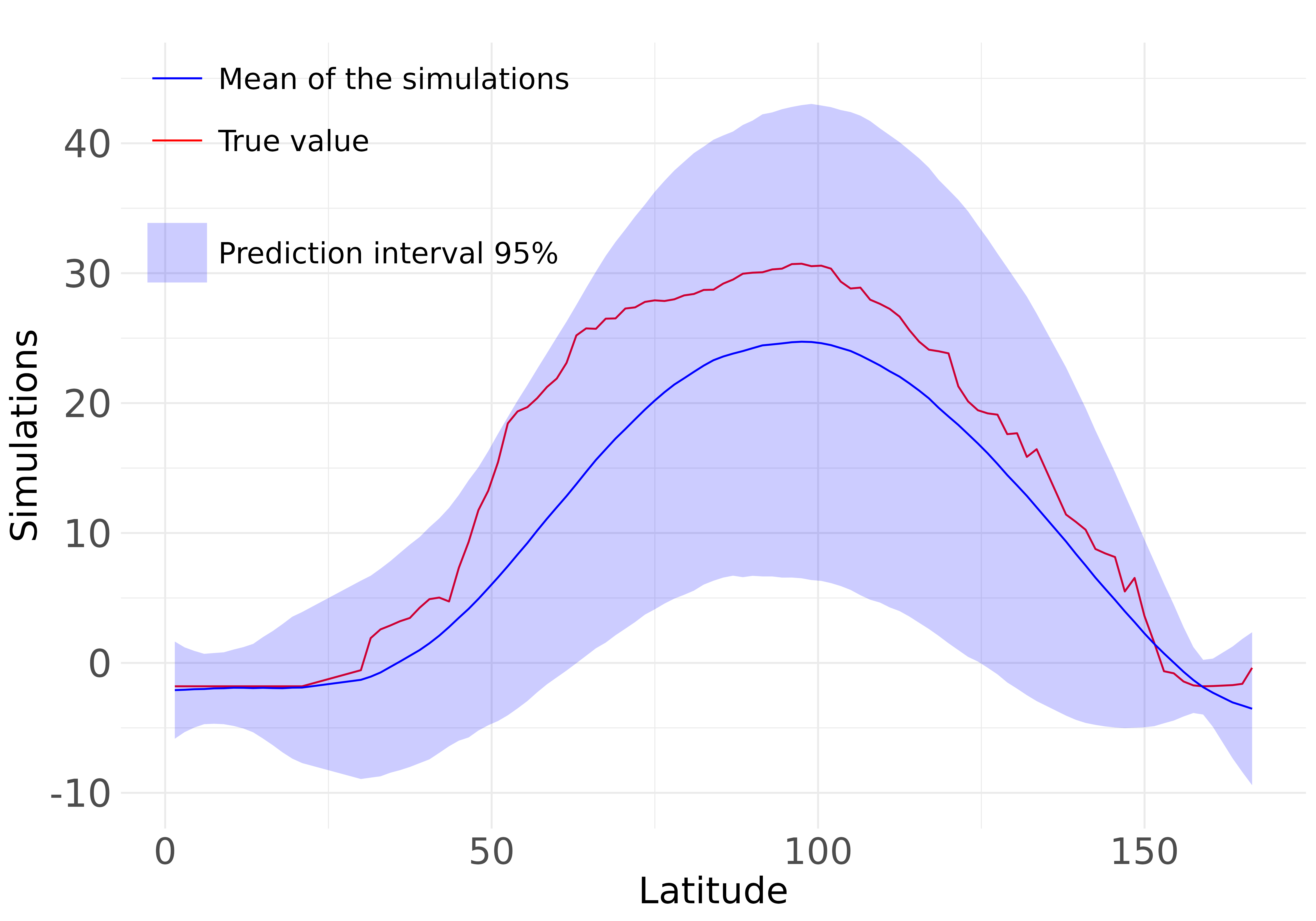}
\caption{Predictive distribution in the isotropic case.}
    \end{subfigure}
    \hfill
    \begin{subfigure}[t]{0.48\textwidth}
        \centering
        \includegraphics[width=\linewidth]{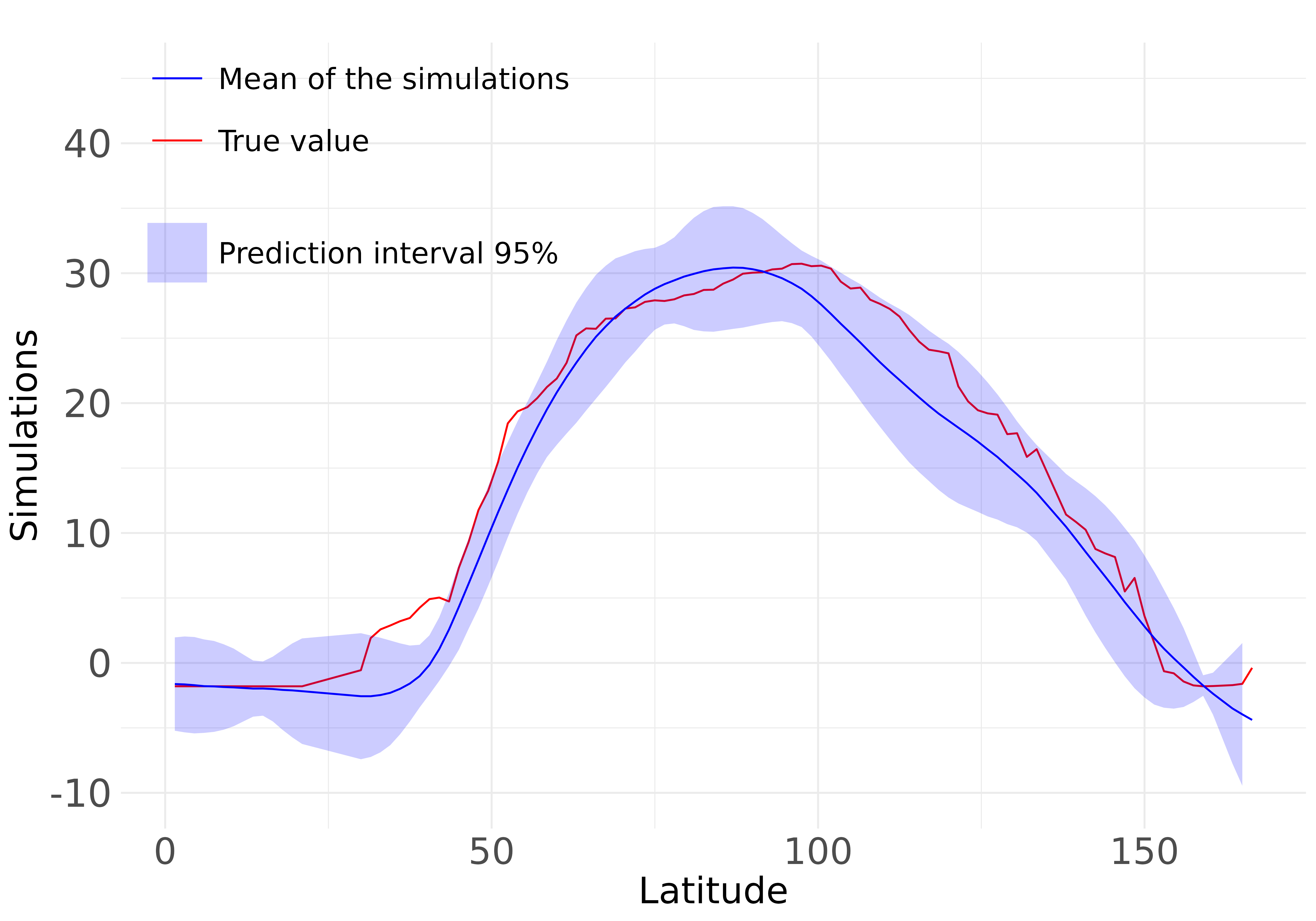}
        \caption{Predictive distribution in the anisotropic case.}
    \end{subfigure}
    
    \caption{Comparison between the uncertainty quantification in the isotropic and the anisotropic SST prediction on the Earth, along the longitude closest to $170^\circ$.}
    \label{uncert_earth}
\end{figure}

Note that the case of the sphere is particularly interesting, since the eigenfunctions of $-\Delta$ are the spherical harmonics~\cite{dai2013spherical}. As detailed in~\cite{sire2025spline}, the kernel
\[
K_1(\sbf_1, \sbf_2) 
= \sum_{k \in \mathbb{N}^\star} \frac{1}{\lambda_k^2} 
\phi_k(\sbf_1)\phi_k(\sbf_2)
\]
can therefore be computed explicitly by truncating the series expansion. In practice, the sum is truncated at $K = 40$, following the recommendations of~\cite{keller2019thin}. Classical kriging operations can then be performed using this kernel. The resulting predictions serve for comparison with the finite element approximation, in order to validate our numerical approach, and are provided in~\ref{app_harmonics}.

\subsection{Prediction on the surface of a cylinder}

Here, we consider a cylinder of radius $1$ and height $20$, since in many industrial applications the height of a cylinder is typically much larger than its radius (e.g., in the study of subsea pipelines~\cite{seth2021buckling}).  
We define $\mathcal{C}$ as the surface under investigation:
\[
\mathcal{C} = \left\{ (x,y,z) \in \mathbb{R}^2 \times [0,20] \mid x^2 + y^2 = 1 \right\}.
\]  
As a case study, we consider the three-dimensional Franke function~\cite{franke1979critical}, which is widely used for interpolating smooth functions~\cite{challacombe2015n,franke1996localization}. It is defined on $[0,1]^3$ by
\[
\begin{aligned}
g(x,y,z) 
&= 0.75 \exp\!\left(
- \frac{
(a_x(9x - 2))^2 + (a_y(9y - 2))^2 + (a_z(9z - 2))^2
}{4}
\right)
\\[0.6em]
&\quad + 0.75 \exp\!\left(
- \left(
\frac{(a_x(9x + 1))^2}{49} + \frac{(a_y(9y + 1))^2}{10} + \frac{(a_z(9z + 1))^2}{10}
\right)
\right)
\\[0.6em]
&\quad + 0.5 \exp\!\left(
- \frac{
(a_x(9x - 7))^2 + (a_y(9y - 3))^2 + (a_z(9z - 5))^2
}{4}
\right)
\\[0.6em]
&\quad - 0.2 \exp\!\left(
- \left(
(a_x(9x - 4))^2 + (a_y(9y - 7))^2 + (a_z(9z - 5))^2
\right)
\right),
\end{aligned}
\]
where $a_x$, $a_y$, and $a_z$ are coefficients that introduce anisotropy, set to $a_x = 0.4$, $a_y = 0.4$, and $a_z = 1$. The phenomenon on $\mathcal{C}$ is then modeled by the function $\tilde{g}$ defined as
\[
\forall (x,y,z)\in\mathcal{C}, \quad 
\tilde{g}(x,y,z) = g\left(\frac{x+1}{2}, \frac{y+1}{2}, \frac{z}{20}\right),
\]
so that $g$ is applied on $[0,1]^3$.

A triangulation mesh is constructed using a regular grid in the cylindrical coordinate chart $(\theta, z)$, with a $5^\circ$ resolution in $\theta$ and a resolution of $0.5$ in $z$. A total of $n=10$ observation points are selected as described in Section~\ref{earth}. The anisotropy parameters are estimated using the same approach as before, i.e. with the rotation angle and the ratio between the scaling parameters in the cylindrical coordinate chart. The resulting prediction mean is displayed in Figure~\ref{pred_cylinder}, while the prediction errors are shown in Figure~\ref{errors_cylinder}.

\begin{figure}
    \centering
    \begin{subfigure}{0.9\textwidth}  
        \centering
        \includegraphics[width=\textwidth]{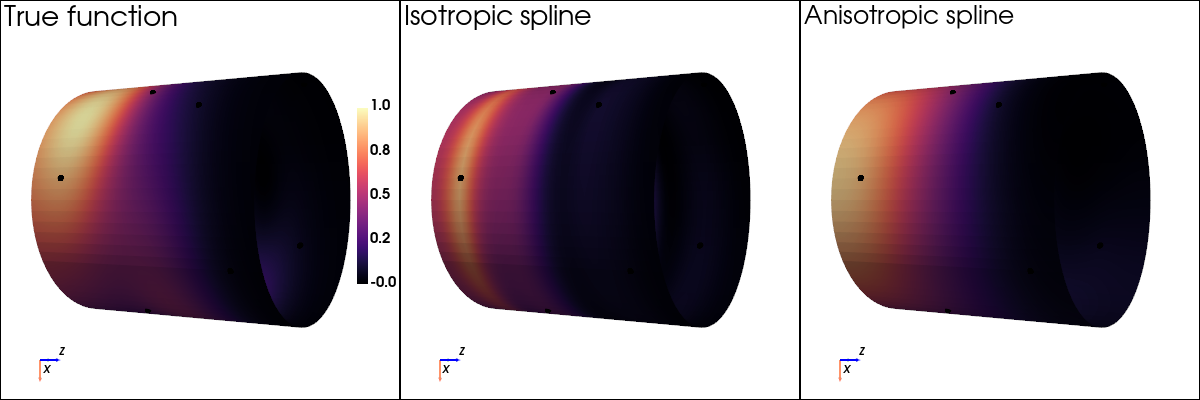} 
\caption{3D plots of the predictions compared to the true function. The cylinder's height is normalized for visualization purposes.}
    \end{subfigure}
    
    \vspace{1em} 
    
    \begin{subfigure}{\textwidth}
        \centering
        \includegraphics[width=\textwidth]{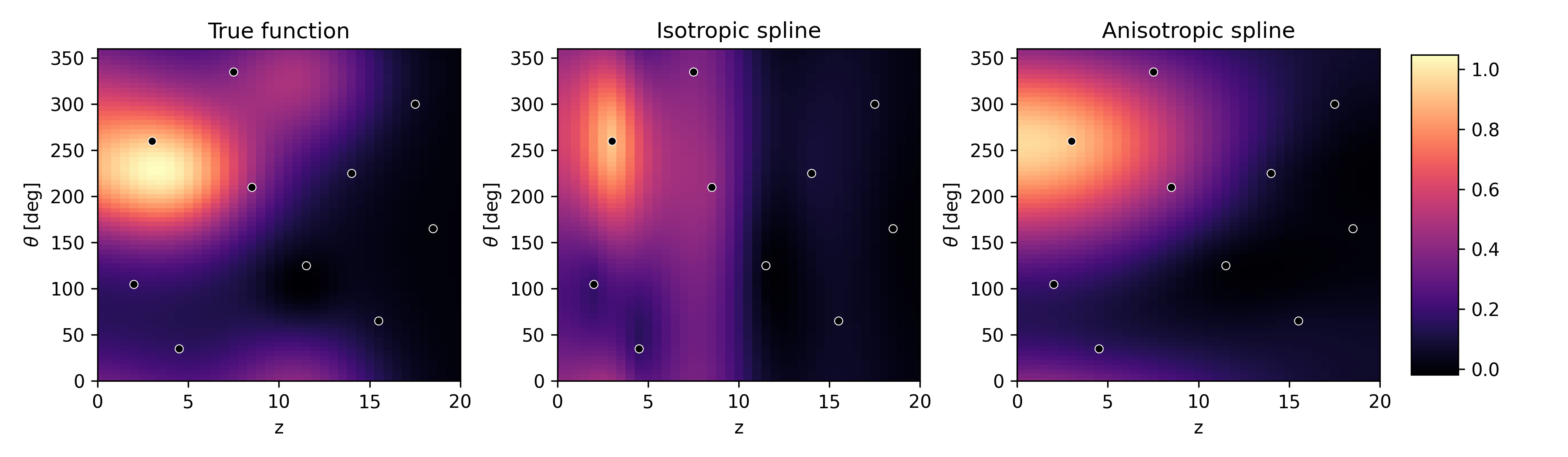} 
\caption{2D plots of the predictions compared to the true function. The $x$-axis represents the longitudinal direction, while the $y$-axis represents the polar angle in degrees.}
    \end{subfigure}
    
\caption{Predictions on the cylinder. Left: true function. Middle: isotropic spline. Right: anisotropic spline. $n=20$ observation points are shown as black dots.}
    \label{pred_cylinder}
\end{figure}

\begin{figure}
    \centering
    \begin{subfigure}[t]{0.48\textwidth}
        \centering
        \includegraphics[width=\linewidth]{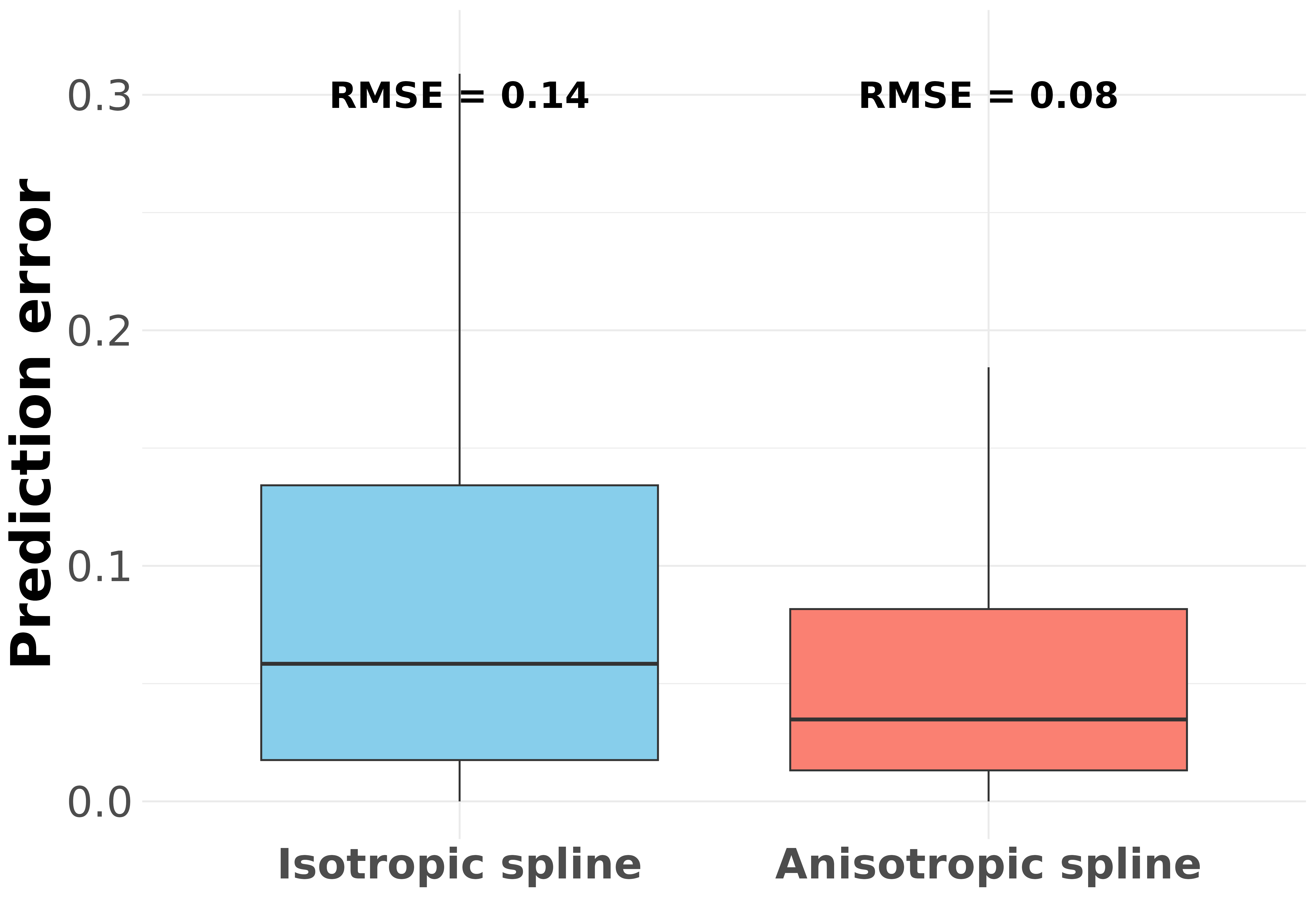}
\caption{Distribution of the prediction errors and associated RMSE in the isotropic and the anisotropic case.}
        \label{errors_cylinder}
    \end{subfigure}
    \hfill
    \begin{subfigure}[t]{0.48\textwidth}
        \centering
        \includegraphics[width=\linewidth]{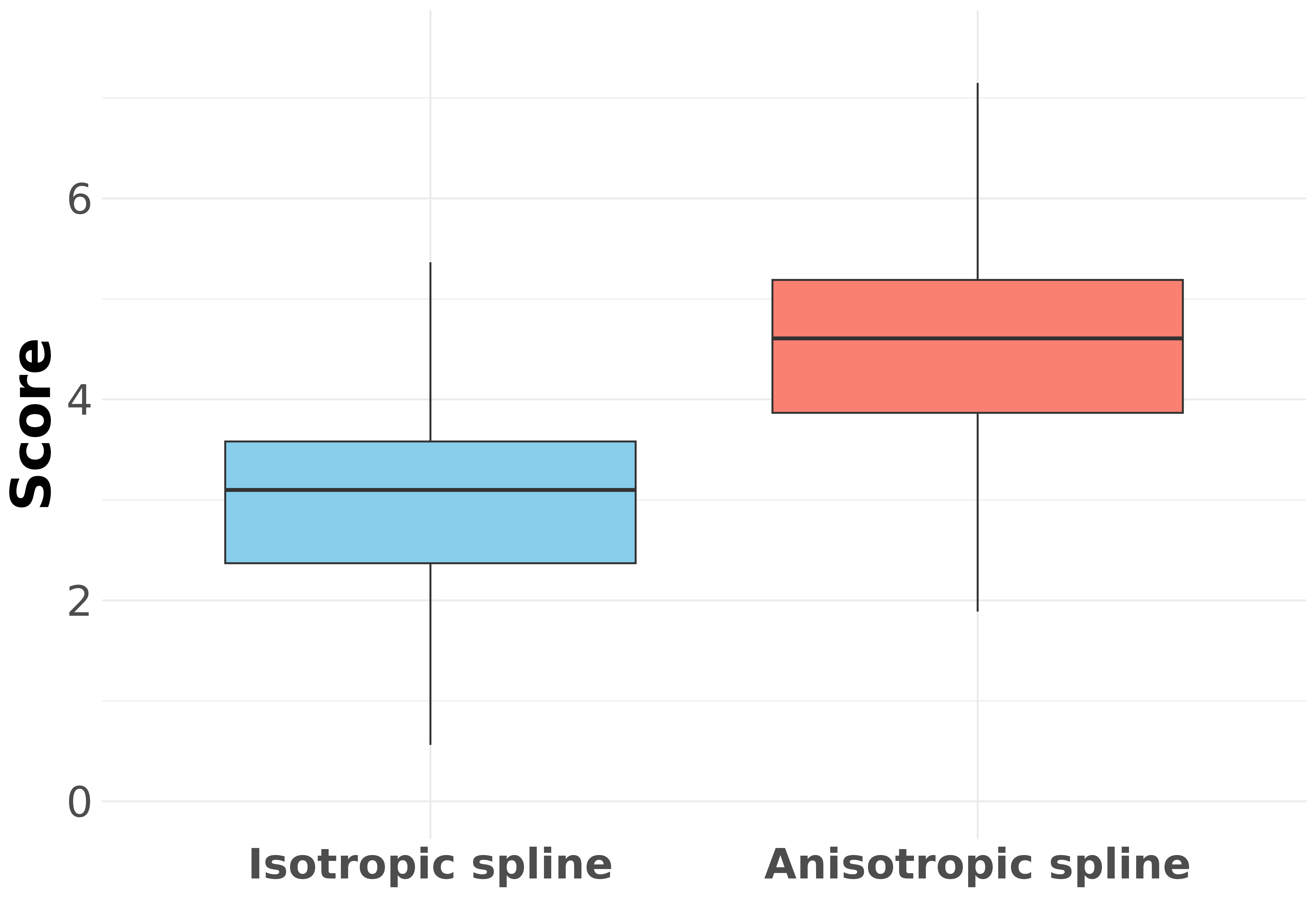}
        \caption{Distribution of the scores for the isotropic and the anisotropic prediction.}
        \label{scores_cylinder}
    \end{subfigure}
\caption{Comparison of the prediction errors and scores for the isotropic and anisotropic predictions on the cylinder.}
\end{figure}

Once again, estimating anisotropy substantially improves predictive accuracy. The uncertainty quantification appears consistent with the true values, as illustrated in Figure~\ref{uncert_cylinder}, which presents the $95\%$ prediction interval along the line $\theta = 270^\circ$. Finally, the distribution of the score defined in Equation~\ref{eq_score}, which balances predictive accuracy and uncertainty quantification, further confirms the improved performance obtained when incorporating anisotropy (see Figure~\ref{scores_cylinder}).

\begin{figure}
    \centering
    \begin{subfigure}[t]{0.48\textwidth}
        \centering
        \includegraphics[width=\linewidth]{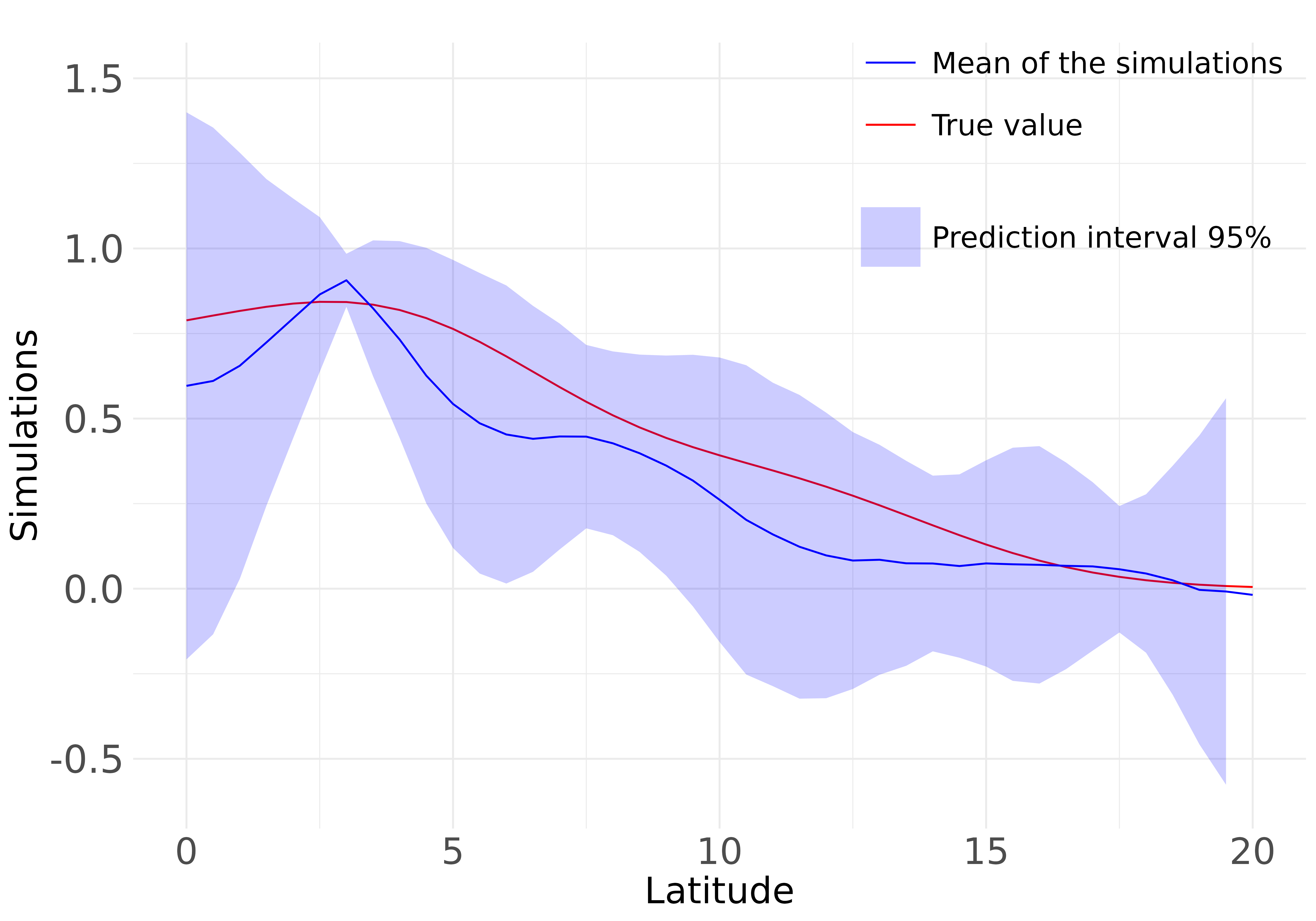}
\caption{Predictive distribution in the isotropic case.}
    \end{subfigure}
    \hfill
    \begin{subfigure}[t]{0.48\textwidth}
        \centering
        \includegraphics[width=\linewidth]{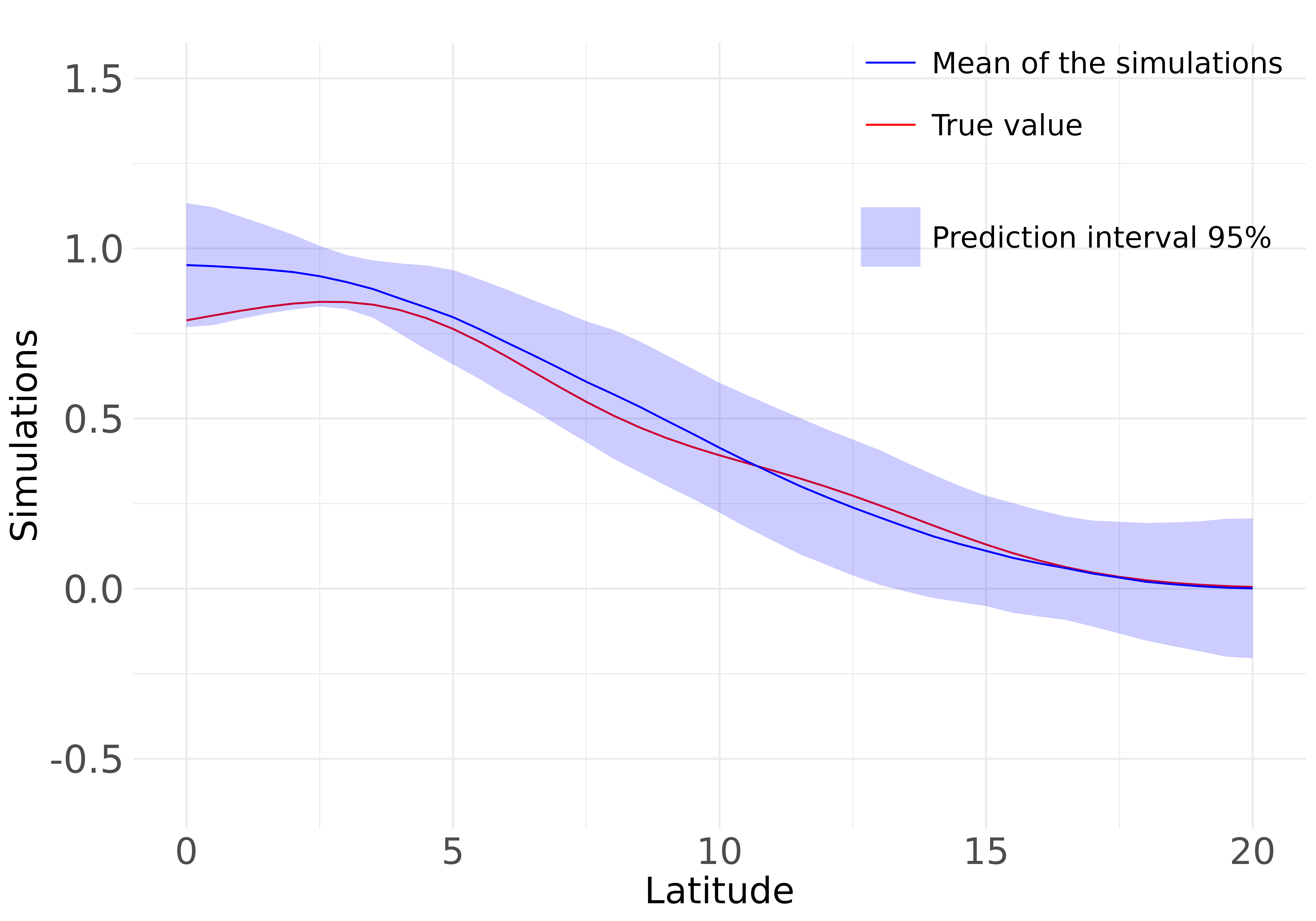}
        \caption{Predictive distribution in the anisotropic case.}
    \end{subfigure}
    \caption{Comparison between the uncertainty quantification in the isotropic and the anisotropic prediction on the cylinder, along the line $\theta = 270^\circ$.}
    \label{uncert_cylinder}
\end{figure}

\section{Summary and perspectives}\label{sec_conclu}

This article extends the work of~\cite{sire2025spline} by proposing a framework for spline interpolation on compact Riemannian manifolds, together with a quantification of the associated uncertainty. The approach relies on the equivalence between the classical solution to the interpolation problem and the universal kriging predictor. However, it is based on the spectrum of the Laplace-Beltrami operator $-\Delta$, which is generally unknown on compact Riemannian manifolds.

To overcome this difficulty, a finite-element approximation of the predictor is introduced, based on a triangulation of the manifold as proposed in~\cite{pereira_desassis_allard_2022}. This strategy avoids the need for the explicit spectrum of $-\Delta$ and enables the incorporation of anisotropies in the covariance kernel through a parametrization of the Riemannian metric associated with the manifold. However, it leads to an intrinsic Gaussian Markov random field (GMRF) with a singular covariance matrix, which complicates both computations and simulations.

Adapting the methodology of~\cite{bolin2021efficient}, we represent this intrinsic GMRF through linear conditioning of a non-intrinsic GMRF. The approach is illustrated on two test cases: the Earth, modeled as a spherical surface, and a cylinder. For both examples, the impact of anisotropy and the role of uncertainty quantification are emphasized. The predictive distribution is displayed over a selected subdomain to demonstrate that the predictive variance is consistent with the actual prediction error, even in situations where isotropic splines yield inaccurate point estimates. Finally, a scoring rule is evaluated to assess the overall predictive performance, taking into account both accuracy and uncertainty quantification. \\~\

Although this work offers a thorough investigation, several perspectives for future research remain. First, in this work we only considered constant anisotropy parameters. However, for certain phenomena, one may need to estimate a spatially varying anisotropy field. Such an extension would naturally lead to a very large number of parameters to infer through maximum log-likelihood estimation. Alternative strategies would therefore be worth exploring, for instance by modeling the anisotropy parameters themselves as a random field within a hierarchical framework. This would raise significant challenges, including the specification of a suitable prior statistical model for the anisotropy field and the implementation of a fully Bayesian approach. 

Another promising direction is to investigate the link between spline prediction and the Mat\'ern stochastic partial differential equation (SPDE), defined as follows~\cite{whittle1954stationary}:
\begin{equation}\label{eq_lindgren}
(\kappa^2 - \Delta)^{\alpha/2} Z = W,
\end{equation}
where $(\kappa^2 - \Delta)^{\alpha/2}$ is a pseudo-differential operator and $W$ denotes Gaussian white noise. As extensively studied in~\cite{bolin2025intrinsic} in $\mathbb{R}^d$, the solution to this SPDE is an intrinsic Gaussian field when $\kappa = 0$. On a compact manifold $\manif$, when $\alpha = 2$, this solution is strongly related to our spline predictor as it admits a covariance kernel $K_1$, and a then detailed investigation of this result would be of particular interest. More generally, this perspective opens the way to extending to Riemannian manifolds the work of~\cite{bolin2025intrinsic} on intrinsic Whittle--Mat\'ern fields, and to comparing our approach with theirs, which involves the computation of the Moore--Penrose pseudoinverse. It may then be interesting to study the behavior of our spline predictor, and more generally of intrinsic fields, in extrapolation scenarios, i.e. when predictions are made far from the observation points, and to compare it with results obtained using classical Matérn solutions.

Finally, in our work we used the score defined in Equation~\ref{eq_score} to assess the predictive performance at each point, while accounting for the associated uncertainty. This metric provides a convenient balance between accuracy and uncertainty quantification, and remains straightforward to compute when predictive simulations are available. However, it only relies on the marginal predictive variance at each location and does not incorporate the full covariance structure of the predictive distribution. More sophisticated scoring rules could therefore be considered, although their implementation becomes challenging when dealing with an intrinsic GMRF and a large number of prediction points (e.g., the entire triangulation mesh). In such situations, estimating the predictive covariance matrix empirically would require a very large number of independent simulations to obtain a sufficiently accurate estimator, while its analytical computation is delicate due to the manipulation of singular matrices.

\section*{Acknowledgements}
This work was fully supported by the Chaire Geolearning funded by Andra, BNP-Paribas, CCR and SCOR Foundation.

\section*{Declaration of Competing Interest}
The authors declare that they have no known competing financial interests or personal relationships that could have appeared to influence the work reported in this paper.

\section*{Data Availability}
The data and code used to generate the results presented in this article are provided in the Supplementary Material~\cite{charliesire_2026}.

\bibliographystyle{elsarticle-num-names} 
\bibliography{References}
\appendix
\section{Spectral Theorem}\label{spectral_theorem}

\begin{prop}[Spectrum of $-\Delta$]
The Spectral Theorem provides the following properties~\cite{Craioveanu2001}.
\begin{enumerate}[label=(\roman*)]
\item 
For the closed, the Dirichlet, and the Neumann eigenvalue problems, the spectrum of $-\Delta$ is an infinite, countable sequence of real eigenvalues
\[
0 \leq \lambda_0 \leq \lambda_1 \leq \dots \leq \lambda_k \leq \dots ,
\]
where each eigenvalue $\lambda_k$ appears with finite multiplicity $m_k$.

\item 
There exists an orthonormal basis $(\phi_k)_{k \in \mathbb{N}}$ of $\lm$ such that, for all $k \in \mathbb{N}$, $\phi_k \in \mathcal{C}^{\infty}(\manif)$ is an eigenfunction associated with $\lambda_k$. Consequently, for any $u \in \lm$,
\[
u = \sum_{k \in \mathbb{N}} u_k \phi_k,
\qquad
u_k = \langle u, \phi_k \rangle_{\lm}
     = \int_{\manif} u \, \phi_k \, d\mug .
\]

\item For the closed and the Neumann eigenvalue problems, the eigenvalue $\lambda_0 = 0$ corresponds to the constant functions \cite{urakawa1993geometry}, which implies that $m_0 = 1$ and
\[
\phi_0 = \frac{1}{\int_{\manif} d\mug}.
\]

\item For the Dirichlet eigenvalue problem, all eigenvalues are strictly positive.
\end{enumerate}
\end{prop}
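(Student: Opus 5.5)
The plan is to deduce all four items from the spectral theorem for compact self-adjoint operators, applied to the inverse of a shifted Laplace--Beltrami operator set up variationally. Let $V$ be the energy space: $V = H^1(\manif)$ for the closed problem ($\partial\manif=\emptyset$) and for the Neumann problem, and $V = H^1_0(\manif)$ for the Dirichlet problem. On $V$ consider the symmetric bilinear form
\[
a(u,v) = \int_\manif g(\nabla u,\nabla v)\, d\mug .
\]
Green's formula identifies $a$ with $\langle -\Delta u, v\rangle_{\lm}$ for smooth $u$ with the relevant boundary condition; the Neumann condition appears as the natural boundary condition of this form. The shifted form $a(u,v)+\langle u,v\rangle_{\lm}$ is coercive on $V$. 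Lax--Milgram therefore gives a bounded solution operator $T=(I-\Delta)^{-1}:\lm\to V$, and $T$ is self-adjoint and positive on $\lm$ because $a$ is symmetric and nonnegative.

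Next, compactness of $\manif$ gives the Rellich--Kondrachov theorem, so the embedding $V\hookrightarrow\lm$ is compact and hence $T$ is compact on $\lm$. The spectral theorem for compact self-adjoint operators then supplies an orthonormal basis of $\lm$ made of eigenvectors of $T$, with eigenvalues $\mu_k>0$, $\mu_k\to 0$, each of finite multiplicity. Setting $\lambda_k=\mu_k^{-1}-1$ yields eigenpairs of $-\Delta$. Each $\lambda_k$ has finite multiplicity and the sequence tends to $+\infty$, so there are infinitely many eigenvalues because $\lm$ is infinite-dimensional. Nonnegativity follows from $\lambda_k=\lambda_k\|\phi_k\|^2=a(\phi_k,\phi_k)=\int_\manif|\nabla\phi_k|^2\,d\mug\ge 0$. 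This proves (i) and the expansion in (ii).

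The step I expect to be the main obstacle is the smoothness $\phi_k\in\mathcal{C}^\infty(\manif)$, since $\phi_k$ is a priori only a weak solution in $V$. I would bootstrap with elliptic regularity for $-\Delta\phi_k=\lambda_k\phi_k$. Interior estimates in charts give $\phi_k\in H^{s}_{\mathrm{loc}}$ for every $s$. Near $\partial\manif$, I would invoke the standard boundary regularity theory for the Dirichlet and Neumann problems to get $\phi_k\in H^s(\manif)$ for every $s$. Sobolev embedding then gives $\phi_k\in\mathcal{C}^\infty(\manif)$. I would cite this regularity theory rather than reprove it.

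For (iii), suppose $-\Delta u=0$ with $u\in V=H^1(\manif)$ in the closed or Neumann case. Then $a(u,u)=\int_\manif|\nabla u|^2\,d\mug=0$, so $\nabla u=0$ and $u$ is locally constant. Connectedness of $\manif$ makes $u$ constant. Conversely, constants lie in $V$ and are annihilated by $-\Delta$, so $\lambda_0=0$ with $m_0=1$, and $\phi_0$ is the normalized constant. Here I would point out a correction: unit $\lm$-norm actually forces $\phi_0=\left(\int_\manif d\mug\right)^{-1/2}$, not $\left(\int_\manif d\mug\right)^{-1}$ as written in the statement.

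For (iv), in the Dirichlet case the same argument makes an eigenfunction for $\lambda=0$ constant. That constant lies in $H^1_0(\manif)$ and so has zero trace, hence it vanishes. Therefore $0$ is not an eigenvalue, and all Dirichlet eigenvalues are strictly positive.
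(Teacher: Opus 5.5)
Your proposal is correct. The paper gives no argument of its own for this proposition. It is stated as a recall, with (i)--(ii) attributed to the spectral theorem in the monograph of Craioveanu et al.\ and (iii) to Urakawa.

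What you write is the standard proof behind those citations:
\begin{itemize}
\item the variational formulation on $H^1$ or $H^1_0$ and Lax--Milgram for the shifted operator;
\item Rellich compactness and the spectral theorem for compact self-adjoint operators;
\item elliptic regularity up to the boundary;
\item the energy identity $\lambda\lVert u\rVert_{\lm}^2=\int_\manif\lvert\nabla u\rvert^2\,d\mug$ at the bottom of the spectrum.
\end{itemize}

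The citation buys brevity. Your derivation shows which hypotheses are actually used:
\begin{itemize}
\item compactness of $\manif$, for discreteness and finite multiplicities;
\item connectedness of $\manif$ (a standing assumption in the paper), for $m_0=1$; on a disconnected manifold $m_0$ would be the number of components;
\item implicitly in (iv), $\partial\manif\neq\emptyset$; otherwise $H^1_0(\manif)=H^1(\manif)$ and constants would survive.
\end{itemize}

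Your correction of the normalization is also right: unit $\lm$-norm forces $\phi_0=\bigl(\int_\manif d\mug\bigr)^{-1/2}$. The same slip appears in Section~\ref{solution}. It is harmless for the rest of the paper, for two reasons:
\begin{itemize}
\item The finite-element counterpart $\phibf=\mathbf{1}_m/\lVert\sqrt{\Mbf}^\top\mathbf{1}_m\rVert_2$ uses the correct power, since $\lVert\sqrt{\Mbf}^\top\mathbf{1}_m\rVert_2^2=\mathbf{1}_m^\top\Mbf\mathbf{1}_m=\int_\manif d\mug$.
\item The kriging predictor and the term $c(\sbf,\sbf')$ are invariant under any rescaling of $\phi_0$.
\end{itemize}
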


In this study, we consider $\lambda_0 = 0$, as in the closed and Neumann eigenvalue problems, noting that the case $\lambda_0 > 0$ (Dirichlet problem) is computationally simpler for the operations that will be performed.

\section{Inversibility of $\Ktau$}\label{inversibility_ktau}

The inversibility of $\Ktau$ is guaranteed when $\noise >0.$ In the following, we focus on the case $\noise = 0,$ referring to the interpolating problem.

For $k \in \mathbb{N}^\star,$ let us denote 
$\wbf_k = \left(\phi_k(\sbf_1),\dots,\phi_k(\sbf_n)\right)^\top.$

\begin{align*}
\forall \xbf \in \mathbb{R}^n,~~\xbf^\top \Kzero \xbf &= \sum_{i=1}^n \sum_{j=1}^n  x_i x_j \sum_{k>0} \frac{1}{\lambda_k^2} \phi_k(\sbf_i) \phi_k(\sbf_j) \\
&= \sum_{k>0}  \frac{1}{\lambda_k^2} \sum_{i=1}^n \sum_{j=1}^n  x_i x_j \phi_k(\sbf_i) \phi_k(\sbf_j) \\
&= \sum_{k>0}  \frac{1}{\lambda_k^2} \left\lVert \sum_{i=1}^n  x_i \phi_k(\sbf_i) \right \rVert ^2
\end{align*} 
Then, \begin{align*}
\xbf^\top \Kzero \xbf = 0 &\Leftrightarrow  \forall k >0, \sum_{i=1}^n  x_i \phi_k(\sbf_i)=0 \\
&\Leftrightarrow \forall k >0, \langle \xbf, \wbf_k\rangle_{\mathbb{R}^n} = 0
\end{align*}
It comes

\begin{align*}
\Kzero\text{ is positive definite} &\Leftrightarrow \operatorname{span}\{\wbf_k\}_{k > 0} = \mathbb{R}^n\\
&\Leftrightarrow \forall \xbf \in \mathbb{R}^{n}, \exists \left(a_k\right)_{k>0}, \xbf = \sum_{k>0}a_k\wbf_k\\
&\Leftrightarrow \forall \xbf \in \mathbb{R}^{n}, \exists \left(a_k\right)_{k>0}, \forall\, 1\leq i \leq n,~ x_i = \sum_{k>0}a_k\phi_k(\sbf_i),\\
&\Leftrightarrow \forall \xbf \in \mathbb{R}^{n}, \exists \left(a_k\right)_{k>0}, \text{ the function } u = \sum_{k>0} a_k \phi_k \text{ interpolates the vector }\xbf
\end{align*}

As $\left(\phi_k\right)_{k>0}$ is a basis of $\lm,$ $\Kzero$ is invertible if and only if the an interpolating function exists in $\lm$ for any vector of observations $\xbf.$

Note that this condition cannot be satisfied when an observation point lies on the boundary $\partial M$ under Dirichlet boundary conditions.

\section{Proof of Proposition~\ref{prop_conv_bayes}}\label{proof_bayes}

Our random vector of observations $\Yobs = (Y_i)_{i=1}^{n}$ verifies

$$Y_i = Z(\sbf_i)+\sigma\tau E_i, ~~1\leq i \leq n,$$
\noindent with 
\begin{itemize}
\item $(Z\mid A = a)$ a GRF with mean $a\phi_0$ and covariance kernel $\sigma^2 K_1$, for all $a\in \mathbb{R}$
\item $A\sim\mathcal{N}(\mu, \alpha)$
\item $\Ebf = (E_i)_{i=1}^{n}$ a centered Gaussian vector with $\cov(\Ebf) = \Ibf_n$, independent of $A$ and $Z$.
\end{itemize}

The objective is to compute the mean and covariance kernel of $\left(Z(\sbf)\mid \Yobs = \ybf\right)$ when $\alpha$ tends to $+\infty.$

\subsection{Distribution of $Z(\sbf)$} \label{app_distrib_z}

Here, we will show that $Z(\sbf)$ is a GRF with
\begin{itemize}
\item mean $\mu\phi_0$ 
\item covariance kernel $\alpha \phi_0(\sbf)\phi_0(\sbf')+\Sigma^2 K_1(\sbf,\sbf')$
\end{itemize}

Note that this step is not mandatory, but is included here since a very similar development can be used to show that $\Zbf \sim \mathcal{N}(\veczerm,\sigalp)$ in Section~\ref{general_idea}.

Let us consider $\xbf_1, \dots, \xbf_p$, $p$ points of $\manif,$ and study the distribution of $$\Ztest = Z(\xbf_1),\dots, Z(\xbf_p).$$ We denote $\phitest = \left(\phi_0(\xbf_i)\right)_{i=1}^{p},$ and $\Ktest = \left[K_1(\xbf_i,\xbf_j)\right]_{1 \leq i,j \leq p}$. Then density of $\Ztest$ is
\begin{align*}
p_{\Ztest}(\zbf) &= \int_{\mathbb{R}^p}p_{\Ztest\mid A = a}(\zbf)p_A(a) da \\
&=\int_{\mathbb{R}}
\frac{1}{(2\pi)^{(p+1)/2}\lvert(\sigma^2\Ktest)\rvert^{1/2}}
\exp\!\left(
-\frac{1}{2}
(\zbf-a\phitest)^\top (\sigma^2\Ktest)^{-1}(\zbf-a\phitest)
\right)
\frac{1}{\sqrt{\alpha}}
\exp\!\left(
-\frac{(a-\mu)^2}{2\alpha}
\right)da\\
&=\frac{1}{\sqrt{\alpha}(2\pi)^{(p+1)/2}\lvert(\sigma^2\Ktest)\rvert^{1/2}}\exp\left(-\frac{1}{2}\left(\zbf^\top(\sigma^2\Ktest)^{-1}\zbf+\frac{\mu^2}{\alpha}\right) \right)\int_{\mathbb{R}}\exp\!\left(
-\frac{1}{2}\gamma a^2 + \zeta a
\right)da \\
\text{with } &\gamma = \left((\phitest)^\top(\Ktest)^{-1}\phitest+\frac{1}{\alpha}\right) \text{ and } \zeta = \zbf^\top(\sigma^2\Ktest)^{-1}\phitest+\frac{\mu}{\alpha}.
\end{align*}

Then,
\begin{align*}
p_{\Ztest}(\zbf)&= \frac{1}{(2\pi)^{(p+1)/2}\lvert(\sigma^2\Ktest)\rvert^{1/2}}\exp\left(-\frac{1}{2}\zbf^\top(\sigma^2\Ktest)^{-1}\zbf+\frac{\mu^2}{\alpha}\right)\sqrt{\frac{2\pi}{\gamma}}\,
\exp\!\left(\frac{\zeta^2}{2\gamma}\right) \\
&\propto \exp\left(-\frac{1}{2}\zbf^\top\left[(\sigma^2\Ktest)^{-1} - \frac{1}{\gamma}(\sigma^2\Ktest)^{-1}\phitest(\phitest)^\top(\Ktest)^{-1}\right]\zbf\right)\exp\left(\frac{1}{\gamma}\zbf^\top(\sigma^2\Ktest)^{-1}\phitest\frac{\mu}{\alpha}\right)\\
\end{align*}

Let us denote $\Kalp = \sigma^2\Ktest + \alpha\phitest(\phitest)^\top.$ From Woodbury formula, $$\Kalp^{-1} = (\sigma^2\Ktest)^{-1} - \frac{1}{\gamma}(\sigma^2\Ktest)^{-1}\phitest(\phitest)^\top(\sigma^2\Ktest)^{-1}.$$

And as we can remark that $\frac{1}{\gamma\alpha} = 1 - \frac{(\phitest)^\top(\sigma^2\Ktest)^{-1} \phitest}{\gamma},$ we have

\begin{align*}
\frac{1}{\gamma}\zbf^\top(\sigma^2\Ktest)^{-1}\phitest\frac{\mu}{\alpha} &= \zbf^\top\left((\sigma^2\Ktest)^{-1} - \frac{1}{\gamma}(\sigma^2\Ktest)^{-1}\phitest\left(\phitest\right)^\top(\sigma^2\Ktest)^{-1}\right)\mu\phitest\\
&= \zbf^\top\Kalp^{-1}\mu\phitest.
\end{align*}

Finally,

$$p_{\Ztest}(\zbf)\propto \exp\left(-\frac{1}{2}\zbf^\top\Kalp^{-1}\zbf + \zbf^\top\Kalp^{-1}\mu\phitest\right).$$

We recognize the density of a Gaussian distribution, with mean vector $\mu\phitest$ and covariance matrix $\Kbf_{\alpha}.$ That shows that $Z$ is a GRF with function mean $\mu\phi_0$ and covariance kernel $\alpha\phi_0(s)\phi_0(s')+\sigma^2 K_1(s,s').$

\subsection{Distribution of $\left(Z(\sbf) \mid \Yobs = \ybf\right)$} \label{app_distrib_condi}

From 
\begin{equation*}
\left\{
\begin{aligned}
&\Yobs = \left(Z(\sbf_i) +\sigma\noise E_i\right)_{i=1}^{n}\\
&Z \text{ is a GRF of mean } \mu\phi_0 \text{ and covariance kernel } \sigma^2K_1(\sbf,\sbf')+\alpha\phi_0(\sbf)\phi_0(\sbf'),
\end{aligned}
\right.
\end{equation*}
we use the conditional Gaussian formulas and obtain that $\left(Z(\sbf)\mid\Yobs = \ybf\right)$ is a GRF with
\begin{itemize}
\item mean $\moy(\sbf) = \mu\phi_0(\sbf) + \left[\sigma^2\kbf_1(\sbf) + \alpha\phi_0(\sbf)\phibfn\right]^\top\left[\sigma^2\Ktau +\alpha\phibfn(\phibfn)^\top\right]^{-1}\left[\ybf - \mu\phibfn\right]$
\item covariance kernel \begin{align*}
\covk&(\sbf,\sbf') = \sigma^2 K_1(\sbf,\sbf')+\alpha\phi_0(\sbf)\phi_0(\sbf')\\
&-  \left[\sigma^2\kbf_1(\sbf) + \alpha\phi_0(\sbf)\phibfn\right]^\top\left[\sigma^2\Ktau +\alpha\phibfn(\phibfn)^\top\right]^{-1}\left[\sigma^2\kbf_1(\sbf') + \alpha\phi_0(\sbf')\phibfn\right]\end{align*}
\end{itemize}

\subsection{Limit when $\alpha \to \infty$}

Now, let us compute the limits $\underset{\alpha \to \infty}{\lim} \moy(\sbf)$ and $\underset{\alpha \to \infty}{\lim} \covk(\sbf,\sbf').$

First, from Woodbury formula, 
\begin{align*}
\left[\sigma^2\Ktau +\alpha\phibf(\phibfn)^\top\right]^{-1} = \sigma^{-2}\Ktau^{-1}\left[1 - \sigma^{-2}\alpha\phibfn \frac{(\phibfn)^\top\Ktau^{-1}}{1 + \sigma^{-2}\alpha (\phibfn)^\top\Ktau^{-1}\phibfn}\right]
\end{align*}
Then, 
\begin{align*}
&\alpha\phi_0(\sbf)(\phibfn)^\top\left[\sigma^{-2}\Ktau +\alpha\phibf(\phibfn)^\top\right]^{-1} \\
&= \sigma^{-2}\alpha\phi_0(\sbf)(\phibfn)^\top\Ktau^{-1}- \sigma^{-2}\alpha\phi_0(\sbf)\frac{\sigma^{-2}\alpha(\phibfn)^\top\Ktau^{-1}\phibfn}{1 + \sigma^{-2}\alpha (\phibfn)^\top\Ktau^{-1}\phibfn}(\phibfn)^\top\Ktau^{-1}\\
&=\sigma^{-2}\alpha\phi_0(\sbf)(\phibfn)^\top\Ktau^{-1} - \sigma^{-2}\alpha\phi_0(\sbf)\left(1 - \frac{1}{1 + \sigma^{-2}\alpha (\phibfn)^\top\Ktau^{-1}\phibfn}\right)(\phibfn)^\top\Ktau^{-1}\\
&=\frac{\sigma^{-2}\phi_0(\sbf)(\phibfn)^\top\Ktau^{-1}}{\frac{1}{\alpha} + \sigma^{-2}(\phibfn)^\top\Ktau^{-1}\phibfn}
\end{align*}
and \begin{align*}
\sigma^2\kbf_1(\sbf)^\top\left[\sigma^2\Ktau +\alpha\phibf(\phibfn)^\top\right]^{-1} = \kbf_1(\sbf)^\top\Ktau^{-1} - \frac{\sigma^{-2}\kbf_1(\sbf)^\top\Ktau^{-1}\phibfn(\phibfn)^\top\Ktau^{-1}
}{\frac{1}{\alpha} + \sigma^{-2}(\phibfn)^\top\Ktau^{-1}\phibfn}
\end{align*}

\paragraph*{Convergence of $\moy(\sbf)$} 

Then, with $\frac{1}{\alpha}\to 0,$ it gives 

\begin{align*}
\underset{\alpha \to \infty}{\lim} \moy(\sbf) &= \mu\phi_0(\sbf) + \left[ \kbf_1(\sbf)^\top\Ktau^{-1}
+ \left[\phi_0(\sbf)-\kbf_1(\sbf)^\top\Ktau^{-1}\phibfn\right]\frac{(\phibfn)^\top\Ktau^{-1}
}{(\phibfn)^\top\Ktau^{-1}\phibfn}\right]\left[\ybf - \mu\phibfn\right] \\
&= \kbf_1(\sbf)^\top\Ktau^{-1}\ybf + \amnoise \left[\phi_0(\sbf)-\kbf_1(\sbf)^\top\Ktau^{-1}\phibfn\right]\\
&\;\;\;\;+ \mu\left[\phi_0(\sbf) - \kbf_1(\sbf)^\top\Ktau^{-1}\phibf - \left[\phi_0(\sbf)-\kbf_1(\sbf)^\top\Ktau^{-1}\phibfn\right]\right] \\
&=u^\star_\noise(\sbf).
\end{align*}

\paragraph*{Convergence of $\covk(\sbf,\sbf')$}

\begin{align*}
 &\left[\sigma^2\kbf_1(\sbf) + \alpha\phi_0(\sbf)\phibfn\right]^\top\left[\sigma^2\Ktau +\alpha\phibfn(\phibfn)^\top\right]^{-1}\left[\sigma^2\kbf_1(\sbf') + \alpha\phi_0(\sbf')\phibfn\right] \\
 &= \kbf_1(\sbf)^\top\Ktau^{-1}\left[\sigma^2\kbf_1(\sbf') + \alpha\phi_0(\sbf')\phibfn\right] \\
 &+\left[\phi_0(\sbf)-\kbf_1(\sbf)^\top\Ktau^{-1}\phibfn\right]\frac{\sigma^{-2}(\phibfn)^\top\Ktau^{-1}
}{\frac{1}{\alpha}+\sigma^{-2}(\phibfn)^\top\Ktau^{-1}\phibfn}\left[\sigma^2\kbf_1(\sbf') + \alpha\phi_0(\sbf')\phibfn\right]\\
&= \sigma^2\kbf_1(\sbf)^\top\Ktau^{-1}\kbf_1(\sbf') +\left[\phi_0(\sbf)-\kbf_1(\sbf)^\top\Ktau^{-1}\phibfn\right]\frac{(\phibfn)^\top\Ktau^{-1}
}{\frac{1}{\alpha}+\sigma^{-2}(\phibfn)^\top\Ktau^{-1}\phibfn}\kbf_1(\sbf')\\
&+ \alpha \kbf_1(\sbf)^\top\Ktau^{-1}\phi_0(\sbf')\phibfn + \alpha\left[\phi_0(\sbf)-\kbf_1(\sbf)^\top\Ktau^{-1}\phibfn\right]\frac{\sigma^{-2}\alpha(\phibfn)^\top\Ktau^{-1}\phibfn
}{1+\sigma^{-2}\alpha(\phibfn)^\top\Ktau^{-1}\phibfn}\phi_0(\sbf')\\
&= \sigma^2\kbf_1(\sbf)^\top\Ktau^{-1}\kbf_1(\sbf') +\left[\phi_0(\sbf)-\kbf_1(\sbf)^\top\Ktau^{-1}\phibfn\right]\frac{(\phibfn)^\top\Ktau^{-1}
}{\frac{1}{\alpha}+\sigma^{-2}(\phibfn)^\top\Ktau^{-1}\phibfn}\kbf_1(\sbf')\\
&+ \alpha \kbf_1(\sbf)^\top\Ktau^{-1}\phi_0(\sbf')\phibfn + \alpha\left[\phi_0(\sbf)-\kbf_1(\sbf)^\top\Ktau^{-1}\phibfn\right]\left[1-\frac{1}{1+\sigma^{-2}\alpha(\phibfn)^\top\Ktau^{-1}\phibfn}\right]\phi_0(\sbf') \\
&= \sigma^2\kbf_1(\sbf)^\top\Ktau^{-1}\kbf_1(\sbf') +\left[\phi_0(\sbf)-\kbf_1(\sbf)^\top\Ktau^{-1}\phibfn\right]\frac{(\phibfn)^\top\Ktau^{-1}
}{\frac{1}{\alpha}+\sigma^{-2}(\phibfn)^\top\Ktau^{-1}\phibfn}\kbf_1(\sbf')\\
&+ \alpha \phi_0(\sbf)\phi_0(\sbf') - \left[\phi_0(\sbf)-\kbf_1(\sbf)^\top\Ktau^{-1}\phibfn\right]\frac{\phi_0(\sbf')}{\frac{1}{\alpha}+\sigma^{-2}(\phibfn)^\top\Ktau^{-1}\phibfn}
\end{align*} 
 And then it comes
\begin{align*}
&\underset{\alpha \to \infty}{\lim} \covk(\sbf,\sbf') = \sigma^2 K_1(\sbf,\sbf')  -  \sigma^2\kbf_1(\sbf)^\top\Ktau^{-1}\kbf_1(\sbf')  \\
&+ \sigma^2\left[\phi_0(\sbf)-\kbf_1(\sbf)^\top\Ktau^{-1}\phibfn\right]\left[(\phibfn)^\top\Ktau^{-1}\phibfn\right]^{-1}\left[\phi_0(\sbf')-(\phibfn)^\top\Ktau^{-1}\kbf_1(\sbf')\right].
\end{align*} 

\paragraph*{Convergence to a GRF}

For any finite set of $p$ test points $\{\xbf_1, \dots, \xbf_p\} \subset \manif$, let us denote $\Ztest = (Z(\xbf_1), \dots, Z(\xbf_p))^\top$ the corresponding random vector. Since $(Z \mid \Yobs = \ybf)$ is a Gaussian Random Field for any $\alpha >0$, the conditional vector $(\Ztest \mid \Yobs = \ybf)$ is Gaussian with a mean vector and a covariance matrix that converge, as $\alpha \to \infty$, to $\mtest$ and $\ktest$ respectively. It follows from Lévy's continuity theorem that
\begin{equation}
(\Ztest \mid \Yobs = \ybf) \xrightarrow[\alpha \to \infty]{\mathcal{L}} \mathcal{N}(\mtest, \ktest),
\end{equation}
which ensures that all finite-dimensional distributions of the limiting process are Gaussian, thereby proving that the limit of $(Z \mid \Yobs = \ybf)$ remains a GRF.

\section{Definition of $-\Delta_m$}\label{galerk}

This section defines the Galerkin approximation $-\Delta_m$ of the Laplace-Beltrami opperation $-\Delta.$ Let us denote $V_m$ the linear span of $\psi_1,\dots,\psi_m$ the piecewise linear functions associated with the triangulation $\tri.$ $-\Delta_m$ is the endomorphism mapping any $f \in V_m$ to $-\Delta_m f,$ with 

$$\forall u \in V_m, \langle -\Delta_m f, u \rangle_{\lm} = \langle - \Delta f, u \rangle_{\lm}.$$

\section{Computation of $\phibf$}\label{formula_phibf}

By definition, the function $\phi_0^{(m)}$ has unit norm in $\lm$, that is,
\begin{equation*}
\left\langle \sum_{k=1}^{m} [\phibf]_k \psi_k,\ \sum_{k=1}^{m} [\phibf]_k \psi_k \right\rangle_{\lm} = 1,
\end{equation*}
which implies that 
\begin{equation*}
\phibf^\top \Mbf \phibf = \left( \sqrt{\Mbf}^\top \phibf \right)^\top \sqrt{\Mbf}^\top \phibf = 1.
\end{equation*}

Since $\phibf$ is constant, it follows that
\begin{equation}
\phibf = \frac{\mathbf{1}_m}{\lVert \sqrt{\Mbf}^\top \mathbf{1}_m \rVert_2}. 
\end{equation}

\section{Inversibility of $\Kbis$}\label{inversibility_kbis}

\subsection{First scenario}\label{first_scenario_appendix}

As a reminder, $\sigbf$ is positive semidefinite of rank $ m - 1 $. Now let us identify a vector in $\text{Ker}(\sigbf).$ By definition of the basis function $\left(\psi_j\right)_{j=1}^{m},$ we have $\forall \sbf \in \manif, \sum_{j=1}^{m} \psi_j(\sbf) = 1,$ and then 
$$\forall \sbf \in \manif, \sum_{j=1}^{m} \nabla\psi_j(\sbf) = 0.$$

As $\phibf$ is constant, $\Sbf \left(\sqrt{\Mbf}\right)^{\top}\phibf = \left(\sqrt{\Mbf}\right)^{-1}\Fbf\left(\sqrt{\Mbf}\right)^{-\top}\left(\sqrt{\Mbf}\right)^{\top}\phibf = \left(\sqrt{\Mbf}\right)^{-1}\Fbf\phibf = 0.$ Then, $\left(\sqrt{\Mbf}\right)^{\top}\phibf \in \text{Ker}(\Sbf)\subset \text{Ker}(f(\Sbf)).$

It comes, $$\sigbf\Mbf\phibf = \left(\sqrt{\Mbf}\right)^{-\top} f(\Sbf) \left(\sqrt{\Mbf}\right)^{-1}\sqrt{\Mbf}\left(\sqrt{\Mbf}\right)^\top\phibf = 0.$$
And finally $\ker(\sigbf) = \text{span}(\Mbf\phibf).$

As a reminder, $ I = \{i_1, \dots, i_n\} $ denotes the set of indices such that $ s_k = c_{i_k} $ for $ 1 \leq k \leq n $. Then, $\forall \ybf \in\mathbb{R}^n \setminus\{0\},$ 
$$
\left[\Abf_n^\top\ybf\right]_i =
\begin{cases}
\ybf_k & \text{if } i = i_k \text{ for some } 1 \leq k \leq n, \\
0 & \text{otherwise},
\end{cases}
$$ and $\Abf_n^\top\ybf \notin \ker(\sigbf).$ It shows that $\Abf_n\sigbf\Abf_n^\top$ is positive definite.

\subsection{Second scenario}

For the \emph{second scenario}, the covariance matrix $ \Abf_n \sigbf \Abf_n^\top + \noise^2 \Ibf_n $ is invertible if and only if $ -\noise^2 $ is not an eigenvalue of $ \Abf_n \sigbf \Abf_n^\top $. As $\sigbf$ is positive semidefinite, $\Abf_n \sigbf \Abf_n^\top$ is positive semidefinite, and $-\noise^2$ is not an eigenvalue of $ \Abf_n \sigbf \Abf_n^\top .$

\section{Simulations with known mean}\label{other_simu}

Here, we provide a strategy to generate simulations of $\left(Z^{(m)} \mid \Yobs^{(m)} = \ybf, A = \amnoise\right).$ From Proposition~\ref{prop_sigma}, we have $\left(Z^{(m)}\mid A = \amnoise\right)$ is a Gaussian Process with mean $\amnoise\phi_0(\sbf)$ and covariance kernel $\sigma^2 \Abf(\sbf)^\top\sigbf \Abf(\sbf').$ Then, using conditionnal Gaussian process formulas, $$\left(Z^{(m)} \mid \Yobs^{(m)} = \ybf, A = \amnoise\right)$$ is a GRF with 
\begin{itemize}
\item Mean $\amnoise \Abf(\sbf)^\top\phibf +\Abf(\sbf)^\top\sigbf \Abf_n^\top \left(\Kbis\right)^{-1}(\ybf - \amnoise\Abf_n\phibf)$
\item Covariance kernel $\sigma^2\left[\Abf(\sbf)^\top\sigbf \Abf(\sbf') - \Abf(\sbf)^\top\sigbf \Abf_n^\top \left(\Kbis \right)^{-1}\Abf_n\sigbf \Abf(s')\right].$
\end{itemize}

This GRF has the same mean as $\left(Z^{(m)} \mid \Yobs^{(m)} = \ybf\right)$ in the limit as $\alpha \to \infty$,, while the covariance kernels match except for the term $c^{(m)}(\sbf,\sbf')$, which accounted for the uncertainty in the mean parameter $A$ (see Proposition~\ref{prop_conv_fe}). Therefore, it remains relevant to generate simulations of $\left(Z^{(m)} \mid \Yobs^{(m)} = \ybf, A = \amnoise\right)$: the predictive mean coincides with the target value, and the associated uncertainty corresponds to the simple kriging framework. Let us introduce the following proposition to obtain the target simulations.

\begin{prop}[Simulations using the simple kriging framework]\label{prop_sk}
Let $\Zbf_0 \sim \mathcal{N}(\veczerm, \sigma^2\sigbf)$ and $\tilde{\Zbf}_0 = \Zbf_0 + \phibf\frac{(\Mbf\phibf)^\top\mpostalp(\ybf)}{(\Mbf\phibf)^\top\mpostalp(\Abf_n\phibf)}.$ Then, the vector
$$\Zsk = \tilde{\Zbf}_0 + \left[\Ibf_m - \phibf(\Mbf\phibf)^\top+h\left[\mpostalp(\Abf_n\phibf)\right](\Mbf\phibf)^\top\right]\mpostalp\left(\ybf - \Abf_n\tilde{\Zbf}_0 - \sigma\noise\Ebf\right)$$ has same distribution as $\left(\Zbf \mid \Yobs^{(m)} = \ybf, A = \amnoise\right).$
\end{prop}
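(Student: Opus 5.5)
The plan is to reduce $\Zsk$ to the classical simple-kriging residual construction. The target vector $\left(\Zbf \mid \Yobs^{(m)} = \ybf, A = \amnoise\right)$ is, by the formulas recalled just above, Gaussian with mean $\amnoise\phibf + \sigbf\Abf_n^\top(\Kbis)^{-1}(\ybf - \amnoise\Abf_n\phibf)$ and covariance $\sigma^2[\sigbf - \sigbf\Abf_n^\top(\Kbis)^{-1}\Abf_n\sigbf]$. The well-known conditioning-by-kriging identity reads: if $\Zbf_a \sim \mathcal{N}(a\phibf,\sigma^2\sigbf)$ is independent of $\Ebf$, then
\[
\Zbf_a + \sigbf\Abf_n^\top(\Kbis)^{-1}\left(\ybf - \Abf_n\Zbf_a - \sigma\noise\Ebf\right)
\]
has exactly that conditional law. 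This is checked directly: the expression is Gaussian, its mean is as claimed, and its covariance follows from a computation analogous to, but simpler than, Steps 2--4 in the proof of Proposition~\ref{simu_res}, using $\cov(\Abf_n\Zbf_a+\sigma\noise\Ebf)=\sigma^2\Kbis$. It therefore suffices to prove two algebraic identities.

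\textbf{(a)} The scalar shift satisfies $\frac{(\Mbf\phibf)^\top\mpostalp(\ybf)}{(\Mbf\phibf)^\top\mpostalp(\Abf_n\phibf)} = \amnoise$. Then $\tilde{\Zbf}_0 \sim \mathcal{N}(\amnoise\phibf,\sigma^2\sigbf)$.

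\textbf{(b)} For every $\vbf\in\mathbb{R}^n$, the linear map $\Pcal\eqdef \Ibf_m - \phibf(\Mbf\phibf)^\top+\hh\left[\mpostalp(\Abf_n\phibf)\right](\Mbf\phibf)^\top$ satisfies
\[
\Pcal\,\mpostalp(\vbf) = \sigbf\Abf_n^\top(\Kbis)^{-1}\vbf .
\]
Applying (b) with $\vbf = \ybf - \Abf_n\tilde{\Zbf}_0 - \sigma\noise\Ebf$ and using (a) gives the claim.

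For both identities I will write $\mpostalp(\vbf) = \sigalp\Abf_n^\top \mathbf{C}_\alpha^{-1}\vbf$, where $\sigalp = \sigma^2\sigbf+\alpha\phibf\phibf^\top$ and
\[
\mathbf{C}_\alpha = \Abf_n\sigalp\Abf_n^\top + \sigma^2\noise^2\Ibf_n = \sigma^2\Kbis + \alpha\Abf_n\phibf\phibf^\top\Abf_n^\top .
\]
Two facts already established in the paper are key: $\sigbf\Mbf\phibf=\veczerm$ (Appendix~\ref{inversibility_kbis}) and $\phibf^\top\Mbf\phibf=1$ (Appendix~\ref{formula_phibf}). Together they give $(\Mbf\phibf)^\top\sigalp = \alpha\phibf^\top$ and $(\Ibf_m-\phibf(\Mbf\phibf)^\top)\sigalp = \sigma^2\sigbf$.

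Sherman--Morrison applied to $\mathbf{C}_\alpha$ gives
\[
\mathbf{C}_\alpha^{-1} = \sigma^{-2}(\Kbis)^{-1} - \frac{\sigma^{-4}\alpha\,(\Kbis)^{-1}\Abf_n\phibf\phibf^\top\Abf_n^\top(\Kbis)^{-1}}{1+\sigma^{-2}\alpha\beta},
\qquad \beta = \phibf^\top\Abf_n^\top(\Kbis)^{-1}\Abf_n\phibf .
\]
In particular $\mathbf{C}_\alpha^{-1}\Abf_n\phibf = \frac{\sigma^{-2}}{1+\sigma^{-2}\alpha\beta}(\Kbis)^{-1}\Abf_n\phibf$ and $\phibf^\top\Abf_n^\top\mathbf{C}_\alpha^{-1}\vbf = \frac{\sigma^{-2}}{1+\sigma^{-2}\alpha\beta}\phibf^\top\Abf_n^\top(\Kbis)^{-1}\vbf$.

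For (a), the numerator is $(\Mbf\phibf)^\top\mpostalp(\ybf)=\alpha\phibf^\top\Abf_n^\top\mathbf{C}_\alpha^{-1}\ybf$, and the denominator is the same expression with $\ybf$ replaced by $\Abf_n\phibf$. The common factor $\frac{\alpha\sigma^{-2}}{1+\sigma^{-2}\alpha\beta}$ cancels, leaving $\beta^{-1}\phibf^\top\Abf_n^\top(\Kbis)^{-1}\ybf=\amnoise$.

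For (b), the projection part gives
\[
(\Ibf_m-\phibf(\Mbf\phibf)^\top)\mpostalp(\vbf)=\sigma^2\sigbf\Abf_n^\top\mathbf{C}_\alpha^{-1}\vbf
= \sigbf\Abf_n^\top(\Kbis)^{-1}\vbf - \frac{\sigma^{-2}\alpha}{1+\sigma^{-2}\alpha\beta}\,\sigbf\Abf_n^\top(\Kbis)^{-1}\Abf_n\phibf\;\phibf^\top\Abf_n^\top(\Kbis)^{-1}\vbf .
\]
It remains to show that $\hh[\mpostalp(\Abf_n\phibf)](\Mbf\phibf)^\top\mpostalp(\vbf)$ exactly cancels this error term. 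With $\mbf=\mpostalp(\Abf_n\phibf)$:
\begin{itemize}
\item the numerator of $\hh[\mbf]$ is $\sigma^2\sigbf\Abf_n^\top\mathbf{C}_\alpha^{-1}\Abf_n\phibf = \frac{1}{1+\sigma^{-2}\alpha\beta}\sigbf\Abf_n^\top(\Kbis)^{-1}\Abf_n\phibf$;
\item its denominator is $1-(\Mbf\phibf)^\top\mbf = 1-\frac{\sigma^{-2}\alpha\beta}{1+\sigma^{-2}\alpha\beta}=\frac{1}{1+\sigma^{-2}\alpha\beta}$;
\end{itemize}
hence $\hh[\mbf]=\sigbf\Abf_n^\top(\Kbis)^{-1}\Abf_n\phibf$. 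Multiplying by $(\Mbf\phibf)^\top\mpostalp(\vbf)=\frac{\sigma^{-2}\alpha}{1+\sigma^{-2}\alpha\beta}\phibf^\top\Abf_n^\top(\Kbis)^{-1}\vbf$ reproduces the error term exactly, with the opposite sign, which proves (b).

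The main obstacle I expect is this scalar bookkeeping in (b), i.e. keeping track of the $\sigma^2$ scalings consistently, since $\qalp$ carries $\sigma^{-2}$ while $\Kbis$ does not. I would first verify it in the interpolation case $\noise=0$, where invertibility of $\Kbis$ is guaranteed by Appendix~\ref{inversibility_kbis}. Finally, I would note that independence of $\Zbf_0$ and $\Ebf$ is needed for the covariance computation in the residual identity.
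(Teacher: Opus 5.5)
Your proposal is correct and follows the same route as the paper. Both proofs reduce $\Zsk$ to the simple-kriging residual form $\tilde{\Zbf}_0 + \sigbf\Abf_n^\top(\Kbis)^{-1}(\ybf - \Abf_n\tilde{\Zbf}_0 - \sigma\noise\Ebf)$ with $\tilde{\Zbf}_0 = \Zbf_0 + \amnoise\phibf$, and then check its mean and covariance. The only difference is in the supporting identities. You derive $(\Mbf\phibf)^\top\mpostalp(\cdot)$ and $\hh[\mpostalp(\Abf_n\phibf)] = \sigbf\Abf_n^\top(\Kbis)^{-1}\Abf_n\phibf$ yourself via Sherman--Morrison and prove (b) directly. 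The paper instead cites these identities from earlier work and routes the reduction through Proposition~\ref{compute_mean}.
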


\begin{proof}

Let us denote

\begin{equation*}
\left\{
\begin{aligned} 
\msk(\ybf) &= \esp\left(\Zbf \mid \Yobs^{(m)} = \ybf, A = \amnoise\right) = \mpost(\ybf)\\
\covsk &= \cov\left(\Zbf \mid \Yobs^{(m)} = \ybf, A = \amnoise\right) = \sigma^2\left[\sigbf - \sigbf \Abf_n^\top \left(\Kbis \right)^{-1}\Abf_n\sigbf\right].
\end{aligned}
\right.
\end{equation*}

As detailed in~\cite{sire2025spline},
\begin{equation*}
\left\{
\begin{aligned}
(\Mbf\phibf)^\top \mpostalp(\xbf) &= \frac{1}{\frac{1}{\alpha}+ \sigma^{-2}(\Abf_n\phibf)^\top\left(\Kbis\right)^{-1}\Abf_n\phibf}\sigma^{-2}(\Abf\phibf)^\top\left(\Kbis\right)^{-1}\xbf,~~\forall x \in \mathbb{R}^{m}\\
h\left[\mpostalp(\Abf_n\phibf)\right] &= \sigbf\Abf_n^\top\left(\Kbis\right)^{-1}\Abf_n\phibf.
\end{aligned}
\right.
\end{equation*}

Then, \begin{equation}\label{eq1proof}
\tilde{\Zbf}_0 = \Zbf_0 + \amnoise\phi_0.
\end{equation}

From Proposition~\ref{compute_mean}, 
\begin{align*}
\Zsk &= \tilde{\Zbf}_0 + \mpost\left(\ybf - \Abf_n\tilde{\Zbf}_0 - \sigma\noise\Ebf\right)+ \frac{\hh\!\left[\mpostalp\!\left(\Abf_n \phibf\right)\right]-\phibf}{\left(\Mbf\phibf\right)^\top \mpostalp\!\left(\Abf_n \phibf\right)}(\Mbf\phibf)^\top\mpostalp\left(\ybf - \Abf_n\tilde{\Zbf}_0 - \sigma\noise\Ebf\right)\\ 
= &\tilde{\Zbf}_0 + \mpost\left(\ybf - \Abf_n\tilde{\Zbf}_0 - \sigma\noise\Ebf\right) +  \left(\hh\left[\mpostalp\!\left(\Abf_n \phibf\right)\right]-\phibf\right)\frac{(\Abf\phibf)^\top\left(\Kbis\right)^{-1}\left(\ybf - \Abf_n\tilde{\Zbf}_0 - \sigma\noise\Ebf\right)}{(\Abf_n\phibf)^\top\left(\Kbis\right)^{-1}\Abf_n\phibf}\\
=&\tilde{\Zbf}_0 + \mpost\left(\ybf - \Abf_n\tilde{\Zbf}_0 - \sigma\noise\Ebf\right) - \left(\phibf-\sigbf\Abf_n^\top\left(\Kbis\right)^{-1}\Abf_n\phibf\right)\frac{(\Abf\phibf)^\top\left(\Kbis\right)^{-1}\left(\ybf - \Abf_n\tilde{\Zbf}_0 - \sigma\noise\Ebf\right)}{(\Abf_n\phibf)^\top\left(\Kbis\right)^{-1}\Abf_n\phibf}\\
=&\tilde{\Zbf}_0 + \sigbf\Abf_n^\top\left(\Kbis\right)^{-1}\left(\ybf - \Abf_n\tilde{\Zbf}_0 - \sigma\noise\Ebf\right)
\end{align*}

It comes 
$$\esp(\Zsk) = \msk(\ybf)$$ and
\begin{align*}
\cov(\Zsk) &= \cov\left(\tilde{\Zbf}_0\right)- 2\cov\left(\tilde{\Zbf}_0, \Abf_n\tilde{\Zbf}_0 +\sigma\noise\Ebf\right)\left(\Kbis\right)^{-1}\Abf_n^\top\sigbf \\ &+\sigbf\Abf_n^\top\left(\Kbis\right)^{-1}\cov\left(\Abf_n\tilde{\Zbf}_0+\sigma\noise\Ebf\right)\left(\Kbis\right)^{-1}\Abf_n^\top\sigbf\\
&=\sigma^2\sigbf - 2\sigma^2\sigbf\Abf_n^\top \left(\Kbis\right)^{-1}\Abf_n^\top\sigbf + \sigma^2 \sigbf\Abf_n^\top\left(\Kbis\right)^{-1}\Abf_n^\top\sigbf\\
&= \covsk
\end{align*}
\end{proof}

Using this result, Algorithm~\ref{algo_sk} details the procedure to generate simulations of $\left(\Zbf \mid \Yobs^{(m)} = \ybf, A = \amnoise\right).$

\begin{algorithm}[H]
\caption{Generation of simple kriging simulations}\label{algo_sk}
\begin{algorithmic}[1]
\Require $\Mbf, \Fbf,\alpha, \noise,\sigma,\ybf$, a vector $\gaussbfm$ of $m$ independent standard Gaussian components
\State Generate a simulation $\zbf_0$, centered with covariance matrix $\sigma^2\sigbf$, using Algorithm~\ref{algo_simuprior}
\If{$\noise = 0$} 
\State \textbf{Require :} $I$
\State Compute $\mpostalp(\ybf)$, $\mpostalp(\left[\zbf_0\right]_{I})$ and $\mpostalp\left(\Abf_n\phibf\right)$ using Algorithm~\ref{algo_malp1}
\State Compute $\zsk$ Proposition~\ref{prop_sk} 
\Else 
\State \textbf{Require :} $\Abf_n$ and $\gaussbfn$ a vector of $n$ independent standard Gaussian components
\State Compute $\mpostalp(\ybf)$, $\mpostalp(\Abf_n\zbf_0+\sigma\noise\gaussbfn)$ and $\mpostalp\left(\Abf_n\phibf\right)$ using Algorithm~\ref{algo_malp2}
\State Compute $\zsk$ Proposition~\ref{prop_sk}  
\EndIf
\State \Return $\zsk$
\end{algorithmic}
\end{algorithm}

\section{Spherical harmonics on the Earth}\label{app_harmonics}

Here we show that the results obtained using the kernel $K_1$ on the sphere, based on the well-known spherical harmonics, are really close to what we obtain with our finite-element approximation. Figure~\ref{pred_harmo} provides the SST prediction obtained at the triangulation nodes on the Earth, while Figure~\ref{uncert_harmo} shows the uncertainty associated with the prediction along the longitude closest to $170^\circ$.

\begin{figure}[h]
\centering
\includegraphics[width=\textwidth]{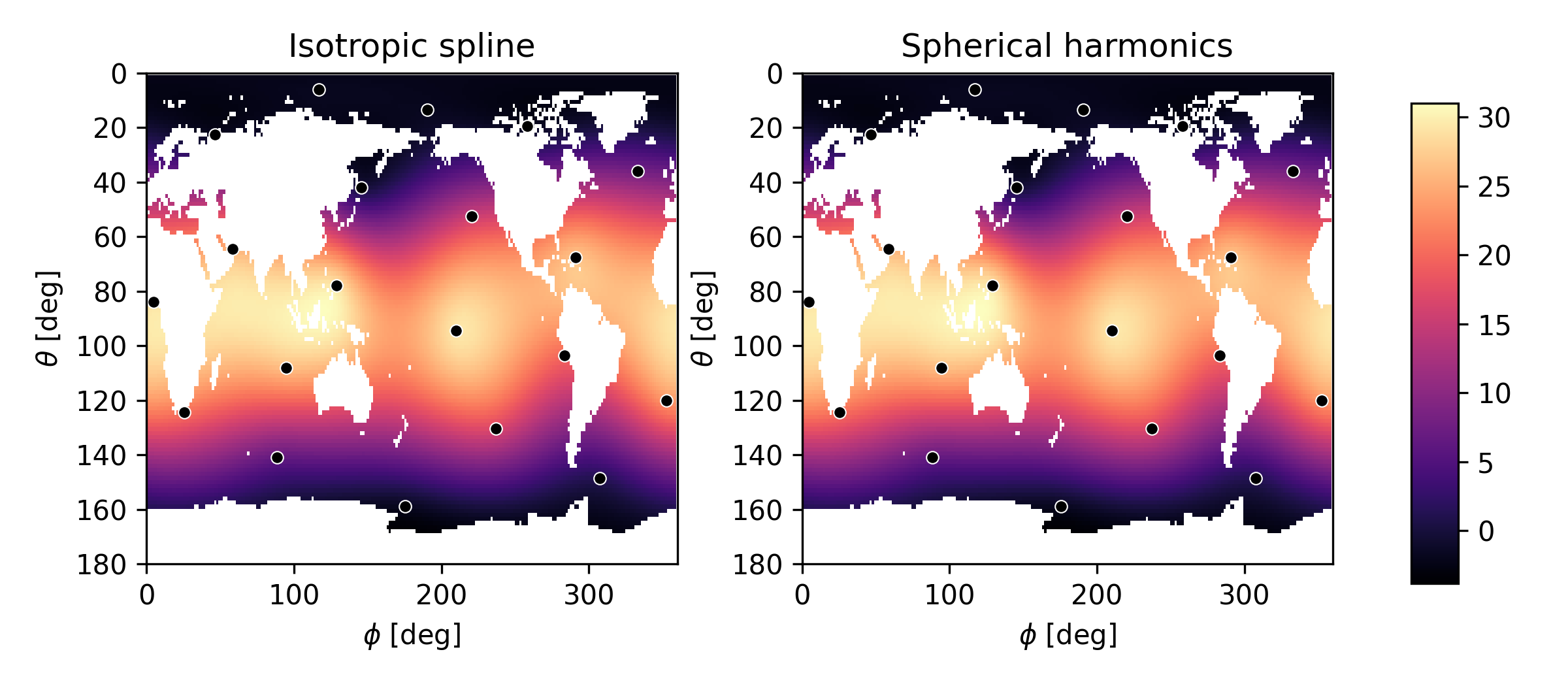}
\caption{Predictions of the SST ($^\circ C$) on Earth, illustrated in 2D using the spherical coordinates $(\theta,\phi).$ $n=20$ observation points are shown as black dots. 
Left: Finite-element approximation of the isotropic spline. Right : Spline using spherical harmonics. }
\label{pred_harmo}
\end{figure} 

\begin{figure}
    \centering
    
    \begin{subfigure}[t]{0.48\textwidth}
        \centering
        \includegraphics[width=\linewidth]{images/simus_iso.png}
\caption{Predictive distribution with the finite-element isotropic spline.}
    \end{subfigure}
    \hfill
    \begin{subfigure}[t]{0.48\textwidth}
        \centering
        \includegraphics[width=\linewidth]{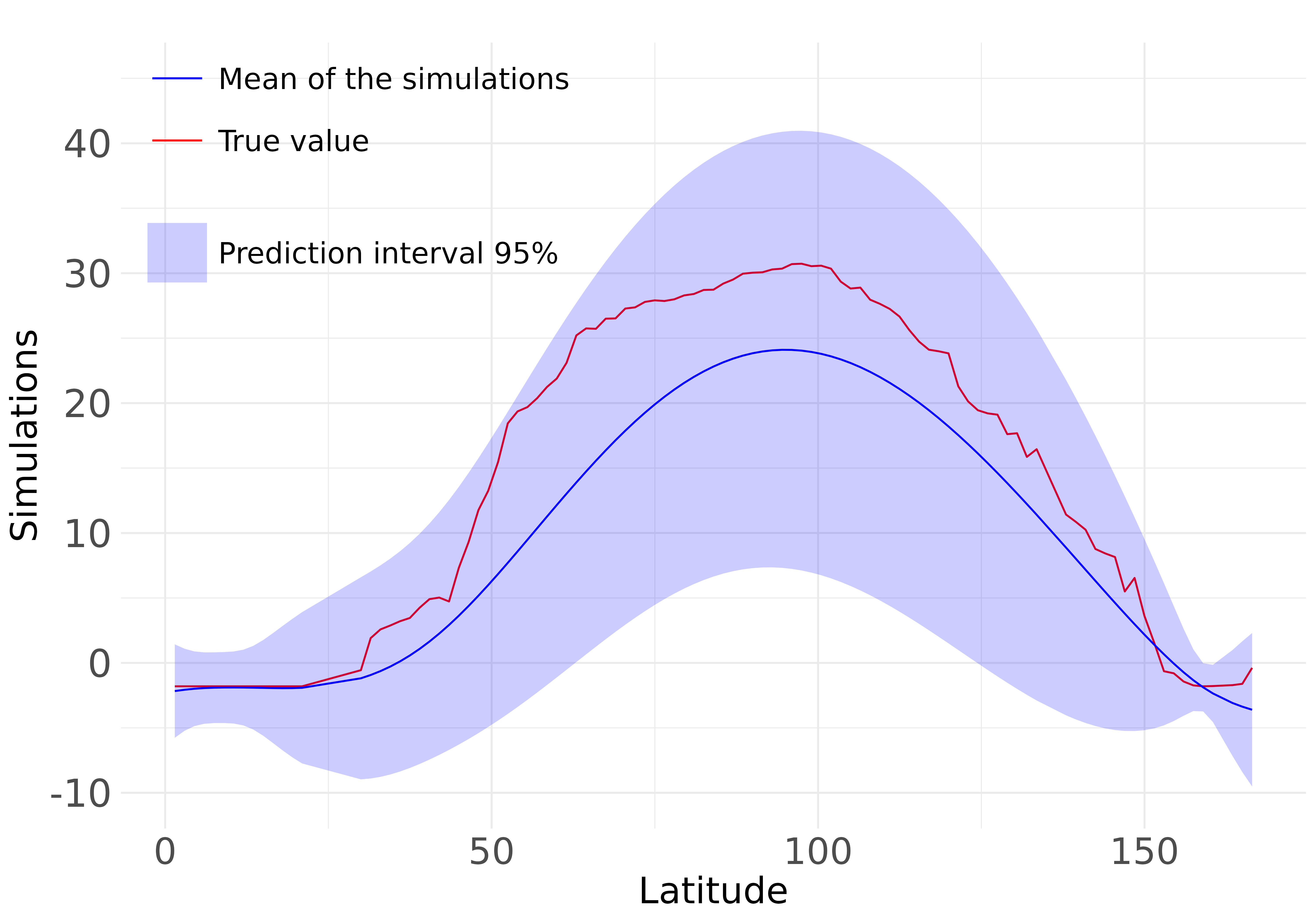}
        \caption{Predictive distribution with the kernel based on the spherical harmonics.}
    \end{subfigure}
    
    \caption{Comparison between the uncertainty quantification in SST prediction on the Earth using the finite-element approximation or the kernel based on the spherical harmonics, along the longitude closest to $170^\circ$.}
    \label{uncert_harmo}
\end{figure}



\end{document}